\setlist{nolistsep}
\newcommand{\mpara}[1]{\medskip\noindent{\bf{#1}}}
\newcommand{\dstname}[1]{\texttt{#1}}
\newcommand{\dstTZapX}{\dstname{3zap}}
\newcommand{\dstBugzX}{\dstname{bugzilla}}
\newcommand{\dstTZap}[1]{\dstname{3zap-{#1}}}
\newcommand{\dstBugz}[1]{\dstname{bugzilla-{#1}}}
\newcommand{\dstSacha}{\dstname{sacha}}
\newcommand{\dstSachaAR}[1]{\dstname{sacha-{#1}}}
\newcommand{\dstSachaG}[1]{\dstname{sacha-{\iAbs}-G{#1}}}
\newcommand{\dstSamba}{\dstname{samba}}
\newcommand{\dstUbi}{\dstname{ubiqLog}}
\newcommand{\dstUbiAR}[1]{\dstname{ubiqLog-{#1}}}
\newcommand{\dstUbiSAbs}[1]{\dstname{#1}}
\newcommand{\dstUbiSRel}[1]{\dstname{#1}}
\newcommand{\iRel}{rel}
\newcommand{\iAbs}{abs}
\newcommand{\algname}[1]{\textsc{#1}}
\newcommand{\algCoCo}{\algname{CoCo}}
\newcommand{\algCycles}{\algname{ExtractCycles}}
\newcommand{\algCyclesDP}{\algname{ExtractCyclesDP}}
\newcommand{\algCyclesFold}{\algname{ExtractCyclesTri}}
\newcommand{\algCombineV}{\algname{CombineVertically}}
\newcommand{\algCombineH}{\algname{CombineHorizontally}}
\newcommand{\algFilterTopKCov}{\algname{FilterCandidates}}
\newcommand{\algFilterFinal}{\algname{GreedyCover}}
\newcommand{\algGrowV}{\algname{GrowVertically}}
\newcommand{\algGrowH}{\algname{GrowHorizontally}}
\newcommand{\abs}[1]{\left\lvert#1\right\rvert}
\DeclareMathOperator{\omed}{med}
\newcommand{\optspan}{\Delta}
\newcommand{\ABC}{\Omega}
\newcommand{\len}[1]{\abs{#1}}
\newcommand{\tspan}[1]{\optspan(#1)}
\newcommand{\tSstart}{t_{\text{start}}}
\newcommand{\tSend}{t_{\text{end}}}
\newcommand{\ccycle}{\ensuremath{\mathcal{C}}}
\newcommand{\seq}[1][]{%
   {\ifthenelse{\isempty{#1}}%
     {\ensuremath{S}}
     {\ensuremath{S^{(#1)}}}
   }
}
\newcommand{\seqex}[2][]{%
   {\ifthenelse{\isempty{#1}}%
     {\ensuremath{S_#2}}
     {\ensuremath{S_#2^{(#1)}}}
   }
}
\newcommand{\cycle}{\ensuremath{C}}
\newcommand{\patt}{\ensuremath{P}}
\newcommand{\Ptree}{\ensuremath{T}}
\newcommand{\Pblock}{\ensuremath{B}}
\newcommand{\inDP}{\ensuremath{D}}
\newcommand{\Cev}{\ensuremath{\alpha}}
\newcommand{\Clen}{\ensuremath{r}}
\newcommand{\Cprd}{\ensuremath{p}}
\newcommand{\Cto}{\ensuremath{\tau}}
\newcommand{\Csc}{\ensuremath{E}}
\newcommand{\Ces}{\ensuremath{e}}
\newcommand{\cov}[1]{\mathit{cover}(#1)}
\newcommand{\residual}[1]{\mathit{residual}(#1)}
\newcommand{\nullC}{\textbf{0}}
\newcommand{\emptyC}{\llempty}
\newcommand{\lls}{\langle}
\newcommand{\lle}{\rangle}
\newcommand{\LL}[1]{\lls#1\lle}
\newcommand{\llempty}{\lls\lle}
\newcommand{\opconcat}{\oplus}
\newcommand{\fnconcat}{\bigoplus}
\newcommand{\sumel}[1]{\sigma(#1)}
\newcommand{\opchild}{\Gamma}
\newcommand{\opshift}{\mathit{shift}}
\newcommand{\opoccs}{\mathit{occs}}
\newcommand{\opmapOids}{\Theta}
\newcommand{\child}[1]{\opchild(#1)}
\newcommand{\shift}[2]{\opshift(#1, #2)}
\newcommand{\occs}[1]{\opoccs(#1)}
\newcommand{\occsStar}[1]{\opoccs^{*}(#1)}
\newcommand{\mapOids}[1]{\opmapOids(#1)}
\newcommand{\cume}[1]{\epsilon(#1)}
\newcommand{\Lchild}[1]{\gamma_\mathbf{L}(#1)}
\newcommand{\optspanRep}{\delta}
\newcommand{\tspanRep}[1]{\optspanRep(#1)}
\newcommand{\maxtspanStar}[1]{\optspan^{*}_{\max}(#1)}
\newcommand{\maxtspanRepStar}[1]{\optspanRep^{*}_{\max}(#1)}
\newcommand{\tspanStar}[1]{\optspan^{*}(#1)}
\newcommand{\tspanRepStar}[1]{\optspanRep^{*}(#1)}
\newcommand{\interd}[1]{d_{#1}}
\newcommand{\Bid}{X}
\newcommand{\evtseqfun}[1]{\zeta(#1)}
\newcommand{\evtseq}{A}
\newcommand{\lensfun}[1]{\rho(#1)}
\newcommand{\lens}{R}
\newcommand{\cl}{\mathit{L}}
\newcommand{\fr}{\mathit{fr}}
\newcommand{\unic}{l_{\mathbb{N}}}
\newcommand{\univ}[1]{l_{#1}}
\newcommand{\prcCl}{\%\cl}
\newcommand{\collF}{\ccycle_{F}}
\newcommand{\collV}{\ccycle_{V}}
\newcommand{\collH}{\ccycle_{H}}
\newcommand{\collVH}{\ccycle_{V\!+H}}
\newcommand{\collS}{\ccycle_{S}}
\newcommand{\resSet}{\mathcal{R}}
\newcommand{\ratioClR}{\cl\!:\!\resSet}
\newcommand{\clEmpty}{\cl(\emptyset, \seq)}
\newcommand{\clCC}{\cl(\ccycle, \seq)}
\newcommand{\nbTD}{m}
\newcommand{\nbV}{v}
\newcommand{\nbH}{h}
\newcommand{\nbS}{s}
\newcommand{\nbOTC}{c_{>3}}
\newcommand{\nbOmed}{c^{\text{M}}}
\newcommand{\nbOmax}{c^{+}}
\newcommand{\sepDG}[1]{\textbf{#1}}
\newcommand{\signDG}[1]{#1}
\newcommand{\valDG}[1]{\textit{#1}}
\newcommand{\BinfoRP}[2]{\textcolor{darkgray}{\{\Clen\!=#1, \Cprd\!=#2\}}}
\newcommand{\BinfoD}[1]{\textcolor{darkgray}{\,\text{--}\,#1\,\text{--}\,}}
\newcommand{\BinfoRPT}[2]{\textcolor{darkgray}{\{$\Clen\!=#1$, $\Cprd\!=#2$\}}}
\newcommand{\BinfoDT}[1]{\textcolor{darkgray}{\,--\,$#1$\,--\,}}
\newcommand{\Bstart}{\textcolor{darkgray}{\big(}}
\newcommand{\Bend}{\textcolor{darkgray}{\big)}}
\newcommand{\activity}[1]{#1}
\newcommand{\activityStart}[1]{[#1}
\newcommand{\activityEnd}[1]{#1]}
\newcommand{\activityIns}[1]{#1}
\tikzset{
node distance=.8cm,
ghost node/.style={text width=0cm, inner sep=0pt}, 
root node/.style={circle, text width=0cm, inner sep=1.5pt, draw, align=center}, 
main node/.style={circle, text width=0cm, inner sep=1.5pt, draw, fill=black, align=center}, 
leaf node/.style={rectangle, text width=0cm, inner sep=1.5pt, draw, black, fill=black, align=center}, 
label node/.style={text width=0.5cm, inner sep=2pt, align=center, node distance=.5cm, xshift=.3cm},
lterm node/.style={text width=0.5cm, inner sep=2pt, align=center, node distance=.3cm},
prop node/.style={text width=0.5cm, inner sep=2pt, align=center, node distance=.3cm, xshift=.2cm , color=darkgray},
pterm node/.style={text width=0.5cm, inner sep=2pt, align=center, node distance=.3cm, xshift=.2cm, color=darkgray},
cprop node/.style={text width=0.5cm, inner sep=2pt, align=center, node distance=.3cm, color=darkgray},
cpterm node/.style={text width=0.5cm, inner sep=2pt, align=center, node distance=.3cm, color=darkgray},
rep edge/.style={black, very thick},
lghost node/.style={text width=0cm, inner sep=0pt, color=white, opacity=0}, 
lmain node/.style={inner sep=1pt, align=center, node distance=.35cm, fill=white, anchor=center, font=\small},
li node/.style={inner sep=0pt, anchor=west, xshift=.08cm, yshift=.15cm, node distance=0cm, fill=white, font=\small},
le node/.style={text width=1.9cm, inner sep=1pt, align=center, node distance=0cm, rotate=60, anchor=east, xshift=-1.2cm, font=\small},
lleaf node/.style={text width=1.2cm, inner sep=1pt, align=center, node distance=0cm, rotate=60, anchor=east, xshift=-.8cm, font=\small},
ptstart node/.style={inner sep=2pt, node distance=.12cm, align=left, anchor=west, font=\small\bf},
ptend node/.style={inner sep=2pt, node distance=.12cm, align=left, anchor=west, font=\small\bf},
dst node/.style={inner sep=1pt, align=left, text=darkgray, node distance=.05cm, anchor=west, font=\small},
info node/.style={inner sep=1pt, align=left, text=darkgray, anchor=west, font=\small},
info rec/.style={very thin, lightgray, fill=vlgray, opacity=0.25, rounded corners}
}
\newcommand{\BlockMark}[1]{$\Pblock_{#1}$}
\newcommand{\OccMark}[1]{#1}
\newcommand{\TreeOneEvtOneCyclVar}[4]{%
%%%% single event, single cycle
  \node[main node] (#1b0) #2 {};
  \node[leaf node] (#1b1) [below of=#1b0] {};
  \node[label node] (#1l0) [above of=#1b0, xshift=-0.2cm] {\BlockMark{0}};
  \node[lterm node] (#1l1) [below of=#1b1] {\BlockMark{1}};
  \node[prop node] (#1p0) [below of=#1l0] {#3};
  \node[pterm node] (#1p1) [below of=#1l1] {#4};
  \path
    (#1b0) edge (#1b1);
}
\newcommand{\TreeOneEvtTwoCyclVar}[5]{%
%%%% single event, two nested cycles 
%%% variable r and p for the cycles
  \node[main node] (#1b0) #2 {};
  \node[main node] (#1b1) [below of=#1b0] {};
  \node[leaf node] (#1b11) [below of=#1b1] {};
  \node[label node] (#1l0) [above of=#1b0, xshift=-0.2cm] {\BlockMark{0}};
  \node[label node] (#1l1) [above of=#1b1] {\BlockMark{1}};
  \node[lterm node] (#1l11) [below of=#1b11] {\BlockMark{11}};
  \node[prop node] (#1p0) [below of=#1l0] {#3};
  \node[prop node] (#1p1) [below of=#1l1] {#4};
  \node[pterm node] (#1p11) [below of=#1l11] {#5};
  \path
    (#1b0) edge (#1b1)
    (#1b1) edge (#1b11);
}
\newcommand{\TreeThreeEvtOneCyclVar}[8]{%
%%%% three events, single cycle
  \node[main node] (#1b0) #2 {};
  \node[leaf node] (#1b2) [below of=#1b0] {};
  \node[leaf node] (#1b1) [left of=#1b2, node distance=1.2cm] {};
  \node[leaf node] (#1b3) [right of=#1b2, node distance=1.2cm] {};

  \node[label node] (#1l0) [above of=#1b0, xshift=-0.2cm] {\BlockMark{0}};
  \node[lterm node] (#1l1) [below of=#1b1] {\BlockMark{1}};
  \node[lterm node] (#1l2) [below of=#1b2] {\BlockMark{2}};
  \node[lterm node] (#1l3) [below of=#1b3] {\BlockMark{3}};

  \node[prop node] (#1p0) [below of=#1l0] {#3};
  \node[pterm node] (#1p1) [below of=#1l1] {#4};
  \node[pterm node] (#1p2) [below of=#1l2] {#6};
  \node[pterm node] (#1p3) [below of=#1l3] {#8};
  \path
    (#1b0) edge (#1b1)
    (#1b0) edge (#1b2)
    (#1b0) edge (#1b3);

\path[dotted, thin, ->, bend right=20, color=darkgray]
    (#1b1) edge node[above] {#5} (#1b2)
    (#1b2) edge node[above] {#7} (#1b3);
}
\newcommand{\TreeThreeEvtTwoCyclVar}[9]{%
%%%% three events, two cycles
  \node[main node] (#1b0) #2 {};
  \node[main node] (#1b2) [below of=#1b0] {};
  \node[leaf node] (#1b1) [left of=#1b2, node distance=1.2cm] {};
  \node[leaf node] (#1b3) [right of=#1b2, node distance=1.2cm] {};
  \node[leaf node] (#1b21) [below of=#1b2] {};

  \node[label node] (#1l0) [above of=#1b0, xshift=-0.2cm] {\BlockMark{0}};
  \node[lterm node] (#1l2) [below of=#1b2, xshift=.3cm] {\BlockMark{2}};
  \node[lterm node] (#1l1) [below of=#1b1] {\BlockMark{1}};
  \node[lterm node] (#1l3) [below of=#1b3] {\BlockMark{3}};
  \node[lterm node] (#1l21) [below of=#1b21] {\BlockMark{21}};

  \node[prop node] (#1p0) [below of=#1l0] {#3};
  \node[prop node] (#1p2) [below of=#1l2] {#9};
  \node[pterm node] (#1p1) [below of=#1l1] {#4};
  \node[pterm node] (#1p3) [below of=#1l3] {#8};
  \node[pterm node] (#1p21) [below of=#1l21] {#6};
  \path
    (#1b0) edge (#1b1)
    (#1b0) edge (#1b2)
    (#1b0) edge (#1b3)
    (#1b2) edge (#1b21);

\path[dotted, thin, ->, bend right=20, color=darkgray]
    (#1b1) edge node[above] {#5} (#1b2)
    (#1b2) edge node[above] {#7} (#1b3);
}
\newcommand{\TreeThreeEvtThreeCyclVar}[9]{%
%%%% three events, two cycles
  \node[main node] (#1b0) #2 {};
  \node[main node] (#1b-1) [above of=#1b0] {};
  \node[main node] (#1b2) [below of=#1b0] {};
  \node[leaf node] (#1b1) [left of=#1b2, node distance=1.2cm] {};
  \node[leaf node] (#1b3) [right of=#1b2, node distance=1.2cm] {};
  \node[leaf node] (#1b21) [below of=#1b2] {};

  \node[label node] (#1l-1) [above of=#1b-1, xshift=-0.2cm] {\BlockMark{0}};
  \node[label node] (#1l0) [above of=#1b0] {\BlockMark{1}};
  \node[lterm node] (#1l2) [below of=#1b2, xshift=.3cm] {\BlockMark{12}};
  \node[lterm node] (#1l1) [below of=#1b1] {\BlockMark{11}};
  \node[lterm node] (#1l3) [below of=#1b3] {\BlockMark{13}};
  \node[lterm node] (#1l21) [below of=#1b21] {\BlockMark{121}};

  \node[prop node] (#1p-1) [below of=#1l-1] {$2,33$};
  \node[prop node] (#1p0) [below of=#1l0] {#3};
  \node[prop node] (#1p2) [below of=#1l2] {#9};
  \node[pterm node] (#1p1) [below of=#1l1] {#4};
  \node[pterm node] (#1p3) [below of=#1l3] {#8};
  \node[pterm node] (#1p21) [below of=#1l21] {#6};
  \path
    (#1b-1) edge (#1b0)
    (#1b0) edge (#1b1)
    (#1b0) edge (#1b2)
    (#1b0) edge (#1b3)
    (#1b2) edge (#1b21);

\path[dotted, thin, ->, bend right=20, color=darkgray]
    (#1b1) edge node[above] {#5} (#1b2)
    (#1b2) edge node[above] {#7} (#1b3);
}
\newcommand{\rootf}{.5}
\newcommand{\LowestBranchA}[4]{%
  \node[main node] (#1) #2 {};
  \node[leaf node] (#1b1) [below of=#1] {};
  % \node[label node] (#1l0) [above of=#1, xshift=-0.2cm] {\BlockMark{0}};
  % \node[lterm node] (#1l1) [below of=#1b1] {\BlockMark{1}};
  % \node[prop node] (#1p0) [below of=#1l0] {#3};
  % \node[pterm node] (#1p1) [below of=#1l1] {a};
  %%%%%
  \node[cprop node] (#1l0) [above of=#1, xshift=-0.2cm] {#3};
  \node[cpterm node] (#1l1) [below of=#1b1] {a};
  \path
    (#1) edge (#1b1);
}
\newcommand{\LowestBranchB}[4]{%
  \node[main node] (#1) #2 {};
  \node[leaf node] (#1b2) [below of=#1] {};
  \node[leaf node] (#1b1) [left of=#1b2, node distance=.6cm] {};
  \node[leaf node] (#1b3) [right of=#1b2, node distance=.6cm] {};

  % \node[label node] (#1l0) [above of=#1, xshift=-0.2cm] {\BlockMark{0}};
  % \node[lterm node] (#1l1) [below of=#1b1] {\BlockMark{1}};
  % \node[lterm node] (#1l2) [below of=#1b2] {\BlockMark{2}};
  % \node[lterm node] (#1l3) [below of=#1b3] {\BlockMark{3}};

  % \node[prop node] (#1p0) [below of=#1l0] {#3};
  % \node[pterm node] (#1p1) [below of=#1l1] {a};
  % \node[pterm node] (#1p2) [below of=#1l2] {b};
  % \node[pterm node] (#1p3) [below of=#1l3] {c};
  % %%%
  \node[cprop node] (#1l0) [above of=#1, xshift=-0.2cm] {#3};
  \node[cpterm node] (#1l1) [below of=#1b1] {b};
  \node[cpterm node] (#1l2) [below of=#1b2] {a};
  \node[cpterm node] (#1l3) [below of=#1b3] {c};

  \path
    (#1) edge (#1b1)
    (#1) edge (#1b2)
    (#1) edge (#1b3);
}
\newcommand{\LowestBranchC}[4]{%
  \node[main node] (#1) #2 {};
  \node[main node] (XYZ) [below of=#1] {};
  \node[main node] (#1b2) [below of=#1] {};
  \node[leaf node] (#1b1) [left of=#1b2, node distance=.6cm] {};
  \node[leaf node] (#1b3) [right of=#1b2, node distance=.6cm] {};

  % \node[label node] (#1l0) [above of=#1, xshift=-0.2cm] {\BlockMark{0}};
  % \node[lterm node] (#1l1) [below of=#1b1] {\BlockMark{1}};
  % \node[lterm node] (#1l2) [below of=#1b2] {\BlockMark{2}};
  % \node[lterm node] (#1l3) [below of=#1b3] {\BlockMark{3}};

  % \node[prop node] (#1p0) [below of=#1l0] {#3};
  % \node[pterm node] (#1p1) [below of=#1l1] {a};
  % \node[pterm node] (#1p2) [below of=#1l2] {b};
  % \node[pterm node] (#1p3) [below of=#1l3] {c};
  %%%%
  \node[cprop node] (#1l0) [above of=#1, xshift=-0.2cm] {#3};
  \node[cpterm node] (#1l1) [below of=#1b1] {b};
  %\node[cpterm node] (#1l2) [below of=#1b2] {X};
  \node[cpterm node] (#1l3) [below of=#1b3] {c};

  \path
    (#1) edge (#1b1)
    (#1) edge (#1b2)
    (#1) edge (#1b3);

  \foreach \i in {1,...,4}{
    \LowestBranchA{XYZ\i}{ at ([shift={({(\i-\rootf)*.2},{(\i-\rootf)*-.3})}]XYZ)}{#3}{#4}
  }
  \path[rep edge] (XYZ) edge (XYZ4);
}
\newcommand{\LowestBranch}[4]{\LowestBranchA{#1}{#2}{#3}{#4}}
\newcommand{\xslant}{.5}
\newcommand{\yslant}{-.3}
\newcommand{\Xslant}{.5}
\newcommand{\Yslant}{-.3}
\newcommand{\LowestRep}[4]{%
  \path[rep edge] ({#2},{#3}) edge ({#2+(#4-\rootf)*\xslant},{#3+(#4-\rootf)*\yslant});
  \node[main node] (R#1b0) at ({#2},{#3}) {};
  \foreach \i in {1,...,#4}{
    \LowestBranch{R#1l\i}{at ({#2+(\i-\rootf)*\xslant},{#3+(\i-\rootf)*\yslant})}{}{a}
  }
}
\newcommand{\xsctm}{.47}
\newcommand{\timelinePFive}[1]{%
\begin{tikzpicture}[-,auto,xscale=\xsctm]
%%% concat perfect
\newcommand{\xeAa}{0}
\newcommand{\xeAb}{3}
\newcommand{\xeAc}{4}
\newcommand{\xeBa}{13}
\newcommand{\xeBb}{16}
\newcommand{\xeBc}{17}
\newcommand{\xeCa}{26}
\newcommand{\xeCb}{29}
\newcommand{\xeCc}{30}
\draw[white] (-1,-2) rectangle (35,2);
\node[anchor=north east] at (35,2) {#1};

\draw[lightgray] (0,0) -- (34,0);
\foreach \i in {0,...,34}{
\draw[lightgray] (\i,-.1) -- (\i,.1);
}
% \foreach \i in {0,10,20,30}{
% \node[lightgray, font=\scriptsize] at (\i,-.3) {$\i$};
% }
\foreach \v/\i in {b/\xeAa,a/\xeAb,c/\xeAc, b/\xeBa,a/\xeBb,c/\xeBc, b/\xeCa,a/\xeCb,c/\xeCc}{
\node[anchor=south,text height=2em] at (\i,.1) {$\v$};
\node[font=\scriptsize] at (\i,-.3) {$\i$};
}
\path[->,>=stealth,shorten <= 1pt]
(\xeAa,-1.2) edge[bend right=3] node[below] {$\Cprd_0$} (\xeBa,-1.2)
(\xeBa,-1.2) edge[bend right=3] node[below] {$\Cprd_0$} (\xeCa,-1.2)
(\xeAa,-.5) edge[bend right=6] node[below] {$d_{12}$} (\xeAb,-.5)
(\xeAb,-.5) edge[bend right=6] node[below] {$d_{13}$} (\xeAc,-.5)
(\xeBa,-.5) edge[bend right=6] node[below] {$d_{12}$} (\xeBb,-.5)
(\xeBb,-.5) edge[bend right=6] node[below] {$d_{13}$} (\xeBc,-.5)
(\xeCa,-.5) edge[bend right=6] node[below] {$d_{12}$} (\xeCb,-.5)
(\xeCb,-.5) edge[bend right=6] node[below] {$d_{13}$} (\xeCc,-.5);

\draw [decorate,decoration={brace,amplitude=6pt},xshift=0pt,yshift=0pt]
(\xeAa,1.2) -- (\xeCc,1.2) node [black,midway,yshift=0.15cm] 
{\footnotesize $\tspanStar{\Pblock_0}$};
\draw [decorate,decoration={brace,amplitude=6pt},xshift=0pt,yshift=0pt]
(\xeAa,.6) -- (\xeAc,.6) node [black,midway,yshift=0.15cm] 
{\footnotesize $\tspanRepStar{\Pblock_0}$};
\draw [decorate,decoration={brace,amplitude=6pt},xshift=0pt,yshift=0pt]
(\xeBa,.6) -- (\xeBc,.6) node [black,midway,yshift=0.15cm] 
{\footnotesize $\tspanRepStar{\Pblock_0}$};
\draw [decorate,decoration={brace,amplitude=6pt},xshift=0pt,yshift=0pt]
(\xeCa,.6) -- (\xeCc,.6) node [black,midway,yshift=0.15cm] 
{\footnotesize $\tspanRepStar{\Pblock_0}$};
\end{tikzpicture}
}
\newcommand{\timelineQSixOne}[1]{%
\begin{tikzpicture}[-,auto,xscale=\xsctm]
%%% concat errors
\newcommand{\xeAa}{2}
\newcommand{\xeAb}{5}
\newcommand{\xeAc}{7}
\newcommand{\xeBa}{13}
\newcommand{\xeBb}{18}
\newcommand{\xeBc}{21}
\newcommand{\xeCa}{26}
\newcommand{\xeCb}{30}
\newcommand{\xeCc}{31}
\draw[white] (-1,-2) rectangle (35,2);
\node[anchor=north east] at (35,2) {#1};

\draw[lightgray] (0,0) -- (34,0);
\foreach \i in {0,...,34}{
\draw[lightgray] (\i,-.1) -- (\i,.1);
}
% \foreach \i in {0,10,20,30}{
% \node[lightgray, font=\scriptsize] at (\i,-.3) {$\i$};
% }
\foreach \v/\i in {b/\xeAa,a/\xeAb,c/\xeAc, b/\xeBa,a/\xeBb,c/\xeBc, b/\xeCa,a/\xeCb,c/\xeCc}{
\node[anchor=south,text height=2em] at (\i,.1) {$\v$};
\node[font=\scriptsize] at (\i,-.3) {$\i$};
}
\path[->,>=stealth,shorten <= 1pt]
(0,-1.2) edge[bend right=3] node[below] {$\Cto$} (\xeAa,-1.2)
(\xeAa,-1.2) edge[bend right=3] node[below] {$\Cprd_0+e_3$} (\xeBa,-1.2)
(\xeBa,-1.2) edge[bend right=3] node[below] {$\Cprd_0+e_6$} (\xeCa,-1.2)
(\xeAa,-.5) edge[bend right=6] node[below] {$d_{12}+e_1$} (\xeAb,-.5)
(\xeAb,-.5) edge[bend right=6] node[below] {$d_{13}+e_2$} (\xeAc,-.5)
(\xeBa,-.5) edge[bend right=6] node[below] {$d_{12}+e_4$} (\xeBb,-.5)
(\xeBb,-.5) edge[bend right=6] node[below] {$d_{13}+e_5$} (\xeBc,-.5)
(\xeCa,-.5) edge[bend right=6] node[below] {$d_{12}+e_7$} (\xeCb,-.5)
(\xeCb,-.5) edge[bend right=6] node[below] {$d_{13}+e_8$} (\xeCc,-.5);

\draw [decorate,decoration={brace,amplitude=6pt},xshift=0pt,yshift=0pt]
(\xeAa,1.2) -- (\xeCc,1.2) node [black,midway,yshift=0.15cm] 
{\footnotesize $\tspan{(\Pblock_0, \LL{})}$};
\draw [decorate,decoration={brace,amplitude=6pt},xshift=0pt,yshift=0pt]
(\xeAa,.6) -- (\xeAc,.6) node [black,midway,yshift=0.15cm] 
{\footnotesize $\tspanRep{(\Pblock_0, \LL{1})}$};
\draw [decorate,decoration={brace,amplitude=6pt},xshift=0pt,yshift=0pt]
(\xeBa,.6) -- (\xeBc,.6) node [black,midway,yshift=0.15cm] 
{\footnotesize $\tspanRep{(\Pblock_0, \LL{2})}$};
\draw [decorate,decoration={brace,amplitude=6pt},xshift=0pt,yshift=0pt]
(\xeCa,.6) -- (\xeCc,.6) node [black,midway,yshift=0.15cm] 
{\footnotesize $\tspanRep{(\Pblock_0, \LL{3})}$};
\end{tikzpicture}
}
\newcommand{\timelinePThree}[1]{%
\begin{tikzpicture}[-,auto,xscale=\xsctm]
%%% nested perfect
\newcommand{\xeAa}{0}
\newcommand{\xeAb}{2}
\newcommand{\xeAc}{4}
\newcommand{\xeAd}{6}
\newcommand{\xeBa}{13}
\newcommand{\xeBb}{15}
\newcommand{\xeBc}{17}
\newcommand{\xeBd}{19}
\newcommand{\xeCa}{26}
\newcommand{\xeCb}{28}
\newcommand{\xeCc}{30}
\newcommand{\xeCd}{32}
\draw[white] (-1,-2) rectangle (35,2);
\node[anchor=north east] at (35,2) {#1};

\draw[lightgray] (0,0) -- (34,0);
\foreach \i in {0,...,34}{
\draw[lightgray] (\i,-.1) -- (\i,.1);
}
% \foreach \i in {0,10,20,30,40,50}{
% \node[lightgray, font=\scriptsize] at (\i,-.3) {$\i$};
% }
\foreach \v/\i in {a/\xeAa,a/\xeAb,a/\xeAc,a/\xeAd, a/\xeBa,a/\xeBb,a/\xeBc,a/\xeBd, a/\xeCa,a/\xeCb,a/\xeCc,a/\xeCd}{
\node[anchor=south,text height=2em] at (\i,.1) {$\v$};
\node[font=\scriptsize] at (\i,-.3) {$\i$};
}
\path[->,>=stealth,shorten <= 1pt]
(\xeAa,-1.2) edge[bend right=3] node[below] {$\Cprd_0$} (\xeBa,-1.2)
(\xeBa,-1.2) edge[bend right=3] node[below] {$\Cprd_0$} (\xeCa,-1.2)
(\xeAa,-.5) edge[bend right=6] node[below] {$\Cprd_{1}$} (\xeAb,-.5)
(\xeAb,-.5) edge[bend right=6] node[below] {$\Cprd_{1}$} (\xeAc,-.5)
(\xeAc,-.5) edge[bend right=6] node[below] {$\Cprd_{1}$} (\xeAd,-.5)
(\xeBa,-.5) edge[bend right=6] node[below] {$\Cprd_{1}$} (\xeBb,-.5)
(\xeBb,-.5) edge[bend right=6] node[below] {$\Cprd_{1}$} (\xeBc,-.5)
(\xeBc,-.5) edge[bend right=6] node[below] {$\Cprd_{1}$} (\xeBd,-.5)
(\xeCa,-.5) edge[bend right=6] node[below] {$\Cprd_{1}$} (\xeCb,-.5)
(\xeCb,-.5) edge[bend right=6] node[below] {$\Cprd_{1}$} (\xeCc,-.5)
(\xeCc,-.5) edge[bend right=6] node[below] {$\Cprd_{1}$} (\xeCd,-.5);

\draw [decorate,decoration={brace,amplitude=6pt},xshift=0pt,yshift=0pt]
(\xeAa,1.2) -- (\xeCd,1.2) node [black,midway,yshift=0.15cm] 
{\footnotesize $\tspanStar{\Pblock_0}$};
\draw [decorate,decoration={brace,amplitude=6pt},xshift=0pt,yshift=0pt]
(\xeAa,.6) -- (\xeAd,.6) node [black,midway,yshift=0.15cm] 
{\footnotesize $\tspanRepStar{\Pblock_0}$};
\draw [decorate,decoration={brace,amplitude=6pt},xshift=0pt,yshift=0pt]
(\xeBa,.6) -- (\xeBd,.6) node [black,midway,yshift=0.15cm] 
{\footnotesize $\tspanRepStar{\Pblock_0}$};
\draw [decorate,decoration={brace,amplitude=6pt},xshift=0pt,yshift=0pt]
(\xeCa,.6) -- (\xeCd,.6) node [black,midway,yshift=0.15cm] 
{\footnotesize $\tspanRepStar{\Pblock_0}$};
\end{tikzpicture}
}
\newcommand{\timelineQThreeOne}[1]{%
\begin{tikzpicture}[-,auto,xscale=\xsctm]
%%% nested errors
\newcommand{\xeAa}{2}
\newcommand{\xeAb}{5}
\newcommand{\xeAc}{7}
\newcommand{\xeAd}{8}
\newcommand{\xeBa}{13}
\newcommand{\xeBb}{15}
\newcommand{\xeBc}{20}
\newcommand{\xeBd}{21}
\newcommand{\xeCa}{26}
\newcommand{\xeCb}{29}
\newcommand{\xeCc}{32}
\newcommand{\xeCd}{33}
\draw[white] (-1,-2) rectangle (35,2);
\node[anchor=north east] at (35,2) {#1};

\draw[lightgray] (0,0) -- (34,0);
\foreach \i in {0,...,34}{
\draw[lightgray] (\i,-.1) -- (\i,.1);
}
% \foreach \i in {0,10,20,30,40,50}{
% \node[lightgray, font=\scriptsize] at (\i,-.3) {$\i$};
% }
\foreach \v/\i in {a/\xeAa,a/\xeAb,a/\xeAc,a/\xeAd, a/\xeBa,a/\xeBb,a/\xeBc,a/\xeBd, a/\xeCa,a/\xeCb,a/\xeCc,a/\xeCd}{
\node[anchor=south,text height=2em] at (\i,.1) {$\v$};
\node[font=\scriptsize] at (\i,-.3) {$\i$};
}
\path[->,>=stealth,shorten <= 1pt]
(0,-1.2) edge[bend right=3] node[below] {$\Cto$} (\xeAa,-1.2)
(\xeAa,-1.2) edge[bend right=3] node[below] {$\Cprd_0+e_4$} (\xeBa,-1.2)
(\xeBa,-1.2) edge[bend right=3] node[below] {$\Cprd_0+e_8$} (\xeCa,-1.2)
(\xeAa,-.5) edge[bend right=6] node[below] {$\Cprd_{1}+e_1$} (\xeAb,-.5)
(\xeAb,-.5) edge[bend right=6] node[below] {$\Cprd_{1}+e_2$} (\xeAc,-.5)
(\xeAc,-.5) edge[bend right=6] node[below, xshift=15pt] {$\Cprd_{1}+e_3$} (\xeAd,-.5)
(\xeBa,-.5) edge[bend right=6] node[below] {$\Cprd_{1}+e_5$} (\xeBb,-.5)
(\xeBb,-.5) edge[bend right=6] node[below] {$\Cprd_{1}+e_6$} (\xeBc,-.5)
(\xeBc,-.5) edge[bend right=6] node[below] {$\Cprd_{1}+e_7$} (\xeBd,-.5)
(\xeCa,-.5) edge[bend right=6] node[below] {$\Cprd_{1}+e_9$} (\xeCb,-.5)
(\xeCb,-.5) edge[bend right=6] node[below] {$\Cprd_{1}+e_{10}$} (\xeCc,-.5)
(\xeCc,-.5) edge[bend right=6] node[below, xshift=10pt] {$\Cprd_{1}+e_{11}$} (\xeCd,-.5);

\draw [decorate,decoration={brace,amplitude=6pt},xshift=0pt,yshift=0pt]
(\xeAa,1.2) -- (\xeCd,1.2) node [black,midway,yshift=0.15cm] 
{\footnotesize $\tspan{(\Pblock_0, \LL{})}$};
\draw [decorate,decoration={brace,amplitude=6pt},xshift=0pt,yshift=0pt]
(\xeAa,.6) -- (\xeAd,.6) node [black,midway,yshift=0.15cm] 
{\footnotesize $\tspanRep{(\Pblock_0, \LL{1})}$};
\draw [decorate,decoration={brace,amplitude=6pt},xshift=0pt,yshift=0pt]
(\xeBa,.6) -- (\xeBd,.6) node [black,midway,yshift=0.15cm] 
{\footnotesize $\tspanRep{(\Pblock_0, \LL{2})}$};
\draw [decorate,decoration={brace,amplitude=6pt},xshift=0pt,yshift=0pt]
(\xeCa,.6) -- (\xeCd,.6) node [black,midway,yshift=0.15cm] 
{\footnotesize $\tspanRep{(\Pblock_0, \LL{3})}$};
\end{tikzpicture}
}
\newcommand{\timelinePFour}[1]{%
\begin{tikzpicture}[-,auto,xscale=\xsctm]
%%% nested perfect
\newcommand{\xeAa}{0}
\newcommand{\xeAb}{2}
\newcommand{\xeAc}{4}
\newcommand{\xeAd}{6}
\newcommand{\xeBa}{13}
\newcommand{\xeBb}{15}
\newcommand{\xeBc}{17}
\newcommand{\xeBd}{19}
\newcommand{\xeCa}{26}
\newcommand{\xeCb}{28}
\newcommand{\xeCc}{30}
\newcommand{\xeCd}{32}
\draw[white] (-1,-2) rectangle (35,2);
\node[anchor=north east] at (35,2) {#1};

\draw[lightgray] (0,0) -- (34,0);
\foreach \i in {0,...,34}{
\draw[lightgray] (\i,-.1) -- (\i,.1);
}
% \foreach \i in {0,10,20,30,40,50}{
% \node[lightgray, font=\scriptsize] at (\i,-.3) {$\i$};
% }
\foreach \v/\i in {a/\xeAa,a/\xeAb,a/\xeAc,a/\xeAd, a/\xeBa,a/\xeBb,a/\xeBc,a/\xeBd, a/\xeCa,a/\xeCb,a/\xeCc,a/\xeCd}{
\node[anchor=south,text height=2em] at (\i,.1) {$\v$};
\node[font=\scriptsize] at (\i,-.3) {$\i$};
}
\path[->,>=stealth,shorten <= 1pt]
(\xeAa,-1.2) edge[bend right=3] node[below] {$\Cprd_0$} (\xeAb,-1.2)
(\xeAb,-1.2) edge[bend right=3] node[below] {$\Cprd_0$} (\xeAc,-1.2)
(\xeAc,-1.2) edge[bend right=3] node[below] {$\Cprd_0$} (\xeAd,-1.2)
(\xeAa,-.5) edge[bend right=2] node[below, pos=.08] {$\Cprd_{1}$} (\xeBa,-.5)
(\xeBa,-.5) edge[bend right=14] node[below, pos=.5] {$\Cprd_{1}$} (\xeCa,-.5)
(\xeAb,-.5) edge[bend right=6] node[below, pos=.16] {$\Cprd_{1}$} (\xeBb,-.5)
(\xeBb,-.5) edge[bend right=10] node[below, pos=.76] {$\Cprd_{1}$} (\xeCb,-.5)
(\xeAc,-.5) edge[bend right=10] node[below, pos=.24] {$\Cprd_{1}$} (\xeBc,-.5)
(\xeBc,-.5) edge[bend right=6] node[below,pos=0.84] {$\Cprd_{1}$} (\xeCc,-.5)
(\xeAd,-.5) edge[bend right=14] node[below, pos=.5] {$\Cprd_{1}$} (\xeBd,-.5)
(\xeBd,-.5) edge[bend right=2] node[below, pos=0.92] {$\Cprd_{1}$} (\xeCd,-.5);

\draw [decorate,decoration={brace,amplitude=6pt},xshift=0pt,yshift=0pt]
(\xeAa,1.3) -- (\xeCd,1.3) node [black,midway,yshift=0.15cm] 
{\footnotesize $\tspanStar{\Pblock_0}$};

\draw [line width=2pt, white, decorate,decoration={brace,amplitude=6pt},xshift=0pt,yshift=0pt]
(\xeAd,.75) -- (\xeCd,.75);
\draw [decorate,decoration={brace,amplitude=6pt},xshift=0pt,yshift=0pt]
(\xeAd,.75) -- (\xeCd,.75) node [black,midway,yshift=0.15cm] 
{\footnotesize $\tspanRepStar{\Pblock_0}$};

\draw [line width=2pt, white, decorate,decoration={brace,amplitude=6pt},xshift=0pt,yshift=0pt]
(\xeAc,.7) -- (\xeCc,.7); 
\draw [decorate,decoration={brace,amplitude=6pt},xshift=0pt,yshift=0pt]
(\xeAc,.7) -- (\xeCc,.7) node [black,midway,yshift=0.15cm] 
{\footnotesize $\tspanRepStar{\Pblock_0}$};

\draw [line width=2pt, white, decorate,decoration={brace,amplitude=6pt},xshift=0pt,yshift=0pt]
(\xeAb,.65) -- (\xeCb,.65);
\draw [decorate,decoration={brace,amplitude=6pt},xshift=0pt,yshift=0pt]
(\xeAb,.65) -- (\xeCb,.65) node [black,midway,yshift=0.15cm] 
{\footnotesize $\tspanRepStar{\Pblock_0}$};

\draw [line width=2pt, white, decorate,decoration={brace,amplitude=6pt},xshift=0pt,yshift=0pt]
(\xeAa,.6) -- (\xeCa,.6);
\draw [decorate,decoration={brace,amplitude=6pt},xshift=0pt,yshift=0pt]
(\xeAa,.6) -- (\xeCa,.6) node [black,midway,yshift=0.15cm] 
{\footnotesize $\tspanRepStar{\Pblock_0}$};
\end{tikzpicture}
}
\newcommand{\timelinePEight}[1]{%
\begin{tikzpicture}[-,auto,xscale=\xsctm]
%%% P8
\newcommand{\yeA}{-2.0}
\newcommand{\yeB}{-1.5}
\newcommand{\yeC}{-1.}
\newcommand{\yeD}{-.5}
\newcommand{\yoA}{2.2}
\newcommand{\yoB}{1.4}
\newcommand{\yoC}{.7}
\newcommand{\yoD}{.5}
\draw[white] (-.2,-3.) rectangle (35.3,3.2);
\node[anchor=north east] at (35,3.2) {#1};
\begin{scope}
\clip (-.2,-2.9) rectangle (35.2,3.1);
\draw[lightgray] (0,0) -- (40,0);
\foreach \i in {0,...,40}{
\draw[lightgray] (\i,-.1) -- (\i,.1);
}
% \foreach \i in {0,10,20,30}{
% \node[lightgray, font=\scriptsize] at (\i,-.3) {$\i$};
% }
\foreach \x/\v in {0/b,3/a,4/a,5/a,6/a,8/c,10/b,13/a,14/a,15/a,16/a,18/c,20/b,23/a,24/a,25/a,26/a,28/c,33/b}{
\node[anchor=south,text height=2em] at (\x,.1) {$\v$};
\node[font=\scriptsize] at (\x,-.3) {$\x$};
}
% FULL {0/b,3/a,4/a,5/a,6/a,8/c,10/b,13/a,14/a,15/a,16/a,18/c,20/b,23/a,24/a,25/a,26/a,28/c,33/b,36/a,37/a,38/a,39/a,41/c,43/b,46/a,47/a,48/a,49/a,51/c,53/b,56/a,57/a,58/a,59/a,61/c}{

\path[->,>=stealth,shorten <= 1pt]
(0,\yeA) edge[bend right=2] node[below, pos=0.025] {$\Cprd_0$} (33,\yeA)
(0,\yeB) edge[bend right=5] node[below, pos=0.1] {$\Cprd_1$} (10,\yeB)
(10,\yeB) edge[bend right=5] node[below] {$\Cprd_1$} (20,\yeB)
(0,\yeC) edge[bend right=7] node[below] {$d_{12}$} (3,\yeC)
(3,\yeC) edge[bend right=7] node[below] {$d_{13}$} (8,\yeC)
(3,\yeD) edge[bend right=10] node[below] {$\Cprd_{12}$} (4,\yeD)
(4,\yeD) edge[bend right=10] node[below, xshift=3pt] {$\Cprd_{12}$} (5,\yeD)
(5,\yeD) edge[bend right=10] node[below, xshift=6pt] {$\Cprd_{12}$} (6,\yeD)

(10,\yeC) edge[bend right=7] node[below] {$d_{12}$} (13,\yeC)
(13,\yeC) edge[bend right=7] node[below] {$d_{13}$} (18,\yeC)
(13,\yeD) edge[bend right=10] node[below] {$\Cprd_{12}$} (14,\yeD)
(14,\yeD) edge[bend right=10] node[below, xshift=3pt] {$\Cprd_{12}$} (15,\yeD)
(15,\yeD) edge[bend right=10] node[below, xshift=6pt] {$\Cprd_{12}$} (16,\yeD)
(20,\yeC) edge[bend right=7] node[below] {$d_{12}$} (23,\yeC)
(23,\yeC) edge[bend right=7] node[below] {$d_{13}$} (28,\yeC)
(23,\yeD) edge[bend right=10] node[below] {$\Cprd_{12}$} (24,\yeD)
(24,\yeD) edge[bend right=10] node[below, xshift=3pt] {$\Cprd_{12}$} (25,\yeD)
(25,\yeD) edge[bend right=10] node[below, xshift=6pt] {$\Cprd_{12}$} (26,\yeD)
(33,\yeA) edge[bend right=2] node[below, pos=0.025] {$\Cprd_0$} (66,\yeA)
(33,\yeB) edge[bend right=5] node[below, pos=0.1] {$\Cprd_1$} (43,\yeB)
(33,\yeC) edge[bend right=7] node[below] {$d_{12}$} (36,\yeC)
;

\draw [decorate,decoration={brace,amplitude=6pt},xshift=0pt,yshift=0pt]
(0,{\yoA+.0}) -- (55,{\yoA+.0}) node [black,midway,yshift=0.15cm] 
{\footnotesize $\tspanStar{\Pblock_{0}}$};

\draw [decorate,decoration={brace,amplitude=6pt},xshift=0pt,yshift=0pt]
(0,{\yoB+.05}) -- (33,{\yoB+.05}) node [black,midway,yshift=0.15cm, anchor=south west, xshift=-0.45cm] 
{\footnotesize $\maxtspanStar{\Pblock_{1}} = \maxtspanRepStar{\Pblock_{0}}$};
\draw [line width=2pt, white, decorate,decoration={brace,amplitude=6pt},xshift=0pt,yshift=0pt]
(0,{\yoB+.0}) -- (28,{\yoB+.0});
\draw [decorate,decoration={brace,amplitude=6pt},xshift=0pt,yshift=0pt]
(0,{\yoB+.0}) -- (28,{\yoB+.0}) node [black,midway,yshift=0.15cm, anchor=south east,xshift=0.45cm] 
{\footnotesize $\tspanStar{\Pblock_{1}} = \tspanRepStar{\Pblock_{0}}$};

\draw [decorate,decoration={brace,amplitude=6pt},xshift=0pt,yshift=0pt]
(10,{\yoC+.05}) -- (20,{\yoC+.05}) node [black,midway,yshift=0.15cm,xshift=0.48cm] 
{\footnotesize $\maxtspanRepStar{\Pblock_{1}}$};
\draw [line width=2pt, white, decorate,decoration={brace,amplitude=6pt},xshift=0pt,yshift=0pt]
(10,{\yoC+.0}) -- (18,{\yoC+.0});
\draw [decorate,decoration={brace,amplitude=6pt},xshift=0pt,yshift=0pt]
(10,{\yoC+.0}) -- (18,{\yoC+.0}) node [black,midway,yshift=0.15cm,xshift=-0.25cm] 
{\footnotesize $\tspanRepStar{\Pblock_{1}}$};

\draw [decorate,decoration={brace,amplitude=6pt},xshift=0pt,yshift=0pt]
(3,{\yoD+.05}) -- (8,{\yoD+.05}) node [black,midway,yshift=0.15cm,xshift=0.6cm] 
{\footnotesize $\maxtspanStar{\Pblock_{12}}$};
\draw [line width=2pt, white, decorate,decoration={brace,amplitude=6pt},xshift=0pt,yshift=0pt]
(3,{\yoD+.0}) -- (6,{\yoD+.0});
\draw [decorate,decoration={brace,amplitude=6pt},xshift=0pt,yshift=0pt]
(3,{\yoD+.0}) -- (6,{\yoD+.0}) node [black,midway,yshift=0.15cm,xshift=-0.4cm] 
{\footnotesize $\tspanStar{\Pblock_{12}}$};
\end{scope}
\end{tikzpicture}
}
\definecolor{vlgray}{HTML}{EFEFEF}
\definecolor{cbI}{HTML}{4477AA}
\definecolor{cbN}{HTML}{117733}
\definecolor{cbJ}{HTML}{DDCC77}
\newcommand{\xsc}{1.3}
\newcommand{\ysc}{.6}
\newcommand{\pinshift}{3pt}
\newcommand{\nbR}{4}
\newcommand{\nbC}{5}
\newcommand{\nbCh}{4}
\newcommand{\margP}{.025}
\newcommand{\PlenI}{r_I}
\newcommand{\PlenJ}{r_J}
\newcommand{\Plen}{r}
\newcommand{\PnbOI}{\ell}
\newcommand{\PnbOJ}{\ell'}
\tikzset{
shifte node/.style={font=\scriptsize, align=center},
occ node/.style={font=\scriptsize,  align=center, rectangle, rounded corners, minimum size=.85cm, xscale=\xsc, yscale=\ysc, draw, very thin, gray},
dots node/.style={font=\scriptsize, align=center, rectangle, rounded corners, minimum size=.85cm, xscale=\xsc, yscale=\ysc},
patt edge/.style={shorten >= -\pinshift, <-, thick}
}
\newcommand{\drawShiftNode}[3]{
  \node[shifte node, inner sep=1pt, fill=white, white, fill opacity=.8, #3] at (#1) {#2};
  \node[shifte node, inner sep=1pt, #3] at (#1) {#2};
}
\newcommand{\blocksPCombineVBckgrd}{%
\draw[white] (-.5-\margP,-.5-\margP) rectangle ({\nbC+\margP},{\nbR+\margP});
\foreach \c/\r in {1/3, 2/3, 3/3, 5/3}{
    \node[dots node] (n\c\r) at ({\c-1+.5},{\r-1+.5}) {$\vdots$};
}
\foreach \c/\r in {4/1,4/2,4/\nbR}{
    \node[dots node] (n\c\r) at ({\c-1+.5},{\r-1+.5}) {$\dots$};
}
\node[dots node] (n43) at ({4-1+.5},{3-1+.5}) {$\ddots$};
}
\newcommand{\blocksPCombineVPR}{%
\blocksPCombineVBckgrd{}
\foreach \c in {1,2,3}{
\foreach \r in {1,2}{
    \node[occ node] (n\c\r) at ({\c-.5},{\r-.5}) {};
  }
  \node[occ node] (n\c\nbR) at ({\c-.5},{\nbR-.5}) {};
}
\foreach \r in {1,2,\nbR}{
    \node[occ node] (n\nbC\r) at ({\nbC-.5},{\r-.5}) {};
}
%%%%
\foreach \i in {1,2}{
  \node[cbI] at (-0.3,{\i-1+.5}) {$\patt_{I,\i}$};
  \node[occ node, double] at (.5,{\i-1+.5}) {};
  \node at (0.5,{\i-1+.5}) {$\Cto_{\i}$};
  \draw[thick, cbI] ({0+\margP},{\i-1+\margP}) rectangle ({\nbC-\margP,\i-\margP});
}
\node[cbI] at (-0.3,{\nbR-1+.5}) {$\patt_{I,\PlenJ}$};
\node[occ node, double] at (.5,{\nbR-1+.5}) {};
\node at (0.5,{\nbR-1+.5}) {$\Cto_{\PlenJ}$};
\draw[thick, cbI] ({0+\margP},{\nbR-1+\margP}) rectangle ({\nbC-\margP,\nbR-\margP});
\path[patt edge, cbI]
 (n11) edge (n21)
 (n21) edge (n31)
 (n31) edge (n41)
 (n41) edge (n51)
 (n12) edge (n22)
 (n22) edge (n32)
 (n32) edge (n42)
 (n42) edge (n52)
 (n14) edge (n24)
 (n24) edge (n34)
 (n34) edge (n44)
(n44) edge (n54);
%%%%
\foreach \c/\r/\x/\y in {2/1/1/1,2/2/1/2,2/\nbR/1/\PlenJ, 3/1/2/1,3/2/2/2,3/\nbR/2/\PlenJ, \nbC/1/\PnbOI-1/1,\nbC/2/\PnbOI-1/2,\nbC/\nbR/\PnbOI-1/\PlenJ}{
  \drawShiftNode{{\c-1+.5},{\r-1+.5}}{$\Csc_{I,\y}[\x]$}{}
}

}
\newcommand{\blocksPCombineVPC}{%
\blocksPCombineVBckgrd{}
\foreach \c in {1,2,3}{
\foreach \r in {1,2}{
    \node[occ node] (n\c\r) at ({\c-.5},{\r-.5}) {};
  }
  \node[occ node] (n\c\nbR) at ({\c-.5},{\nbR-.5}) {};
}
\foreach \r in {1,2,\nbR}{
    \node[occ node] (n\nbC\r) at ({\nbC-.5},{\r-.5}) {};
}
%%%%
\node[cbJ] at (0.5,-0.3) {$\patt_{J}$};
\node[occ node, double] at (.5,.5) {};
\node at (0.5,0.5) {$\Cto_{1}$};
\draw[thick, cbJ] ({0+\margP},{0+\margP}) rectangle ({1-\margP,\nbR-\margP});
\path[patt edge, cbJ]
 (n11) edge (n12)
 (n12) edge (n13)
 (n13) edge (n14);
%%%%
\foreach \c/\r/\x/\y in {1/2/0/1, 1/\nbR/0/\PlenJ\!-\!1}{
  \drawShiftNode{{\c-1+.5},{\r-1+.5}}{$\Csc_{J}[\y]$}{}
}

}
\newcommand{\blocksPCombineV}{%
\blocksPCombineVBckgrd{}
\foreach \c in {1,2,3}{
\foreach \r in {1,2}{
    \node[occ node] (n\c\r) at ({\c-.5},{\r-.5}) {};
  }
  \node[occ node] (n\c\nbR) at ({\c-.5},{\nbR-.5}) {};
}
\foreach \r in {1,2,\nbR}{
    \node[occ node] (n\nbC\r) at ({\nbC-.5},{\r-.5}) {};
}
%%%%
\node[cbN] at (-0.3,-0.3) {$\patt_{N}$};
\node[occ node, double] at (.5,.5) {};
\node at (0.5,0.5) {$\Cto_{1}$};
\draw[thick, cbN] ({0+\margP},{0+\margP}) rectangle ({\nbC-\margP,\nbR-\margP});
\path[patt edge, cbN]
 (n11) edge (n21)
 (n21) edge (n31)
 (n31) edge (n41)
 (n41) edge (n51)
 (n12) edge (n22)
 (n22) edge (n32)
 (n32) edge (n42)
 (n42) edge (n52)
 (n14) edge (n24)
 (n24) edge (n34)
 (n34) edge (n44)
(n44) edge (n54);
\path[patt edge, cbN]
 (n11) edge (n12)
 (n12) edge (n13)
 (n13) edge (n14);
 % ([yshift=-\pinshift]n51.225) edge ([yshift=\pinshift]n12.45)
 % ([yshift=-\pinshift]n52.225) edge ([yshift=\pinshift]n13.45)
 % ([yshift=-\pinshift]n53.225) edge ([yshift=\pinshift]n14.45);
% \draw (n53.south) to[out=120, in=300, patt edge] (n14.north);
%%%%
% \foreach \c/\r/\x/\y in {2/1/1/1, 3/1/2/1, \nbC/1/\PnbOI-1/1, 1/2/0/2, 2/2/1/2, 3/2/2/2, \nbC/2/\PnbOI-1/2, 1/\nbR/0/\PlenJ, 2/\nbR/1/\PlenJ, 3/\nbR/2/\PlenJ, \nbC/\nbR/\PnbOI-1/\PlenJ}{
%   \node[shifte node] at ({\c-1+.5},{\r-1+.5}) {$\Csc_{N}[\y,\x]$};
% }
\foreach \c/\r/\x in {2/1/1, 3/1/2, \nbC/1/\PnbOI-1, 1/2/\PnbOI, 2/2/\PnbOI+1, 3/2/\PnbOI+2, \nbC/2/2\PnbOI-1, \nbC/\nbR/\PlenJ\PnbOI-1}{
\drawShiftNode{{\c-1+.5},{\r-1+.5}}{$\Csc_{N}[\x]$}{}
}
\foreach \c/\r/\x in {1/\nbR/(\PlenJ\!-\!1)\PnbOI, 2/\nbR/(\PlenJ\!-\!1)\PnbOI+1, 3/\nbR/(\PlenJ\!-\!1)\PnbOI+2}{
\drawShiftNode{{\c-1+.5},{\r-1+.5}}{$\Csc_{N}[\x]$}{rotate=20}
}

}
\newcommand{\blocksPCombineVOccs}{%
\blocksPCombineVBckgrd{}
\foreach \c in {1,2,3}{
\foreach \r in {1,2}{
    \node[occ node] (n\c\r) at ({\c-.5},{\r-.5}) {};
  }
  \node[occ node] (n\c\nbR) at ({\c-.5},{\nbR-.5}) {};
}
\foreach \r in {1,2,\nbR}{
    \node[occ node] (n\nbC\r) at ({\nbC-.5},{\r-.5}) {};
}
%%%
\foreach \c/\r/\x/\y in {1/1/1/1, 2/1/2/1, 3/1/3/1, \nbC/1/\PnbOI/1, 1/2/1/2, 2/2/2/2, 3/2/3/2, \nbC/2/\PnbOI/2, 1/\nbR/1/\PlenJ, 2/\nbR/2/\PlenJ, 3/\nbR/3/\PlenJ, \nbC/\nbR/\PnbOI/\PlenJ}{
\node at ({\c-1+.5},{\r-1+.5}) {$\phantom{o'}o_{\y,\x}\phantom{o'}$};
}
}
\newcommand{\blocksPCombineHBckgrd}{%
\draw[white] (-1-\margP,-1-\margP) rectangle ({9.5+\margP},{.5+\nbR+\margP});
\foreach \c/\r in {1/3, 2/3, 3/3, 5/3, 6/3, 7/3, 9/3}{
    \node[dots node] (n\c\r) at ({\c-1+.5},{\r-1+.5}) {$\vdots$};
}
\foreach \c/\r in {4/1,4/2,4/\nbR, 8/1,8/2,8/\nbR}{
    \node[dots node] (n\c\r) at ({\c-1+.5},{\r-1+.5}) {$\dots$};
}
\node[dots node] (n43) at ({4-1+.5},{3-1+.5}) {$\ddots$};
\node[dots node] (n83) at ({8-1+.5},{3-1+.5}) {$\ddots$};
\foreach \c in {1,2,3,6,7}{
\foreach \r in {1,2}{
    \node[occ node] (n\c\r) at ({\c-.5},{\r-.5}) {};
  }
  \node[occ node] (n\c\nbR) at ({\c-.5},{\nbR-.5}) {};
}
\foreach \r in {1,2,\nbR}{
    \node[occ node] (n\nbC\r) at ({\nbC-.5},{\r-.5}) {};
    \node[occ node] (n9\r) at ({9-.5},{\r-.5}) {};
}
}
\newcommand{\blocksPCombineHOccs}{%
\blocksPCombineHBckgrd{}
%%%
\foreach \c/\r/\x/\y in {1/1/1/1, 2/1/2/1, 3/1/3/1, \nbC/1/\PnbOI/1, 1/2/1/2, 2/2/2/2, 3/2/3/2, \nbC/2/\PnbOI/2, 1/\nbR/1/\Plen, 2/\nbR/2/\Plen, 3/\nbR/3/\Plen, \nbC/\nbR/\PnbOI/\Plen}{
\node at ({\c-1+.5},{\r-1+.5}) {$\phantom{o'}o_{\y,\x}\phantom{o'}$};
}
\foreach \c/\r/\x/\y in {1/1/1/1, 2/1/2/1, \nbCh/1/\PnbOJ/1, 1/2/1/2, 2/2/2/2, \nbCh/2/\PnbOJ/2, 1/\nbR/1/\Plen, 2/\nbR/2/\Plen, \nbCh/\nbR/\PnbOJ/\Plen}{
\node at ({\c-1+.5+5},{\r-1+.5}) {$\phantom{o'}o'_{\y,\x}\phantom{o'}$};
}
}
\newcommand{\blocksPCombineHRC}{%
\blocksPCombineHBckgrd{}
%%%%
\node[occ node, double] at (.5,.5) {};
\node[cbI] at (0.5,-0.3) {$\patt_{I}$};
\node at (0.5,0.5) {$\Cto_{I}$};
\draw[thick, cbI] ({0+\margP},{0+\margP}) rectangle ({\nbC-\margP,\nbR-\margP});
\path[patt edge, cbI]
 (n11) edge (n21)
 (n21) edge (n31)
 (n31) edge (n41)
 (n41) edge (n51)
 (n12) edge (n22)
 (n22) edge (n32)
 (n32) edge (n42)
 (n42) edge (n52)
 (n14) edge (n24)
 (n24) edge (n34)
 (n34) edge (n44)
(n44) edge (n54);
\path[patt edge, cbI]
 (n11) edge (n12)
 (n12) edge (n13)
 (n13) edge (n14);
 % ([yshift=-\pinshift]n51.225) edge ([yshift=\pinshift]n12.45)
 % ([yshift=-\pinshift]n52.225) edge ([yshift=\pinshift]n13.45)
 % ([yshift=-\pinshift]n53.225) edge ([yshift=\pinshift]n14.45);
% \draw (n53.south) to[out=120, in=300, patt edge] (n14.north);
%%%%
% \foreach \c/\r/\x/\y in {2/1/1/1, 3/1/2/1, \nbC/1/\PnbOI-1/1, 1/2/0/2, 2/2/1/2, 3/2/2/2, \nbC/2/\PnbOI-1/2, 1/\nbR/0/\Plen, 2/\nbR/1/\Plen, 3/\nbR/2/\Plen, \nbC/\nbR/\PnbOI-1/\Plen}{
%   \node[shifte node] at ({\c-1+.5},{\r-1+.5}) {$\Csc_{N}[\y,\x]$};
% }
\foreach \c/\r/\x in {2/1/1, 3/1/2, \nbC/1/\PnbOI-1, 1/2/\PnbOI, 2/2/\PnbOI+1, 3/2/\PnbOI+2, \nbC/2/2\PnbOI-1, \nbC/\nbR/\Plen\PnbOI-1}{
\drawShiftNode{{\c-1+.5},{\r-1+.5}}{$\Csc_{I}[\x]$}{}
}
\foreach \c/\r/\x in {1/\nbR/(\Plen\!-\!1)\PnbOI, 2/\nbR/(\Plen\!-\!1)\PnbOI+1, 3/\nbR/(\Plen\!-\!1)\PnbOI+2}{
\drawShiftNode{{\c-1+.5},{\r-1+.5}}{$\Csc_{I}[\x]$}{rotate=20}
}

\node[cbJ] at (5.5,-0.3) {$\patt_{J}$};
\node[occ node, double] at (5.5,.5) {};
\node at (5.5,0.5) {$\Cto_{J}$};
\draw[thick, cbJ] ({5+\margP},{0+\margP}) rectangle ({5+\nbCh-\margP,\nbR-\margP});
\path[patt edge, cbJ]
 (n61) edge (n71)
 (n71) edge (n81)
 (n81) edge (n91)
 (n62) edge (n72)
 (n72) edge (n82)
 (n82) edge (n92)
 (n64) edge (n74)
 (n74) edge (n84)
 (n84) edge (n94);
\path[patt edge, cbJ]
 (n61) edge (n62)
 (n62) edge (n63)
 (n63) edge (n64);
 % ([yshift=-\pinshift]n91.225) edge ([yshift=\pinshift]n62.45)
 % ([yshift=-\pinshift]n92.225) edge ([yshift=\pinshift]n63.45)
 % ([yshift=-\pinshift]n93.225) edge ([yshift=\pinshift]n64.45);
% \draw (n53.south) to[out=120, in=300, patt edge] (n14.north);
%%%%
% \foreach \c/\r/\x/\y in {2/1/1/1, 3/1/2/1, \nbC/1/\PnbOI-1/1, 1/2/0/2, 2/2/1/2, 3/2/2/2, \nbC/2/\PnbOI-1/2, 1/\nbR/0/\Plen, 2/\nbR/1/\Plen, 3/\nbR/2/\Plen, \nbC/\nbR/\PnbOI-1/\Plen}{
%   \node[shifte node] at ({\c-1+.5},{\r-1+.5}) {$\Csc_{N}[\y,\x]$};
% }
\foreach \c/\r/\x in {2/1/1, \nbCh/1/\PnbOJ-1, 1/2/\PnbOJ, 2/2/\PnbOJ+1, \nbCh/2/2\PnbOJ-1, \nbCh/\nbR/\Plen\PnbOJ-1}{
\drawShiftNode{{\c-1+5+.5},{\r-1+.5}}{$\Csc_{J}[\x]$}{}
}
\foreach \c/\r/\x in {1/\nbR/(\Plen\!-\!1)\PnbOJ, 2/\nbR/(\Plen\!-\!1)\PnbOJ+1}{
\drawShiftNode{{\c-1+5+.5},{\r-1+.5}}{$\Csc_{J}[\x]$}{rotate=20}
}
}
\newcommand{\blocksPCombineH}{%
\blocksPCombineHBckgrd{}
%%%%
\node[cbN] at (-0.3,-0.3) {$\patt_{N}$};
\node[occ node, double] at (.5,.5) {};
\node at (0.5,0.5) {$\Cto_{N}$};
\draw[thick, cbN] ({0+\margP},{0+\margP}) rectangle ({\nbC+\nbCh-\margP,\nbR-\margP});
\path[patt edge, cbN]
 (n11) edge (n21)
 (n21) edge (n31)
 (n31) edge (n41)
 (n41) edge (n51)
 (n12) edge (n22)
 (n22) edge (n32)
 (n32) edge (n42)
 (n42) edge (n52)
 (n14) edge (n24)
 (n24) edge (n34)
 (n34) edge (n44)
 (n44) edge (n54) %%
 (n61) edge (n71)
 (n71) edge (n81)
 (n81) edge (n91)
 (n62) edge (n72)
 (n72) edge (n82)
 (n82) edge (n92)
 (n64) edge (n74)
 (n74) edge (n84)
 (n84) edge (n94);%%
\path[patt edge, cbN]
 (n11) edge (n12)
 (n12) edge (n13)
 (n13) edge (n14);
\path[patt edge, cbN, bend right=15pt]
 (n11.35) edge (n61.145)
 (n12.35) edge (n62.145)
 (n14.35) edge (n64.145);
\foreach \c/\r/\x in {2/1/1, 3/1/2, \nbC/1/\PnbOI-1}{
\drawShiftNode{{\c-1+.5},{\r-1+.5}}{$\Csc_{N}[\x]$}{}
}
\foreach \c/\r/\x in {1/1/\PnbOI, 2/1/\PnbOI+1}{
\drawShiftNode{{\c-1+5+.5},{\r-1+.5}}{$\Csc_{N}[\x]$}{}
}
\foreach \c/\r/\x in {1/\nbR/(\Plen-1)(\PnbOI+\PnbOJ), 2/\nbR/(\Plen-1)(\PnbOI+\PnbOJ)+1, 3/\nbR/(\Plen-1)(\PnbOI+\PnbOJ)+2, 2/2/\PnbOI+\PnbOJ+1, 3/2/\PnbOI+\PnbOJ+2, \nbC/2/2\PnbOI+\PnbOJ-1, \nbC/\nbR/\Plen\PnbOI+(\Plen-1)\PnbOJ-1, 1/2/\PnbOI+\PnbOJ}{
\drawShiftNode{{\c-1+.5},{\r-1+.5}}{$\Csc_{N}[\x]$}{rotate=20}
}
\foreach \c/\r/\x in {1/\nbR/\Plen\PnbOI+(\Plen-1)\PnbOJ, 2/\nbR/\Plen\PnbOI+(\Plen-1)\PnbOJ+1, \nbCh/1/\PnbOI+\PnbOJ-1, 1/2/2\PnbOI+\PnbOJ, 2/2/2\PnbOI+\PnbOJ+1, \nbCh/2/2(\PnbOI+\PnbOJ)-1, \nbCh/\nbR/\Plen(\PnbOI+\PnbOJ)-1}{
\drawShiftNode{{\c-1+5+.5},{\r-1+.5}}{$\Csc_{N}[\x]$}{rotate=20}
}
}
\newlength{\ndhlen}
\newlength{\ndwlen}
\newcommand{\exEvATyp}{ptstart node}
\newcommand{\exEvBTyp}{ptend node}
\newcommand{\exEvCTyp}{ptstart node}
\newcommand{\headEx}[4]{
  \node[info node, xshift=-1.3\ndwlen, yshift=#1\ndhlen-1em] (r0) at (x0) {$\Clen=#2$};
  \node[info node, xshift=-1.3\ndwlen, yshift=#1\ndhlen-2em] (p0) at (x0) {$\Cprd=#3$};
  \node[info node, xshift=-1.3\ndwlen, yshift=#1\ndhlen] (t0) at (x0) {$\Cto=$ #4};
}
\newcommand{\threeEvEx}[9]{
  \node[ghost node] (x0) at (#1) {};
  \node[main node] (b0) at (#1) {};
  \headEx{1.45}{#2}{#3}{#4}
  \node[leaf node, xshift=\ndwlen] (l2) at (b0) {};
  \node[leaf node] (l1) [above of=l2] {};
  \node[leaf node] (l3) [below of=l2] {};
  \node (p1) [right of=l1, \exEvATyp] {#5};
  \node (p2) [right of=l2, \exEvBTyp] {#7};
  \node (p3) [right of=l3, \exEvCTyp] {#9};
  \path
    (b0) edge (l1)
    (b0) edge (l2)
    (b0) edge (l3);
\path[dotted, thin, ->, bend left=20, color=darkgray]
    (l1) edge node[dst node, right] {$#6$} (l2)
    (l2) edge node[dst node, right] {$#8$} (l3);
}
\newcommand{\twoEvEx}[7]{
  \node[ghost node] (x0) at (#1) {};
  \node[main node, xshift=.2\ndwlen] (b0) at (#1) {};
  \headEx{1.4}{#2}{#3}{#4}
  \node[ghost node, xshift=\ndwlen, yshift=-.5\ndhlen] (c1) at (b0) {};
  \node[leaf node] (l1) [above of=c1] {};
  \node[leaf node] (l2) [below of=l1] {};
  \node (p1) [right of=l1, \exEvATyp] {#5};
  \node (p2) [right of=l2, \exEvBTyp] {#7};
  \path
    (b0) edge (l1)
    (b0) edge (l2);
\path[dotted, thin, ->, bend left=20, color=darkgray]
    (l1) edge node[dst node, right] {$#6$} (l2);
}
\newcommand{\oneEvEx}[5]{
  \node[ghost node] (x0) at (#1) {};
  \node[main node, xshift=.5\ndwlen] (b0) at (#1) {};
  \headEx{1.4}{#2}{#3}{#4}
  \node[leaf node, xshift=\ndwlen] (l1) at (b0) {};
  \node (p1) [right of=l1, \exEvATyp] {#5};
  \path
    (b0) edge (l1);
}
\newcommand{\otherEvEx}[9]{
  \node[ghost node] (x0) at (#1) {};
  \node[main node, xshift=-.3\ndwlen, yshift=-.2\ndhlen] (b0) at (#1) {};
  \headEx{1.45}{#2}{#3}{#4}
  \node[ghost node, xshift=\ndwlen, yshift=-.5\ndhlen] (c1) at (b0) {};
  \node[leaf node] (l1) [above of=c1] {};
  \node[main node] (l2) [below of=l1] {};
  \node[leaf node, xshift=\ndwlen] (l3) at (l2) {};
  \node (p1) [right of=l1, \exEvATyp] {#5};
  \node (p3) [right of=l3, \exEvBTyp] {#7};
  \node[info node, xshift=-.5\ndwlen, yshift=-1em] (r1) at (l2) {$\Clen_1=#8, \Cprd_1=#9$};
  \path
    (b0) edge (l1)
    (b0) edge (l2)
    (l2) edge (l3);
\path[dotted, thin, ->, bend left=20, color=darkgray]
    (l1) edge node[dst node, right] {$#6$} (l2);
}
\newcommand{\urlcode}{\url{https://github.com/nurblageij/periodic-patterns-mdl}}
\newcommand{\VShortOnly}[1]{}
\newcommand{\VLongOnly}[1]{#1}
\newcommand{\custmath}[2]{\[#1\;#2\]}
\newcommand{\custfrac}[2]{\frac{#1}{#2}}
\begin{document}

\title{Mining Periodic Patterns with a MDL Criterion} 

\author{Esther Galbrun\inst{1} \and
Peggy Cellier\inst{2} \and 
Nikolaj Tatti\inst{1,4} \and \\
Alexandre Termier\inst{2} \and
Bruno Cr\'emilleux\inst{3} 
}

\institute{Department of Computer Science, Aalto University, Finland \\ \url{{esther.galbrun, nikolaj.tatti}@aalto.fi} \and
Univ.\ Rennes, \{INSA, Inria\}, CNRS, IRISA, France \\ \url{{peggy.cellier, alexandre.termier}@irisa.fr} \and
Normandie Univ., UNICAEN, ENSICAEN, CNRS -- UMR GREYC, France \\ \url{bruno.cremilleux@unicaen.fr} \and
F-Secure, Finland 
}

\authorrunning{Galbrun \textit{et al.}}

\maketitle

\begin{abstract}
The quantity of event logs available is increasing rapidly, be they produced by industrial processes, computing systems, or life tracking, for instance. It is thus important to design effective ways to uncover the information they contain.
Because event logs often record repetitive phenomena, mining periodic patterns is especially relevant when considering such data.
Indeed, capturing such regularities is instrumental in providing condensed representations of the event sequences.

We present an approach for mining periodic patterns from event logs while relying on a Minimum Description Length (MDL) criterion to evaluate candidate patterns. Our goal is to extract a set of patterns that suitably characterises the periodic structure present in the data.
We evaluate the interest of our approach on several real-world event log datasets.

\keywords{Periodic patterns \and MDL \and Sequence mining.}
\end{abstract}
 
\section{Introduction}

{\em Event logs} are among the most ubiquitous types of data nowadays. 
They can be machine generated (server logs, database transactions, sensor data) or human generated (ranging from hospital records to life tracking, a.k.a.\ quantified self), and are bound to become ever more voluminous and diverse with the increasing digitisation of our lives and the advent of the Internet of Things (IoT).
Such logs are often the most readily available sources of information on a system or process of interest.
It is thus critical to have effective and efficient means to analyse them and extract the information they contain.

Many such logs monitor repetitive processes, and some of this repetitiveness is recorded in the logs.
A careful analysis of the logs can thus help understand the characteristics of the underlying recurrent phenomena.
However, this is not an easy task: a log usually captures many different types of events.
Events related to occurrences of different repetitive phenomena are often mixed together as well as with noise, and the different signals need to be disentangled to allow analysis.
This can be done by a human expert having a good understanding of the domain and of the logging system, but is tedious and time consuming.

{\em Periodic pattern mining} algorithms~\cite{Ozden98} have been proposed to tackle this problem.
These algorithms can discover periodic repetitions of sets or sequences of events amidst unrelated events.
They exhibit some resistance to noise, when it takes the form of slight variations in the inter-occurrence delay~\cite{Berberidis02} or of the recurrence being limited to only a portion of the data~\cite{Ma01}.
However, such algorithms suffer from the traditional plague of pattern mining algorithms: they output too many patterns (up to several millions), even when relying on condensed representations~\cite{LopezCueva12}.

Recent approaches have therefore focused on optimising the quality of the extracted {\em pattern set} as a whole~\cite{DRZ07a}, rather than finding individual high-quality patterns. 
In this context, the adaptation of the Minimal Description Length (MDL) principle~\cite{rissanen1978modeling,grunwald07} to pattern set mining has given rise to a fruitful line of work~\cite{krimp2011,bonchi11krimpro,tatti_long_2012,bhattacharyya_efficiently_2017}.
The MDL principle is a concept from information theory based on the insight that any structure in the data can be exploited to compress the data, and aiming to strike a balance between the complexity of the model and its ability to describe the data.

The most important structure of the data on which we focus here, i.e.\ of event logs, is the periodic recurrence of some events.  For a given event sequence, we therefore want to identify a set of patterns that captures the periodic structure present in the data, and we devise a MDL criterion to evaluate candidate pattern sets for this purpose.
First, we consider a simple type of model, representing event sequences with cycles over single events.
Then, we extend this model so that cycles over distinct events can be combined together. By simply letting our patterns combine not only events but also patterns recursively, we obtain an expressive language of periodic patterns. For instance, it allows us to express the following daily routine:

\smallskip
{\hspace{-2ex} \parbox{.95\textwidth}{
Starting Monday at $7$:$30$ AM, wake up, then, $10$ minutes later, prepare coffee, repeat every $24$ hours for $5$ days, repeat this every $7$ days for $3$ months 
}}
\smallskip

\noindent as a pattern consisting of two nested cycles, respectively with $24$ hours and $7$ days periods, over the events ``waking up'' and ``preparing coffee''.

In short, we propose a novel approach for mining periodic patterns using a MDL criterion. The main component of this approach---and our main contribution---is the definition of an expressive pattern language and the associated encoding scheme which allows to compute a MDL-based score for a given pattern collection and sequence. We design an algorithm for putting this approach into practise and perform an empirical evaluation on several event log datasets. 
We show that we are able to extract sets of patterns that compress the input sequences and to identify meaningful patterns. 

We start by reviewing the main related work, in Section~\ref{sec:related}. In Section~\ref{sec:problem}, we introduce our problem setting and a simple model consisting of cycles over single events, which we extend in Section~\ref{sec:problem_complex}. \VLongOnly{In Section~\ref{sec:comb}, we look at how patterns can be combined and compare costs. }We present an algorithm for mining periodic patterns that compress in Section~\ref{sec:algo} and evaluate our proposed approach over several event log datasets in Section~\ref{sec:xps}. We reach conclusions in Section~\ref{sec:conclusion}.  

\VShortOnly{We focus here on the high-level ideas, and refer the interested reader to our report~\cite{extended} that includes technical details, additional examples and experiments.}
\VLongOnly{This report extends our conference publication~\cite{conf} with technical details, numerous examples, and additional experiments.}

\section{Related Work}
\label{sec:related}

The first approaches for mining periodic patterns\VLongOnly{~\cite{Ozden98,Han98,Han99} were designed to augment traditional itemset and sequence mining techniques with the capacity to identify events whose occurrences are regularly spaced in time.
They} used extremely constrained definitions of the periodicity. In~\cite{Ozden98}, \emph{all} occurrences must be regularly spaced; In~\cite{Han98,Han99}, some missing occurrences are permitted but all occurrences must follow the same regular spacing.
As a result, these approaches are extremely sensitive to even small amounts of noise in the data.
Ma \textit{et al.}~\cite{Ma01} later proposed a more robust approach, which can extract periodic patterns in the presence of gaps of arbitrary size in the data\VLongOnly{: the recurrence can be interrupted and restarted, possibly with a different spacing. Such perturbations are frequent in real data}. \VLongOnly{

The}\VShortOnly{While the} above approaches require time to be discretized as a preprocessing (time steps of hour or day length, for example)\VLongOnly{, smoothing out small changes in inter-occurrence delays and limiting the search for the correct period to a predetermined range. These approaches might be too coarse grained, however, and are dependant on the discretization. Several}\VShortOnly{, several} solutions have been proposed to directly discover candidate periods from raw timestamp data, using the Fast Fourier Transform~\cite{Berberidis02} or statistical models~\cite{li2012mining,yuan2017pred}.\VLongOnly{ 

} All of the above approaches are susceptible to producing a huge number of patterns, making the exploitation of their results difficult.
The use of a {\em condensed representation} for periodic patterns~\cite{LopezCueva12} allows to significantly reduce the number of patterns output, without loss of information, but falls short of satisfactorily addressing the problem.

\VLongOnly{\medskip}

Considering pattern mining more in general, to tackle this pervasive issue of the overwhelming number of patterns extracted, research has focused on extracting {\em pattern sets}~\cite{DRZ07a}: finding a (small) set of patterns that together optimise some interest criterion.
One such criterion is based on the Minimum Description Length (MDL) principle~\cite{grunwald_model_2000}. Simply put, it states that \emph{the best model is the one that compresses the data best}. Following this principle, the \algname{Krimp} algorithm~\cite{krimp2011} was proposed, to select a subset of frequent itemsets that yields the best lossless compression of a transactional database. This algorithm was later improved~\cite{SLIM} and the approach extended to analyse event sequences~\cite{tatti_long_2012,goKRIMP,bhattacharyya_efficiently_2017}.
Along a somewhat different approach, Kiernan and Terzi proposed to use MDL to summarize event sequences~\cite{kiernan09}.  

To the best of our knowledge, the only existing method that combines periodic pattern mining and a MDL criterion was proposed by Heierman \textit{et al.}~\cite{DBLP:conf/icdm/HeiermanC03}.
This approach considers a single regular episode at a time and aims to select the best occurrences for this pattern, independently of other patterns. Instead, we use a MDL criterion in order to select a good collection of periodic patterns.

\section{Preliminary Notation and Problem Definition}
\label{sec:problem}

Next, we formally define the necessary concepts and formulate our problem, focusing on simple cycles.
\VLongOnly{
But first, let us clarify some of the notation we use throughout.

Lists are represented by enumerating their elements in order of occurrence, enclosed between $\lls$ and $\lle$, as in $\lls i_1, i_2, \dots \lle$ for instance, with $\llempty$ denoting the empty list.
We use $\opconcat$ to represent the concatenation of lists, as in 
\[\LL{a, b, c} = \LL{a} \opconcat \LL{b, c}\text{ and }
\LL{i_1, i_2, \dots, i_9} = \fnconcat_{k \in [1..9]} \LL{i_k}\;.\]
Given a list $L$, $L[k]$ returns the element at $k^{th}$ position (indexing starts at $1$).

We also use a simplified notation for lists, especially when using them as indices. Lists and single elements are then denoted respectively as upper-case and lower-case letters or numbers, and concatenation is simply represented by concatenating the corresponding letters. In this notation, we use $0$ to represent the empty list. For instance, the indices in $\Pblock_0$, $\Pblock_X$ and $\Pblock_{Xy}$ represent an empty list, a list $X$, and element $y$ concatenated to the list $X$, respectively. 

All logarithms are to base $2$.

Symbols used are listed on the last page of this report. 
}

\VShortOnly{\mpara{Event sequences and cycles.}}
\VLongOnly{\mpara{A timestamped event sequence as input data.}}
Our input data is a collection of timestamped occurrences of some events, which we call an \emph{event sequence}.
The events come from an alphabet $\ABC$\label{sym:ABC} and will be represented with lower case letters.
We assume that an event can occur only once per time step, so the data can be represented as a list of timestamp--event pairs, such as
\VShortOnly{\[ \seqex{1} = \lls (2,c),(3,c),(6,a),(7,a),(7,b),(19,a),(30,a),(31,c),(32,a),(37,b) \lle \;.\]}
\VLongOnly{
\begin{align*}
\seqex{1} &= \lls (2,c),(3,c),(6,a),(7,a),(7,b),(19,a),\\
&(30,a),(31,c),(32,a),(37,b),(42,a),(48,c),(54,a) \lle \;.
\end{align*} }
Whether timestamps represent days, hours, seconds, or something else depends on the application, the only requirement is that they be expressed as positive integers.
We denote as $\seq[\alpha]$\label{sym:seqalpha} the event sequence $\seq$\label{sym:seq} restricted to event $\alpha$\label{sym:alpha}, that is, the subset obtained by keeping only occurrences of event $\alpha$.
\VLongOnly{
For instance, we can represent $\seqex[a]{1}$, the event sequence above restricted to event $a$, simply as a list of timestamps:
\begin{align*}
\seqex[a]{1} = \langle6,7,19,30,32,42,54\rangle\;.
\end{align*} 
}
We denote as $\len{\seq}$\label{sym:lenS} the number of timestamp--event pairs contained in event sequence $\seq$, i.e.\ its \emph{length}, and $\tspan{\seq}$\label{sym:durationS} the time spanned by it, i.e.\ its \emph{duration}. That is, $\tspan{\seq} = \tSend(\seq) - \tSstart(\seq)$, where $\tSend(\seq)$\label{sym:tSend} and $\tSstart(\seq)$\label{sym:tSstart} represent the largest and smallest timestamps in $\seq$, respectively.
\VLongOnly{
Observe that $\len{\seq[\alpha]}$ equals the number of occurrences of $\alpha$ in the original sequence, and that $\tspan{\seq[\alpha]} \leq \tspan{\seq}.$
In the example above we have $\len{\seqex{1}} = 13$, $\len{\seqex[a]{1}} = 7$, $\tspan{\seqex{1}} = 52$ and $\tspan{\seqex[a]{1}} = 48$.}

\VShortOnly{\medskip}
\VLongOnly{\mpara{Cycles as periodic patterns.}}
Given such an event sequence, our goal is to extract a representative collection of cycles.
A \emph{cycle} is a periodic pattern that takes the form of an ordered list of occurrences of an event, where successive occurrences appear at the same distance from one another.
We will not only consider perfect cycles, where the inter-occurrence distance is constant, but will allow some variation.

A cycle is specified by indicating:
\VShortOnly{\begin{itemize}
\item the repeating event, called \emph{cycle event} and denoted as $\Cev$, 
\item the number of repetitions of the event, called \emph{cycle length}, $\Clen$, 
\item the inter-occurrence distance, called \emph{cycle period}, $\Cprd$, and 
\item the timestamp of the first occurrence, called \emph{cycle starting point}, $\Cto$.
\end{itemize}}
\VLongOnly{\begin{itemize}
\item the repeating event, called the \emph{cycle event} and denoted as $\Cev$\label{sym:Cev}, 
\item the number of repetitions of the event, called the \emph{cycle length} and denoted as $\Clen$\label{sym:Clen} ,
\item the inter-occurrence distance, called the \emph{cycle period} and denoted as $\Cprd$\label{sym:Cprd}, and 
\item the timestamp of the first occurrence, called the \emph{cycle starting point} and denoted as $\Cto$\label{sym:Cto}.
\end{itemize}}
Cycle lengths, cycle periods and cycle starting points take positive integer values (we choose to restrict periods to be integers for simplicity and interpretability). More specifically, we require $\Clen > 1$, $\Cprd > 0$ and $\Cto \geq 0$. 

In addition, since we allow some variation in the actual inter-occurrence distances, we need to indicate an offset for each occurrence in order to be able to reconstruct the original subset of occurrences, that is, to recover the original timestamps. For a cycle of length $\Clen$, this is represented as an ordered list of $\Clen-1$ signed integer offsets, called the \emph{cycle shift corrections} and denoted as $\Csc$\label{sym:Csc}.
Hence, a cycle is a 5-tuple $\cycle\label{sym:cycle} = (\Cev, \Clen, \Cprd, \Cto, \Csc)$.

\VLongOnly{Note that since the cycles we consider here involve one event each, we can process the occurrences of each event separately. In other words, we can split the original sequence $\seq$ into subsequences $\seq[\alpha]$, one for each event $\alpha$, and handle them separately.} 

\VLongOnly{\mpara{A cycle's cover.}}
For a given cycle $\cycle = (\Cev, \Clen, \Cprd, \Cto, \Csc)$, with $\Csc = \LL{ \Ces_1, \dots, \Ces_{\Clen-1} }$ we can recover the corresponding occurrences timestamps by reconstructing them recursively, starting from $\Cto$: $t_1 = \Cto$, $t_k = t_{k-1}+\Cprd+\Ces_{k-1}.$ 
Note that this is different from first reconstructing the occurrences while assuming perfect periodicity as $\Cto, \Cto+\Cprd, \Cto+2\Cprd, \dots, \Cto+(\Clen-1) \Cprd$, then applying the corrections, because in the former case the corrections actually accumulate.

Then, we overload the notation and denote the time spanned by the cycle as $\tspan{C}$\label{sym:Cspan}\VShortOnly{. }\VLongOnly{, that is 
\begin{align*}
\tspan{C} & =t_\Clen - t_1 \\
\VLongOnly{&= (t_{\Clen-1}+\Cprd+\Ces_{\Clen-1}) - \Cto \\
&= \big((t_{\Clen-2}+\Cprd+\Ces_{\Clen-2})+\Cprd+\Ces_{\Clen-1}\big) - \Cto \\}
&= (\Clen-1)\Cprd+\Ces_{1}+\dots+\Ces_{\Clen-1}\;.
\end{align*}}
Denoting as $\sumel{\Csc}$\label{sym:sumel} the sum of the shift corrections in $\Csc$, $\sumel{\Csc} = \sum_{\Ces \in \Csc} \Ces$, we have
\custmath{\tspan{C} = (\Clen-1)\Cprd+\sumel{\Csc}}{.}
Note that this assumes that the correction maintains the order of the occurrences. This assumption is reasonable since an alternative cycle that maintains the order can be constructed for any cycle that does not.

We denote as $\cov{\cycle}$\label{sym:cov} the corresponding set of reconstructed timestamp--event pairs \custmath{\cov{\cycle} = \{(t_1, \Cev), (t_2, \Cev), \dots, (t_r, \Cev)\}}{.} 
We say that a cycle covers an occurrence if the corresponding timestamp--event pair belongs to the reconstructed subset $\cov{\cycle}$.

Since we represent time in an absolute rather than relative manner and assume that an event can only occur once at any given timestamp, we do not need to worry about overlapping cycles nor about an order between cycles. Given a collection of cycles representing the data, the original list of occurrences can be reconstructed by reconstructing the subset of occurrences associated with each cycle, regardless of order, and taking the union.
We overload the notation and denote as $\cov{\ccycle}$ the set of reconstructed timestamp--event pairs for a collection $\ccycle$\label{sym:ccycle} of cycles $\ccycle =\{\cycle_1, \dots, \cycle_m\}$, that is \custmath{\cov{\ccycle} = \bigcup_{\cycle \in \ccycle} \cov{\cycle}}{.}

For a sequence $\seq$ and cycle collection $\ccycle$ we call \emph{residual} the timestamp--event pairs not covered by any cycle in the collection: \custmath{\residual{\ccycle, \seq}\label{sym:residual} = \seq \setminus \cov{\ccycle}}{.}

We associate a cost to each individual timestamp--event pair $o = (t, \alpha)$ and each cycle $\cycle$, respectively denoted as $\cl(o)$\label{sym:costCL} and $\cl(\cycle)$, which we will define shortly. Then, we can reformulate our problem of extracting a representative collection of cycles as follows:
\begin{problem}
\label{prob:cycles}
Given an event sequence $\seq$, find the collection of cycles $\ccycle$ minimising the cost
\[\cl(\ccycle, \seq) = \sum_{\cycle \in \ccycle} \cl(\cycle) + \sum_{o \in \residual{\ccycle, \seq}} \cl(o)\;.\]
\end{problem}

\mpara{Code lengths as costs.}
This problem definition can be instantiated with different choices of costs.
Here, we propose a choice of costs motivated by the MDL principle. Following this principle, we devise a scheme for encoding the input event sequence using cycles and individual timestamp--event pairs. The cost of an element is then the length of the code word assigned to it under this scheme, and the overall objective of our problem becomes finding the collection of cycles that results in the shortest encoding of the input sequence, i.e.\ finding the cycles that compress the data most. 
In the rest of this section, we present our custom encoding scheme.
\VShortOnly{Note that all logarithms are to base $2$.}

\VLongOnly{For each type of information, we need to determine the most appropriate way to encode it, given the type of patterns we are interested in finding.
The following should always be kept in mind
\begin{quote}
In MDL we are NEVER concerned with actual encodings; we are only concerned
with code length functions. (Peter D.\ Gr\"{u}nwald 2004)
\end{quote}

\mpara{Outline of code systems.}
Given a collection of symbols $Z$ that we might need to transmit, such as, in our case the alphabet of events over which our data sequence is expressed or the range of values that the periods might take, and a particular symbol $z$, all we are interested is the length of the code assigned to $z$, which we denote as $\cl(z)$, not the actual code.

Different code systems can be used, but we focus on those that possess the \emph{prefix property}, meaning that there will not be any two code words in the system such that one is a prefix of the other, making such code uniquely decodable. 

For a collection of symbols $Z$, where each symbol $z$ is associated with an occurrence frequency $\fr(z)$, the optimal prefix code is such that $\cl(z) = -\log(\fr(z))$. However, this requires that the receptors knows the occurrence frequencies.

\emph{Prequential coding} allows to obtain a code that is almost optimal, without knowing the frequencies. Such a code will assign shorter codes to, and hence favour, frequently occurring values.

\emph{Fixed-length codes}, as the name indicates, assign codes of equal length to all values, and hence do not favour any value. Each value is encoded with a code of length $\log(\abs{Z})$. 

\emph{Universal codes} allow to encode non-negative integers, assigning shorter codes to smaller numerical values. In particular, the code length assigned to $z$ is $\unic(z) = \log^*(z) + \log(c_0)$, where $c_0$ is a constant which must be adjusted to ensure that the Kraft inequality is satisfied, i.e.\ such that \[\sum_{z \in \mathbb{N}} 2^{-\unic(z)} \leq 1.\] 
How much small values are favoured compared to larger ones can be adjusted. To avoid wasting bits on unused values large values, $c_0$ can be adjusted to ensure that Kraft inequality is not only satisfied but holds with strict equality. That is, given some upper bound $v$ on the values to encode, we denote as $\univ{v}$ the code length obtained with an adjusted $c_0$ so that \[\sum_{z \in [1..v]} 2^{-\univ{v}(z)} = 1\;.\]

\mpara{Choosing the most appropriate encoding for cycles.}}
For each cycle we need to specify its event, length, period, starting point and shift corrections, that is \custmath{\cl(\cycle) = \cl(\Cev) + \cl(\Clen) + \cl(\Cprd) + \cl(\Cto) +\cl(\Csc)}{.}
It is important to look more closely at the range in which each of these pieces of information takes value, at what values---if any---should be favoured, and at how the values of the different pieces depend on one another.

\VLongOnly{Clearly, a cycle over event $\Cev$ cannot have a length greater than $\len{\seq^{(\Cev)}}$. On the other hand, if it has length $\Clen$, it cannot have a period greater than $\tspan{\seq^{(\Cev)}}/(\Clen-1)$. 
Furthermore, once $\Cto$ is known, the period is further restricted to $(\tSend(\seq^{(\Cev)}) - \Cto)/(\Clen-1)$. And vice-versa, if we first fix the period, it creates limitations on the values the length can take, which in turn affects the values the starting point can take. So, we see a clear dependency between these values.
Also note that the maximum values for the period and the starting point depend on the time span of the sequence, while the maximum value for the length depends on the number of occurrences of the event.
To avoid wasting bits, it might be useful to normalise the time scale to the smallest encountered time step.

\mpara{Encoding with fixed-length codes.}
A somewhat naive approach to encode a cycle is to use fixed-length codes for the event, length, period and starting point, and an adjusted universal code for the shift corrections.
The magnitude of an individual shift correction can be anywhere between $0$ and $\tspan{\seq}$. So if we let $m=\tspan{\seq}+1$, we can use a code word of length $\univ{m}(\abs{e}+1)$ to indicate the absolute value of shift correction $e$ and add one bit to indicate its direction.
Since we can easily determine that the length of a cycle can be no larger than $\len{\seq}$ and that, neglecting the shift corrections, its period and starting point can take values no larger than $\tspan{\seq}/2$ and $\tspan{\seq}$, respectively, we get
\begin{align*}
\cl(\cycle) =& \cl(\Cev) + \cl(\Clen) + \cl(\Cprd) + \cl(\Cto) +\cl(\Csc) \\
=& \log(\abs{\ABC}) + \log(\len{\seq}) + \log(\tspan{\seq}/2) + \log(\tspan{\seq}) \\
&+ \sum_{e \in \Csc} (\univ{m}(\abs{e}+1) + 1) \;.
\end{align*}

\mpara{Optimising the encoding.}
But we can do better, by exploiting the dependencies between the pieces of information.
To encode the cycles' events, we can use either fixed-length coding, as above, or codes based on the events' frequency in the original sequence. 
In the first case the length of the code word representing the event is constant across all cycles, regardless of the event and only depends on the size of the alphabet.
In the second case,}\VShortOnly{To encode the cycles' events, we can use codes based on the events' frequency in the original sequence, so that}
 events that occur more frequently in the event sequence will receive shorter code words: 
\custmath{\cl(\Cev) = -\log(\fr(\Cev)) = -\log(\custfrac{\len{\seq[\Cev]}}{\len{\seq}})}{.}
This requires that we transmit the number of occurrences of each event in the original event sequence. 
To optimise the overall code length, the length of the code word associated to each event should actually depend on the frequency of the event in the selected collection of cycles. However, this would require keeping track of these frequencies and updating the code lengths dynamically. Instead, we use the frequencies of the events in the input sequence as a simple proxy.  

\VShortOnly{Clearly, a cycle with event $\Cev$ cannot have a length greater than $\len{\seq^{(\Cev)}}$.}
Once the cycle event $\Cev$ and its number of occurrences are known, we can encode the cycle length with a code word of length
\custmath{\cl(\Clen) = \log(\len{\seq[\Cev]})}{,}
resulting in the same code length for large numbers of repetitions as for small ones.

\VLongOnly{Recall that \[\tspan{\cycle} = (\Clen-1)\Cprd+\sumel{\Csc}\;.\]} Clearly, a cycle spans at most the time of the whole sequence, i.e.\  ${\tspan{C} \leq \tspan{\seq}}$\VShortOnly{,}\VLongOnly{. Hence 
\[\Cprd \leq \Big\lfloor\frac{\tspan{\seq}-\sumel{\Csc}}{\Clen-1}\Big\rfloor\;,\]}
so that knowing the cycle length, the shift corrections, and the sequence time span, we can encode the cycle period with a code word of length 
\[\cl(\Cprd) = \log\Big(\Big\lfloor\frac{\tspan{\seq}-\sumel{\Csc}}{\Clen-1}\Big\rfloor\Big)\;.\]
\VLongOnly{ Note that the code word for the period of a cycle will be shorter if the cycle has greater length (since there are more repetitions, the period cannot be as long).}

Next, knowing the cycle length and period as well as the sequence time span, \VLongOnly{the starting point $\Cto$ can take any value between $\tSstart(\seq)$ and $\tSend(\seq) - \tspan{\cycle} = \tSend(\seq) -\sumel{\Csc} - (\Clen-1)\Cprd$. Hence, }we can specify the value of the starting point with a code word of length 
\[\cl(\Cto) = \log(\tspan{\seq} - \sumel{\Csc} - (\Clen-1)\Cprd + 1)\;.\] 
\VLongOnly{Note that if the cycle spans a larger part of the sequence, the range of the starting point is more restricted, and so it can be represented with a shorter code word.}

Finally, we encode the shift corrections as follows: each correction $e$ is represented by $\abs{e}$ ones, prefixed by a single bit to indicate the direction of the shift, with each correction separated from the previous one by a zero. 
For instance,  $\Csc =\LL{3, -2, 0, 4}$ would be encoded as 
$\signDG{0}\valDG{111}
\sepDG{0}\signDG{1}\valDG{11}
\sepDG{0}\signDG{0}
\sepDG{0}\signDG{0}\valDG{1111}\sepDG{0}$
with value digits, separating digits and sign digits, in italics, bold and normal font, respectively (the sign bit for zero is arbitrarily set to $0$ in this case).
As a result, the code length for a sequence of shift corrections $\Csc$ is
\custmath{\cl(\Csc) =  2\abs{\Csc} + \sum_{e \in \Csc} \abs{e}}{.}

\VLongOnly{\medskip}

Putting everything together, we can write the cost of a cycle $\cycle$ as
\begin{align*}
\cl(\cycle) \VLongOnly{=& \cl(\Cev) + \cl(\Clen) + \cl(\Cprd) + \cl(\Cto) +\cl(\Csc) \\}
=& \log(\len{\seq}) + \log\big(\big\lfloor\frac{\tspan{\seq}-\sumel{\Csc}}{\Clen-1}\big\rfloor\big) \\
&+ \log(\tspan{\seq} - \sumel{\Csc} - (\Clen-1)\Cprd + 1) \VLongOnly{\\
&}+ 2\abs{\Csc} + \sum_{e \in \Csc} \abs{e}
   \;.
\end{align*}

On the other hand, the cost of an individual occurrence $o = (t, \alpha)$ is simply the sum of the cost of the corresponding timestamp and event: 
\[\cl(o) = \cl(t)+\cl(\alpha) = \log(\tspan{\seq}+1) -\log(\custfrac{\len{\seq[\alpha]}}{\len{\seq}})\;.\]

Note that if our goal was to actually encode the input sequence, we would need to transmit the smallest and largest timestamps ($\tSstart(\seq)$ and $\tSend(\seq)$), the size of the event alphabet ($\abs{\ABC}$), as well as  the number of occurrences of each event ($\len{\seq[\alpha]}$ for each event $\alpha$) of the event sequence. We should also transmit the number of cycles in the collection ($\abs{\ccycle}$), which can be done, for instance with a code word of length $\log(\len{\seq})$. 
However, since our goal is to compare collections of cycles, we can simply ignore this, as it represents a fixed cost that remains constant for any chosen collection of cycles.

Finally, consider that we are given an ordered list of occurrences $\LL{t_1, t_2, \dots , t_l}$ of event $\alpha$, and we want to determine the best cycle with which to cover all these occurrences at once. Some of the parameters of the cycle are determined, namely the repeating event $\Cev$, the length $\Clen$, and the timestamp of the first occurrence~$\Cto$.
All we need to determine is the period $p$ that yields the shortest code length for the cycle. In particular, we want to find $p$ that minimises $\cl(\Csc)$.
The shift corrections are such that 
$\Csc_{k} = (t_{k+1} - t_{k}) - p$\VLongOnly{ (cf.\ the definition of a cycle's cover)}.
If we consider the list of inter-occurrence distances $d_1=t_{2} - t_{1}, d_2=t_{3} - t_{2}, \dots, d_{l-1} = t_{l} - t_{l-1}$, the problem of finding $p$ that minimises $\cl(\Csc)$ boils down to minimising 
$\sum_{d_i} \abs{d_i - p}.$ This is achieved by letting $p$ equal the geometric median of the inter-occurrence distances, which, in the one-dimensional case, is simply the median. 
Hence, for this choice of encoding for the shift corrections, the optimal cycle covering a list of occurrences can be determined by simply computing the inter-occurrences distances and taking their median as the cycle period.

\section{Defining Tree Patterns}
\label{sec:problem_complex}

So far, our pattern language is restricted to cycles over single events. In practise, however, several events might recur regularly together and repetitions might be nested with several levels of periodicity. To handle such cases, we now introduce a more expressive pattern language, that consists of a hierarchy of cyclic blocks, organised as a tree. 

Instead of considering simple cycles specified as 5-tuples $\cycle = (\Cev, \Clen, \Cprd, \Cto, \Csc)$ we consider more general patterns specified as triples $\patt\label{sym:patt} = (\Ptree, \Cto, \Csc)$, where $\Ptree$\label{sym:ptree} denotes the tree representing the hierarchy of cyclic blocks, while $\Cto$ and $\Csc$ respectively denote the starting point and shift corrections of the pattern, as with cycles. 

\mpara{Pattern trees.}
Each \emph{leaf node} in a pattern tree represents a simple block containing one event. Each \emph{intermediate node} represents a cycle in which the children nodes repeat at a fixed time interval. In other words, each intermediate node represents cyclic repetitions of a sequence of blocks.
The root of a pattern tree is denoted as $\Pblock_0$\label{sym:Pblock}. Using list indices, we denote the children of a node $\Pblock_{\Bid}$ as $\Pblock_{\Bid{}1}$, $\Pblock_{\Bid{}2}$, etc. \VLongOnly{We denote the ordered list of the children of node $\Pblock_{\Bid}$ as $\child{\Pblock_{\Bid}}$\label{sym:child}, that is, 
\custmath{\child{\Pblock_{\Bid}} = \LL{\Pblock_{\Bid{}1}, \Pblock_{\Bid{}2}, \dots}}{.}}
All children of an intermediate node except the \VLongOnly{left-most child}\VShortOnly{first one} are associated to their distance to the preceding child, called the \emph{inter-block distance}\VShortOnly{, denoted as $d_{\Bid{}}$ for node $\Pblock_{\Bid{}}$.}\VLongOnly{. This distance for node $\Pblock_{\Bid{}i}$ is denoted as $d_{\Bid{}i}$\label{sym:interBd}, i.e.\ $d_{\Bid{}i}$ represents the time that separates occurrences of node $\Pblock_{\Bid{}(i-1)}$ and node $\Pblock_{\Bid{}i}$.} Inter-block distances take non-negative integer values. 
Each intermediate node $\Pblock_{\Bid}$ is associated with the period $\Cprd_{\Bid}$ and length $\Clen_{\Bid}$ of the corresponding cycle.
Each leaf node $\Pblock_{Y}$ is associated with the corresponding occurring event $\alpha_{Y}$.

An example of an abstract pattern tree is shown in Fig.~\ref{fig:ex_tree_abs}. 
\VLongOnly{Some concrete pattern trees that we will use as examples are shown in Fig.~\ref{fig:ex_tree1a}--\ref{fig:ex_tree2}.}
We call \emph{height} and \emph{width} of the pattern tree---and by extension of the associated pattern---respectively the number of edges along the longest branch from the root to a leaf node and the number of leaf nodes in the tree.

\VLongOnly{
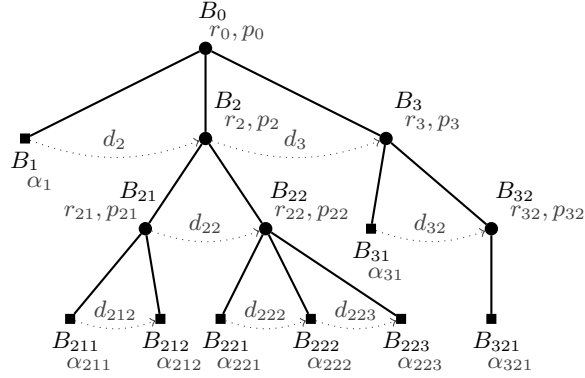
\begin{figure}
\centering
\begin{tikzpicture}[-,auto,node distance=1.2cm, thick]
  \node[main node] (b0) {};
  \node[main node] (b2) [below of=b0] {};
  \node[leaf node] (b1) [left of=b2, node distance=2.4cm] {};
  \node[main node] (b3) [right of=b2, node distance=2.4cm] {};

  \node[ghost node] (b2x) [below of=b2] {};
  \node[main node] (b21) [left of=b2x, node distance=.8cm] {};
  \node[main node] (b22) [right of=b2x, node distance=.8cm] {};

  \node[ghost node] (b3x) [below of=b3, xshift=0.6cm] {};
  \node[leaf node] (b31) [left of=b3x, node distance=.8cm] {};
  \node[main node] (b32) [right of=b3x, node distance=.8cm] {};

  \node[leaf node] (b212) [below of=b21, xshift=.2cm] {};
  \node[leaf node] (b211) [left of=b212, node distance=1.2cm] {};

  \node[leaf node] (b222) [below of=b22, xshift=.6cm] {};
  \node[leaf node] (b221) [left of=b222, node distance=1.2cm] {};
  \node[leaf node] (b223) [right of=b222, node distance=1.2cm] {};

  \node[leaf node] (b321) [below of=b32] {};

%%%% labels
  \node[label node] (l0) [above of=b0, xshift=-0.2cm] {\BlockMark{0}};
  \node[label node] (l2) [above of=b2] {\BlockMark{2}};
  \node[label node] (l3) [above of=b3] {\BlockMark{3}};
  \node[label node] (l21) [above of=b21, xshift=-.4cm] {\BlockMark{21}};
  \node[label node] (l22) [above of=b22] {\BlockMark{22}};
  \node[label node] (l32) [above of=b32] {\BlockMark{32}};

  \node[lterm node] (l1) [below of=b1] {\BlockMark{1}};
 
  \node[lterm node] (l31) [below of=b31] {\BlockMark{31}};

  \node[lterm node] (l212) [below of=b212] {\BlockMark{212}};
  \node[lterm node] (l211) [below of=b211] {\BlockMark{211}};

  \node[lterm node] (l222) [below of=b222] {\BlockMark{222}};
  \node[lterm node] (l221) [below of=b221] {\BlockMark{221}};
  \node[lterm node] (l223) [below of=b223] {\BlockMark{223}};

  \node[lterm node] (l321) [below of=b321] {\BlockMark{321}};

%%%% properties
  \node[prop node] (p0) [below of=l0] {$r_0, p_0$};
  \node[prop node] (p2) [below of=l2] {$r_2, p_2$};
  \node[prop node] (p3) [below of=l3] {$r_3, p_3$};
  \node[prop node] (p21) [below of=l21, anchor=east, xshift=-0.6cm] 
{$r_{21}, p_{21}$};
  \node[prop node] (p22) [below of=l22, anchor=east, xshift=0.2cm] 
{$r_{22}, p_{22}$};
  \node[prop node] (p32) [below of=l32] {$r_{32}, p_{32}$};

  \node[pterm node] (p1) [below of=l1] {$\alpha_1$};
  \node[pterm node] (p31) [below of=l31] {$\alpha_{31}$};

  \node[pterm node] (p212) [below of=l212] {$\alpha_{212}$};
  \node[pterm node] (p211) [below of=l211] {$\alpha_{211}$};

  \node[pterm node] (p222) [below of=l222] {$\alpha_{222}$};
  \node[pterm node] (p221) [below of=l221] {$\alpha_{221}$};
  \node[pterm node] (p223) [below of=l223] {$\alpha_{223}$};
  \node[pterm node] (p321) [below of=l321] {$\alpha_{321}$};

  \path
    (b0) edge (b1)
    (b0) edge (b2)
    (b0) edge (b3)
    (b2) edge (b21)
    (b2) edge (b22)
    (b3) edge (b31)
    (b3) edge (b32)
    (b21) edge (b211)
    (b21) edge (b212)
    (b22) edge (b221)
    (b22) edge (b222)
    (b22) edge (b223)
    (b32) edge (b321);

\path[dotted, thin, ->, bend right=20, color=darkgray]
    (b1) edge node[above] {$d_{2}$} (b2)
    (b2) edge node[above] {$d_{3}$} (b3)
    (b21) edge node[above] {$d_{22}$} (b22)
    (b31) edge node[above] {$d_{32}$} (b32)
    (b211) edge node[above] {$d_{212}$} (b212)
    (b221) edge node[above] {$d_{222}$} (b222)
    (b222) edge node[above] {$d_{223}$} (b223);

\end{tikzpicture}
\caption{Abstract pattern tree.}
\label{fig:ex_tree_abs}
\end{figure}
}

For a given pattern, we can construct a tree of event occurrences by expanding the pattern tree recursively, that is, by appending to each intermediate node the corresponding number of copies of the associated subtree, recursively.
We call this expanded tree the \emph{expansion tree} of the pattern, as opposed to the contracted \emph{pattern tree} that more concisely represents the pattern.

\VLongOnly{When a pattern tree is expanded, several copies of a node can be generated as a result of repetitions in possibly nested cycles. Each node in an expansion is identified with a pair $(n,L)$, where $n$ is the node of the pattern tree that generated the expansion node, and $L$ is a list indicating the specific combination of repetitions of ancestors that produced it.

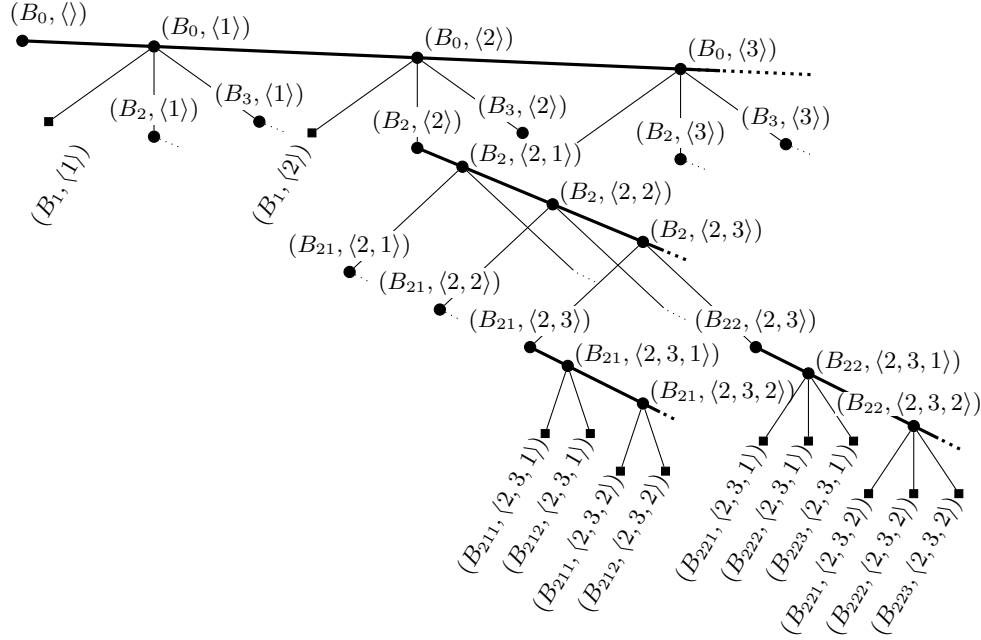
\begin{figure*}
\centering
\newcommand{\ndStylA}{main node}
\newcommand{\ndStylB}{main node}
\newcommand{\ndStylXA}{main node}
\newcommand{\ndStylXB}{main node}
\newcommand{\ndStylC}{main node}
\newcommand{\xAslant}{1.}
\newcommand{\yAslant}{-.5}
\newcommand{\xBslant}{1.4}
\newcommand{\yBslant}{-.7}
\renewcommand{\xslant}{1.2}
\renewcommand{\yslant}{-.5}
\renewcommand{\Xslant}{3.5}
\renewcommand{\Yslant}{-.15}
\newcommand{\nbRepsA}{3}
\newcommand{\nbRepsB}{3}
\newcommand{\nbRepsCA}{2}
\newcommand{\nbRepsCB}{2}
\begin{tikzpicture}[-,auto,node distance=1.2cm]
  \path[rep edge] (0,0) edge ({((\nbRepsA+.15)-\rootf)*\Xslant},{((\nbRepsA+.15)-\rootf)*\Yslant});
  \path[rep edge, dotted] 
 ({(\nbRepsA-\rootf)*\Xslant},{(\nbRepsA-\rootf)*\Yslant}) edge ({((\nbRepsA+.5)-\rootf)*\Xslant},{((\nbRepsA+.5)-\rootf)*\Yslant});

  \node[main node] (R00) at (0,0) {};
  \node[lmain node, xshift=10pt] (R00l) [above of=R00] {\OccMark{$(\Pblock_0, \llempty)$}};
  \foreach \j in {1,...,\nbRepsA}{

    \renewcommand{\ndStylA}{leaf node}
    \renewcommand{\ndStylB}{main node}
    \renewcommand{\ndStylC}{main node}
    \ifthenelse{\j=3}{
      \renewcommand{\ndStylA}{ghost node}
    }{}

    \node[main node] (R{\j}b0) at ({(\j-\rootf)*\Xslant},{(\j-\rootf)*\Yslant}) {};
    \node[\ndStylB] (R{\j}b2) [below of=R{\j}b0] {};
    \node[\ndStylA] (R{\j}b1) [left of=R{\j}b2, node distance=1.4cm, yshift=.2cm] {};
    \node[\ndStylC] (R{\j}b3) [right of=R{\j}b2, node distance=1.4cm, yshift=.2cm] {};

    \path
    (R{\j}b0) edge (R{\j}b1)
    (R{\j}b0) edge (R{\j}b2)
    (R{\j}b0) edge (R{\j}b3);

      \path[dotted]
        (R{\j}b2) edge ([shift={({.3*\xslant},{.3*\yslant})}]R{\j}b2);
    \ifthenelse{\j=2}{}{
      \path[dotted]
        (R{\j}b3) edge ([shift={({.3*\xslant},{.3*\yslant})}]R{\j}b3);
    }

    \node[li node, yshift=.1cm] (R{\j}l0) at (R{\j}b0)  {\OccMark{$(\Pblock_0, \LL{\j})$}};
    \ifthenelse{\j=1}{
      \node[l\ndStylA, xshift=4pt, yshift=-16pt] (R{\j}l1) [above of=R{\j}b1] {\OccMark{$(\Pblock_1, \LL{\j})$}};
    }{\node[l\ndStylA] (R{\j}l1) [above of=R{\j}b1] {\OccMark{$(\Pblock_1, \LL{\j})$}};}
    \node[l\ndStylB] (R{\j}l2) [above of=R{\j}b2] {\OccMark{$(\Pblock_2, \LL{\j})$}};
    \node[l\ndStylC] (R{\j}l3) [above of=R{\j}b3] {\OccMark{$(\Pblock_3, \LL{\j})$}};

    \ifthenelse{\j=2}{
      \node[ghost node] (Ya) at (R{\j}b2) {};
      \node[ghost node] (Yb) at ([shift={({.1*\xslant},{.1*\yslant})}]R{\j}b2) {};
      \node[ghost node] (Yc) at ([shift={({.25*\xslant},{.25*\yslant})}]R{\j}b2) {};
      \node[ghost node] (Yd) at ([shift={({(\nbRepsB+.25-\rootf)*\xslant},{(\nbRepsB+.25-\rootf)*\yslant})}]R{\j}b2) {};
      \node[ghost node] (Ye) at ([shift={({(\nbRepsB+.5-\rootf)*\xslant},{(\nbRepsB+.5-\rootf)*\yslant})}]R{\j}b2) {};
      \path[rep edge] (Ya) edge (Yd);
      \path[rep edge, dotted] (Yd) edge (Ye);

      \foreach \k in {1,...,\nbRepsB}{
        \node[main node] (R{\j}{\k}b2) at ([shift={({(\k-\rootf)*\xslant},{(\k-\rootf)*\yslant})}]R{\j}b2) {};
        \node[li node] (R{\j}{\k}b2l) at (R{\j}{\k}b2)  {\OccMark{$(\Pblock_2, \LL{\j, \k})$}};

        \renewcommand{\ndStylXA}{main node}
        \ifthenelse{\k=3}{
          \renewcommand{\ndStylXB}{main node}
        }{
          \renewcommand{\ndStylXB}{ghost node}
        }

        \node[ghost node] (R{\j}{\k}b2x) [below of=R{\j}{\k}b2, node distance=1.4cm] {};
        \node[\ndStylXA] (R{\j}{\k}b21) [left of=R{\j}{\k}b2x, node distance=1.5cm] {};
        \node[\ndStylXB] (R{\j}{\k}b22) [right of=R{\j}{\k}b2x, node distance=1.5cm] {};

        \path
        (R{\j}{\k}b2) edge (R{\j}{\k}b21)
        (R{\j}{\k}b2) edge (R{\j}{\k}b22);

        \ifthenelse{\k=3}{}{
        \path[dotted]
        (R{\j}{\k}b21) edge ([shift={({.3*\xAslant},{.3*\yAslant})}]R{\j}{\k}b21)
        (R{\j}{\k}b22) edge ([shift={({.3*\xAslant},{.3*\yAslant})}]R{\j}{\k}b22);
        }

        \node[l\ndStylXA] (R{\j}{\k}l21) [above of=R{\j}{\k}b21] {\OccMark{$(\Pblock_{21}, \LL{\j, \k})$}};
        \node[l\ndStylXB] (R{\j}{\k}l22) [above of=R{\j}{\k}b22] {\OccMark{$(\Pblock_{22}, \LL{\j, \k})$}};

        \ifthenelse{\k=3}{

          \node[ghost node] (YAa) at (R{\j}{\k}b21) {};
          \node[ghost node] (YAb) at ([shift={({.1*\xAslant},{.1*\yAslant})}]R{\j}{\k}b21) {};
          \node[ghost node] (YAc) at ([shift={({.25*\xAslant},{.25*\yAslant})}]R{\j}{\k}b21) {};
          \node[ghost node] (YAd) at ([shift={({(\nbRepsCA+.25-\rootf)*\xAslant},{(\nbRepsCA+.25-\rootf)*\yAslant})}]R{\j}{\k}b21) {};
          \node[ghost node] (YAe) at ([shift={({(\nbRepsCA+.5-\rootf)*\xAslant},{(\nbRepsCA+.5-\rootf)*\yAslant})}]R{\j}{\k}b21) {};
          \path[rep edge] (YAa) edge (YAd);
          \path[rep edge, dotted] (YAd) edge (YAe);

          \foreach \l in {1,...,\nbRepsCA}{
            \node[main node] (R{\j}{\k}{\l}b21) at ([shift={({(\l-\rootf)*\xAslant},{(\l-\rootf)*\yAslant})}]R{\j}{\k}b21) {};

            \node[ghost node] (R{\j}{\k}{\l}b21x) [below of=R{\j}{\k}{\l}b21, node distance=.9cm] {};
            \node[leaf node] (R{\j}{\k}{\l}b212) [right of=R{\j}{\k}{\l}b21x, node distance=.3cm] {};
            \node[leaf node] (R{\j}{\k}{\l}b211) [left of=R{\j}{\k}{\l}b21x, node distance=.3cm] {};

            \node[li node] (R{\j}{\k}{\l}b21l) at (R{\j}{\k}{\l}b21)  {\OccMark{$(\Pblock_{21}, \LL{\j, \k, \l})$}};
            \node[le node] (R{\j}{\k}{\l}l211) [below of=R{\j}{\k}{\l}b211] {\OccMark{$(\Pblock_{211}, \LL{\j, \k, \l})$}};
            \node[le node] (R{\j}{\k}{\l}l212) [below of=R{\j}{\k}{\l}b212] {\OccMark{$(\Pblock_{212}, \LL{\j, \k, \l})$}};

            \path
            (R{\j}{\k}{\l}b21) edge (R{\j}{\k}{\l}b211)
            (R{\j}{\k}{\l}b21) edge (R{\j}{\k}{\l}b212);

            }

          \node[ghost node] (YBa) at (R{\j}{\k}b22) {};
          \node[ghost node] (YBb) at ([shift={({.1*\xBslant},{.1*\yBslant})}]R{\j}{\k}b22) {};
          \node[ghost node] (YBc) at ([shift={({.25*\xBslant},{.25*\yBslant})}]R{\j}{\k}b22) {};
          \node[ghost node] (YBd) at ([shift={({(\nbRepsCB+.25-\rootf)*\xBslant},{(\nbRepsCB+.25-\rootf)*\yBslant})}]R{\j}{\k}b22) {};
          \node[ghost node] (YBe) at ([shift={({(\nbRepsCB+.5-\rootf)*\xBslant},{(\nbRepsCB+.5-\rootf)*\yBslant})}]R{\j}{\k}b22) {};
          \path[rep edge] (YBa) edge (YBd);
          \path[rep edge, dotted] (YBd) edge (YBe);

          \foreach \m in {1,...,\nbRepsCB}{
            \node[main node] (R{\j}{\k}{\m}b22) at ([shift={({(\m-\rootf)*\xBslant},{(\m-\rootf)*\yBslant})}]R{\j}{\k}b22) {};

            \node[leaf node] (R{\j}{\k}{\m}b222) [below of=R{\j}{\k}{\m}b22, node distance=.9cm] {};
            \node[leaf node] (R{\j}{\k}{\m}b221) [left of=R{\j}{\k}{\m}b222, node distance=.6cm] {};
            \node[leaf node] (R{\j}{\k}{\m}b223) [right of=R{\j}{\k}{\m}b222, node distance=.6cm] {};

            \ifthenelse{\m=2}{
              \node[li node, yshift=4pt, xshift=-32pt] (R{\j}{\k}{\m}b22l) at (R{\j}{\k}{\m}b22)  {\OccMark{$(\Pblock_{22}, \LL{\j, \k, \m})$}};}{
              \node[li node] (R{\j}{\k}{\m}b22l) at (R{\j}{\k}{\m}b22)  {\OccMark{$(\Pblock_{22}, \LL{\j, \k, \m})$}};
            }

            \node[le node] (R{\j}{\k}{\m}l222) [below of=R{\j}{\k}{\m}b222] {\OccMark{$(\Pblock_{222}, \LL{\j, \k, \m})$}};
            \node[le node] (R{\j}{\k}{\m}l221) [below of=R{\j}{\k}{\m}b221] {\OccMark{$(\Pblock_{221}, \LL{\j, \k, \m})$}};
            \node[le node] (R{\j}{\k}{\m}l223) [below of=R{\j}{\k}{\m}b223] {\OccMark{$(\Pblock_{223}, \LL{\j, \k, \m})$}};

            \path
            (R{\j}{\k}{\m}b22) edge (R{\j}{\k}{\m}b221)
            (R{\j}{\k}{\m}b22) edge (R{\j}{\k}{\m}b222)
            (R{\j}{\k}{\m}b22) edge (R{\j}{\k}{\m}b223);

            }
        }{}%
      }      
    }{}%
  }
\end{tikzpicture}
\caption{Expansion of the pattern tree from Fig.~\ref{fig:ex_tree_abs}.}
\label{fig:ex_exp_abs}
\end{figure*}

The expansion tree of the pattern tree of Fig.~\ref{fig:ex_tree_abs} is shown in Fig.~\ref{fig:ex_exp_abs}. Node $(\Pblock_0, \llempty)$ is the root of the expansion tree, $(\Pblock_{0}, \LL{1})$ is the node generated as the first repetition of pattern node $\Pblock_{0}$, and $(\Pblock_{21}, \LL{2, 3})$ is the node generated from node $\Pblock_{21}$ in the third repetition of pattern node $\Pblock_{2}$ nested within the second repetition of pattern node $\Pblock_0$.

The notation used to identify nodes in pattern trees and expansion trees allows to easily navigate the trees.
In particular, the left-most leaf among the descendants of a given node $\Pblock_{\Bid}$ can be obtained by going down the left-most branch, looking at nodes $\Pblock_{\Bid1}$, $\Pblock_{\Bid11}$, etc.\ until reaching a leaf. We denote that node, the left-most leaf descendant of $\Pblock_{\Bid}$ as $\Lchild{\Pblock_{\Bid}}$\label{sym:Lchild}. 
Similarly, we denote as $\Lchild{(n, L)}$ the left-most leaf descendant of node $(n, L)$ in the expansion tree, which is such that $\Lchild{(n, L)} = (\Lchild{n}, L')$, where $L' = L \opconcat \LL{1,1\dots}$, that is, $L'$ is the list $L$ trailing with ones. That is, in addition to selecting always the left-most child, we always select the first repetition of a node when travelling the expansion tree until reaching a leaf. 
Note that $\Lchild{\Pblock_{\Bid}} = \Pblock_{\Bid}$ and $\Lchild{(\Pblock_{\Bid}, L)} = (\Pblock_{\Bid}, L)$ if $\Pblock_{\Bid}$ itself is a leaf node.

}\VLongOnly{We use the recursive notation $\BinfoRP{r_{\Bid}}{p_{\Bid}} \Bstart{}\activity{\Pblock_{\Bid1}}\,\BinfoD{d_{\Bid{}2}}\, \activity{\Pblock_{\Bid2}} \dots\Bend{}$ to represent a block $\Pblock_{\Bid}$.
With this notation, $\Ptree_1$ from Fig.~\ref{fig:ex_tree1a} is represented as  
\[\BinfoRP{4}{2} \Bstart{}\activity{a}\Bend{}\] 
and $\Ptree_7$ from Fig.~\ref{fig:ex_tree2} as  
\[\BinfoRP{3}{10} \Bstart{}\activity{b} \BinfoD{3} \BinfoRP{4}{1} \Bstart{}\activity{a}\Bend{} \BinfoD{1} \activity{c}\Bend{}\;.\] 

\mpara{Reconstructing a pattern's cover.}}
We can enumerate the event occurrences of a pattern by traversing its expansion tree and recording the encountered leaf nodes. \VLongOnly{The expansion tree is traversed in a depth-first left-to-right manner, first travelling through all children in a repetition of a block before moving on to the next repetition. }\VShortOnly{We denote as $\occsStar{\patt}$ this list of timestamp--event pairs reconstructed from the tree, prior to correction.}
\VLongOnly{For instance, the traversal of the expansion tree shown in Fig.~\ref{fig:ex_exp_abs}, starts from the root node $(\Pblock_0 , \llempty{})$ and first reaches $(\Pblock_0 , \LL{1})$. Then, children nodes $(\Pblock_1, \LL{1})$, $(\Pblock_2, \LL{1})$ and $(\Pblock_3, \LL{1})$, and their descendants, should be traversed before travelling to the next repetition of $\Pblock_0$, $(\Pblock_0 , \LL{2})$. Simply put, pattern edges (represented as thin lines in Fig.~\ref{fig:ex_exp_abs}) take priority over repetition edges (represented as thick lines).

We define the following recursive function:
\[ \mapOids{\Pblock_{\Bid}, l}\label{sym:mapOids} = \left\{ \begin{array}{l@{}l}
     \LL{\Pblock_{\Bid}, l} & \quad \mbox{if $\Pblock_{\Bid}$ is a leaf},\\[.5em]
     \fnconcat_{k \in [1..\Clen_{\Bid}]} \fnconcat_{\Pblock_{\Bid{}i} \in \child{\Pblock_{\Bid}}} & \mapOids{\Pblock_{\Bid{}i}, l \opconcat \LL{k-1}} \\  & \hfill \mbox{otherwise}.\end{array} \right. \] 
The list of leaf nodes encountered in the expansion tree during the traversal can be obtained as $\mapOids{\Ptree} = \mapOids{\Pblock_0, \llempty}$.

Using a similar recursive function, following the same traversal of the expansion tree, we can construct the perfect event occurrences. That is, we can recursively construct the list of uncorrected timestamps--events pairs produced by a pattern tree $\Ptree$, which we denote as $\occsStar{\Ptree} = \occsStar{\Pblock_0}$.

For this purpose, we first define a function $\shift{S}{t_s}$\label{sym:shift} that shifts a set of event occurrences $S$ by a specified value $t_s$, that is, 
\[ \shift{S}{t_s} = \{(t_i+t_s, \alpha_i), \quad \forall (t_i, \alpha_i) \in S\}.\]
For instance
\begin{align*}
 \opshift(&\LL{(2,c),(3,c),(6,a),(7,a)},-1) \\ =& \LL{(1,c),(2,c),(5,a),(6,a)}.
\end{align*}

Overloading the notation, we let $\occsStar{\Pblock_{\Bid}}$\label{sym:occStar} denote the list of occurrences associated with $\Pblock_{\Bid}$.
If $\Pblock_{\Bid}$ is a leaf, $\occsStar{\Pblock_{\Bid}}$ is a one-element list
\[ \occsStar{\Pblock_{\Bid}} = \LL{(0, \alpha_{\Bid})}.\]
If $\Pblock_{\Bid}$ is an intermediate node, we let $O(\Pblock_{\Bid})$ denote the concatenation of the lists of occurrences of its children, each one shifted by the accumulated inter-block distances:
\[
O(\Pblock_{\Bid}) = \fnconcat_{\Pblock_{\Bid{}i} \in \child{\Pblock_{\Bid}}} \shift{\occsStar{\Pblock_{\Bid{}i}}}{ \sum_{1 < j \leq i} d_{\Bid{}j} }\;.
\]
Then the list of occurrences is obtained by concatenating $\Clen_{\Bid}$ copies of $O(\Pblock_{\Bid})$, shifted according to the period $\Cprd_{\Bid}$:
\[
\occsStar{\Pblock_{\Bid}} = \fnconcat_{k \in [1..\Clen_{\Bid}]} \shift{ O(\Pblock_{\Bid}) }{ (k-1) \cdot \Cprd_{\Bid} }\;.
\]
Finally, if the starting point of pattern $\patt$ is $\tau$, we have $\occsStar{\patt} = \shift{\occsStar{\Ptree}}{\tau}$.

The occurrences appear in the list in the order in which they are generated during the expansion, which does not necessarily match the order of the timestamps.
More specifically, if the sequence of timestamps in $\occsStar{\Ptree}$ is not monotone, we say that the pattern tree $\Ptree$ (and the associated pattern $\patt$) is \emph{interleaved}.
If a pattern tree is not interleaved, all events constituting a repetition of a block must occur at latest when an event of the following repetition occurs. If several events occur at the same time, we say that the pattern tree has \emph{overlaps}.
For example, pattern trees $\Ptree_3$ and $\Ptree_4$ cover the same occurrences, but $\Ptree_4$ is interleaved while $\Ptree_3$ is not. Both patterns $\Ptree_6$ and $\Ptree_7$ have overlaps, but $\Ptree_7$ is interleaved while $\Ptree_6$ is not.

We denote as $o_i$ the $i^\text{th}$ event occurrence generated by $\Ptree$, and let $\occsStar{o_i}$ be the corresponding timestamp--event pair and $\mapOids{o_i}$ be the corresponding expansion leaf node, i.e.\ mapping $o_i$ to the elements at position $i$ in $\occsStar{\Ptree}$ and $\mapOids{\Ptree}$, respectively.}

As for the simple cycles, we will not only consider perfect patterns but will allow some variations. For this purpose, a list of shift corrections $\Csc$ is provided with the pattern, which contains a correction for each occurrence except the first one, i.e.\ $\abs{\Csc} = \abs{\occsStar{\patt}}-1$. \VLongOnly{

By applying the shift corrections in $\Csc$ to the perfect occurrences in $\occsStar{\patt}$, we can generate the list of corrected occurrences for pattern $\patt$, denoted as $\occs{\patt}$\label{sym:occs}.
The corrections are listed in $\Csc$ in the same order as the leaf nodes are encountered in the expansion tree. Therefore, the correction associated to occurrence $o_i$ is the element at position $i-1$ in $\Csc$, i.e.\ $\Csc[i-1]$, which we also denote as $\Csc(o_i)$ or $\Csc((n,L))$, where $(n,L)$ is the corresponding expansion node. For ease of notation we let $\Csc(o_1)=0$, since the left most occurrence $o_1$ has no correction.

}However, as for simple cycles, corrections accumulate over successive occurrences, and we cannot recover the list of corrected occurrences $\occs{\patt}$ by simply adding the individual corrections to the elements of $\occsStar{\patt}$.
Instead, we first have to compute the accumulated corrections for each occurrence.
\VLongOnly{In addition to its own correction, the corrections that should be applied to an occurrence come from the offsets of its left siblings in multi-events blocks and the offsets of previous repetitions in cycles the occurrence belongs to.

Algorithm~\ref{alg:coco} shows the procedure---named \algCoCo{}---that can be used to collect the occurrences whose individual corrections impact occurrence $o$ (recall that $\Lchild{}$ returns the left-most leaf descendant of a node). Then, the correction to be applied to the timestamp of $o$ is \[\cume{o}\label{sym:cume} = \Csc(o) + \sum_{o_k \in \algCoCo(o)} \Csc(o_k)\;.\]
The corrected occurrence timestamps can thus be reconstructed by shifting the perfect timestamp by the corresponding correction, i.e.\ $\occs{o_i} = \occsStar{o_i} +\cume{o_i}$.

\begin{algorithm}[tb]
\caption{\algCoCo: Collect occurrence corrections.}
\label{alg:coco}
\begin{algorithmic}[1]
\Require An occurrence $o$
\Ensure A set of occurrences whose corrections apply to $o$
\If{$o = (\Pblock_0, \LL{})$}\Comment{Root of pattern}
\State $\omega \gets \emptyset$
\EndIf
\If{$o = (\Pblock_{Xy}, Uv)$}
\State $\omega \gets \{ \Lchild{(\Pblock_{Xy'}, Uv)},\, y' < y \}$ 
\Comment{Left-siblings}
\State $\omega \gets \omega \,\cup\,\, \{ \Lchild{(\Pblock_{X}, Uv')}, v' < v \}$
\Comment{Previous repetitions}
\State $\omega \gets \omega \,\cup\, $\algCoCo$((\Pblock_{X}, U))$ 
\Comment{Recurse for parent}
\EndIf
\State \textbf{return} $\omega$
\end{algorithmic}
\end{algorithm}
}

\VShortOnly{
\begin{figure}
\centering
\setlength{\ndhlen}{1.cm}
\setlength{\ndwlen}{.8cm}

\newcommand{\treeSmallH}[3]{
  \node[main node] (b1#1) at (#2) {};
  \node[leaf node] (b2#1) [below of=b1#1] {};
  %%%% labels
  \node[pterm node, xshift=-.2cm] (p2#1) [below of=b2#1] {$#3$};
  \path
    (b1#1) edge (b2#1);
}
\newcommand{\treeSmallCH}[2]{
  \node[main node] (b2#1) at (#2) {};
  \node[ghost node] (b2x#1) [below of=b2#1] {};
  \node[leaf node] (b21#1) [left of=b2x#1, node distance=\ndwlen] {};
  \node[leaf node] (b22#1) [right of=b2x#1, node distance=\ndwlen] {};
  %%%% labels
  \node[pterm node, xshift=-.2cm] (p21#1) [below of=b21#1] {$a$};
  \node[pterm node, xshift=-.2cm] (p22#1) [below of=b22#1] {$b$};
  \path
    (b2#1) edge (b21#1)
    (b2#1) edge (b22#1);
}

\newcommand{\treeSmallV}[2]{
  \node[ghost node] (b0#1) at (#2) {};
  \node[main node] (b2#1) [below of=b0#1] {};
  \node[ghost node] (b2x#1) [below of=b2#1] {};
  \node[leaf node] (b21#1) [left of=b2x#1, node distance=\ndwlen] {};
  \node[leaf node] (b22#1) [right of=b2x#1, node distance=\ndwlen] {};
  %%%% labels
  \node[pterm node, xshift=-.2cm] (p21#1) [below of=b21#1] {$c$};
  \node[pterm node, xshift=-.2cm] (p22#1) [below of=b22#1] {$e$};
  \path
    (b2#1) edge (b21#1)
    (b2#1) edge (b22#1);
}
\newcommand{\treeSmallCV}[2]{
  \node[main node] (b0#1) at (#2) {};
  \node[main node] (b2#1) [below of=b0#1] {};
  \node[ghost node] (b2x#1) [below of=b2#1] {};
  \node[leaf node] (b21#1) [left of=b2x#1, node distance=\ndwlen] {};
  \node[leaf node] (b22#1) [right of=b2x#1, node distance=\ndwlen] {};
  %%%% labels
  \node[pterm node, xshift=-.2cm] (p21#1) [below of=b21#1] {$c$};
  \node[pterm node, xshift=-.2cm] (p22#1) [below of=b22#1] {$e$};
  \path
    (b0#1) edge (b2#1)
    (b2#1) edge (b21#1)
    (b2#1) edge (b22#1);
}

\begin{tikzpicture}[-,auto,node distance=\ndhlen, thick]
  \node[main node] (b0) at (-3.2,-.2) {};
  \node[ghost node] (bx) [below of=b0] {};
  \node[main node] (b2) [left of=bx, node distance=\ndwlen] {};
  \node[leaf node] (b1) [right of=bx, node distance=\ndwlen] {};

  \node[ghost node] (b2x) [below of=b2] {};
  \node[leaf node] (b21) [left of=b2x, node distance=\ndwlen] {};
  \node[leaf node] (b22) [right of=b2x, node distance=\ndwlen] {};

%%%% labels
  \node[label node] (l0) [above left of=b0, xshift=-.4cm] {\BlockMark{0}};
  \node[label node] (l2) [above left of=b2, xshift=-.4cm] {\BlockMark{1}};

  \node[lterm node] (l21) [below of=b21] {\BlockMark{11}};
  \node[lterm node] (l22) [below of=b22] {\BlockMark{12}};

  \node[lterm node] (l1) [below of=b1] {\BlockMark{2}};
 
%%%% properties
  \node[prop node] (p0) [below of=l0, xshift=-.5cm] {$r_0, p_0$};
  \node[prop node] (p2) [below of=l2, xshift=-.5cm] {$r_1, p_1$};
  \node[pterm node] (p21) [below of=l21] {$\alpha_{11}$};
  \node[pterm node] (p22) [below of=l22] {$\alpha_{12}$};

  \node[pterm node] (p1) [below of=l1] {$\alpha_2$};

  \path
    (b0) edge (b1)
    (b0) edge (b2)
    (b2) edge (b21)
    (b2) edge (b22);

\path[dotted, thin, ->, bend right=20, color=darkgray]
    (b2) edge node[above] {$d_{2}$} (b1)
    (b21) edge node[above] {$d_{12}$} (b22);

\setlength{\ndhlen}{.5cm}
\setlength{\ndwlen}{.4cm}

\node[anchor=east] at (2.2,-.1) {a) $\algGrowH$:};
\node[anchor=east] at (2.2,-1.5) {b) $\algGrowV$:};
\treeSmallV{a}{0,-1.5}
\treeSmallV{b}{1.2,-1.5}
\node at (2.5,-2.3) {$\dots$};
\treeSmallV{d}{3.8,-1.5}
\node at (5.1,-2.3) {$\longrightarrow$};
\treeSmallCV{x}{6.5,-1.5}

\treeSmallH{f}{2.6,.2}{a}
\treeSmallH{g}{3.8,.2}{b}
\node at (5.1,-.1) {$\longrightarrow$};
\treeSmallCH{y}{6.5,.2}

\end{tikzpicture}
\caption{Abstract pattern tree and examples of growing patterns through combinations.}
\label{fig:ex_tree_abs}
\label{fig:ex_tree_grow}
\end{figure}
}
  
\mpara{Encoding the patterns.}
To transmit a pattern, we need to encode its pattern tree, as well as its starting point and shift corrections. 
Furthermore, to encode the pattern tree, we consider separately its event sequence, its cycle lengths, its top-level period, and the other values, as explained below.

First we encode the event in the leaves of the pattern tree, traversing the tree from left to right, depth-first\VLongOnly{, enclosing blocks between parenthesis}. \VLongOnly{The string representing the events in the pattern tree is defined recursively as follows:
\[ \evtseqfun{\Pblock_{\Bid}} = \left\{ \begin{array}{ll}
      \text{`}\Cev_{\Bid}\text{'} & \mbox{if $\Pblock_{\Bid}$ is a leaf},\\
     \text{`('} \opconcat \big( \fnconcat_{\Pblock_Y \in \child{\Pblock_{\Bid}}} \evtseqfun{\Pblock_{Y}} \big) \opconcat \text{`)'} & \mbox{otherwise}.\end{array} \right. \]}
We denote as $\evtseq$\label{sym:evtseq} the string\VLongOnly{ $\evtseqfun{\Pblock_0}$ for the top-level block of the tree of a pattern,} representing its event sequence.
We encode each symbol $s$ in the string $\evtseq$ using a code of length $\cl(s)$, where $\cl(s)$ depends on the frequency of $s$, adjusted to take into account the additional symbols `(' and `)', used to delimit blocks.
\VLongOnly{In particular, we set the code length for the extended alphabet as
\[\cl(\text{`('})=\cl(\text{`)'}) = -\log(\frac{1}{3})\]
for the block delimiters, and 
\[\cl(\text{`)'}) = -\log\big(\frac{\len{\seq[\Cev]}}{3\len{\seq}}\big)\]
for the original events.}
\VShortOnly{ In particular, we set the code lengths for the extended alphabet such that
$\cl(\text{`('})=\cl(\text{`)'}) = -\log(1/3)$
for the block delimiters, and 
$\cl(\alpha) = -\log(\len{\seq[\Cev]}/(3\len{\seq}))$
for the original events.}

Next, we encode the cycle lengths, i.e.\ the values $\Clen_{\Bid}$ associated to each intermediate node $\Pblock_{\Bid}$ encountered while traversing the tree depth-first and from left to right, as a sequence of values, and denote this sequence $\lens$. 
For a block $\Pblock_{\Bid}$ the number of repetitions of the block cannot be larger than the number of occurrences of the least frequent event participating in the block\VShortOnly{, denoted as $\lensfun{\Pblock_{\Bid}}$}. \VLongOnly{Formally, the cycle length $\Clen_{\Bid}$ of a block $\Pblock_{\Bid}$, can take at most a value $\lensfun{\Pblock_{\Bid}}$ defined recursively as follows:
\[ \lensfun{\Pblock_{\Bid}} = \left\{ \begin{array}{ll}
      \abs{\seq[\Cev_{\Bid}]} & \mbox{if $\Pblock_{\Bid}$ is a leaf},\\
     \min_{\Pblock_Y \in \child{\Pblock_{\Bid}}} \lensfun{\Pblock_Y} & \mbox{otherwise},\end{array} \right. \] }
We can thus encode the sequence of cycle lengths $\lens$ with code of length
\[\cl(\lens) = \sum_{\Clen_\Bid \in \lens} \cl(\Clen_\Bid) = \sum_{\Clen_\Bid \in \lens}  \log\big(\lensfun{\Pblock_{\Bid}}\big)\;.\]

Knowing the cycle lengths $\lens$ and the structure of the pattern tree from its event sequence $\evtseq$, we can deduce the total number of events covered by the pattern\VShortOnly{.}\VLongOnly{, $N(\Pblock_0)$, using the following formula
\[ N(\Pblock_{\Bid}) = \left\{ \begin{array}{ll}
     1 & \mbox{if $\Pblock_{\Bid}$ is a leaf},\\
     \Clen_{\Bid} \cdot \sum_{\Pblock_Y \in \child{\Pblock_{\Bid}}} N(\Pblock_Y) & \mbox{otherwise}.\end{array} \right. \] 

} The shift corrections for the pattern consist of the correction to each event occurrence except the first one\VLongOnly{ (assumed not to require correction)}. This ordered list of\VLongOnly{ $N(\Pblock_{0})-1$} values can be transmitted using the same encoding as for the simple cycles.  

In simple cycles, we had a unique period characterising the distances between occurrences.
Instead, with these more complex patterns, we have a period $\Cprd_{\Bid}$ for each intermediate node $\Pblock_{\Bid}$, as well as an inter-block distance $d_{\Bid}$ for each node $\Pblock_{\Bid}$ that is not the left-most child of its parent.

First, we transmit the period of the root node of the pattern tree, $\Pblock_0$. In a similar way as with simple cycles, we can deduce the largest possible value for $\Cprd_0$ from $\Clen_0$ and $\Csc$. Since we do not know when the events within the main cycle occur, we assume what would lead to the largest possible value for $\Cprd_0$, that is, we assume that all the events within each repetition of the cycle happen at once, so that each repetition spans no time at all.
\VLongOnly{The corrections that must be taken into account are those applying to the left-most leaf of each repetition of the main cycle. 
These are exactly the corrections accumulated in $\cume{o_{za}}$ where $o_{za}$ is the first occurrence of the last repetition of the main cycle, i.e.\ $o_{za} = \Lchild{(\Pblock_0, \LL{\Clen_0})}.$

Thus we have
\[\cl(\Cprd_0) = \log\big(\Big\lfloor\frac{\tspan{\seq}-\cume{o_{za}}}{\Clen_0-1}\Big\rfloor\big)\;.\]

Once the main period is known, we can use the same principle as for simple cycles to transmit the starting point and we have
\[\cl(\Cto) = \log(\tspan{\seq} - \cume{o_{za}} - (\Clen_0-1)\Cprd_0 + 1)\;.\] 

We denote as $\tspanStar{\Pblock_\Bid{}}$\label{sym:tspanStar} the time spanned by the entire cycle of block $\Pblock_\Bid{}$, that is, the time spanned by the $\Clen_\Bid{}$ repetitions of the block. We denote as $\tspanRepStar{\Pblock_\Bid{}}$\label{sym:tspanRepStar} the time spanned by a single repetition of the block. Note that here we consider the perfect occurrences of the block, before applying the corrections. In this case all repetitions span the same time, which might no longer be true after correction.
In Fig.~\ref{fig:ex_timelinesP8} we provide a timeline schema of the first occurrences of pattern $(\Ptree_8, 0, \nullC)$, i.e.\ the pattern consisting of the pattern tree $\Ptree_8$ from Fig.~\ref{fig:ex_tree2}, with starting point $0$ and no shift corrections. We indicate the time spanned by different blocks and their maximum value assuming interleaving is not allowed.

\begin{figure}[tbp]
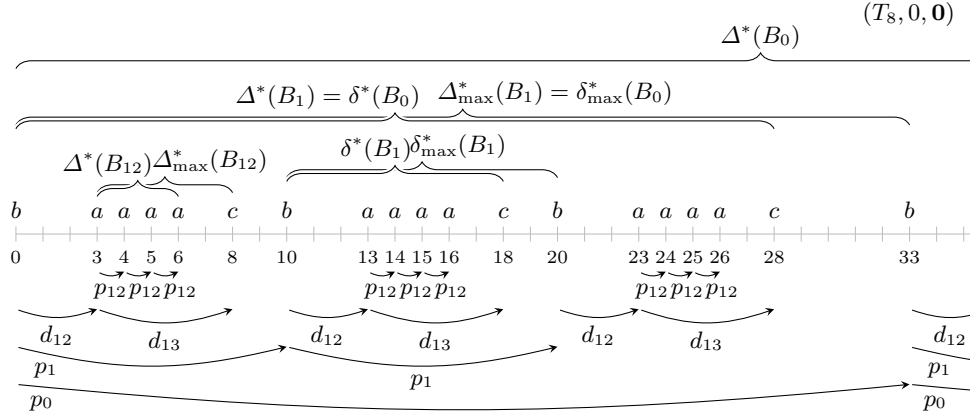

\centering
\renewcommand{\xsctm}{.36}
\timelinePEight{$(\Ptree_8, 0, \nullC)$}
\caption{Pattern $(\Ptree_8, 0, \nullC)$ partially shown on timeline (maximum time spans assume interleaving is not allowed).}
\label{fig:ex_timelinesP8}
\end{figure}

Suppose we know $\tspanStar{\Pblock_\Bid{}}$. Then, in order for $\Clen_\Bid{}$ repetitions (equally long, but potentially spanning no time at all) to happen within time $\tspanStar{\Pblock_\Bid{}}$, $\Cprd_\Bid{}$ must satisfy $\Cprd_\Bid{} \leq \lfloor\tspanStar{\Pblock_\Bid{}}/(\Clen_\Bid{}-1)\rfloor$ and can therefore be represented with a code word of length
\[\cl(\Cprd_\Bid{}) = \log\Big(\Big\lfloor\frac{\tspanStar{\Pblock_\Bid{}}}{\Clen_\Bid{}-1}\Big\rfloor\big)\;.\]

If we do not allow interleaving, each repetition can span at most $\lfloor\tspanStar{\Pblock_\Bid{}}/\Clen_\Bid{}\rfloor$, and also no longer than $\Cprd_\Bid{}$. 
On the other hand, if we do allow interleaving, each repetition can have a time span of at most $\tspanStar{\Pblock_\Bid{}} - \Clen_\Bid{} +1$.
Thus, the maximum time span of a repetition is
\[ \maxtspanRepStar{\Pblock_{\Bid}}\label{sym:maxtspanRepStar} = \left\{ \begin{array}{l}
     \tspanStar{\Pblock_\Bid{}} - \Clen_\Bid{} +1 \\ \hspace{1.5cm} \mbox{if interleaving is allowed},\\
     \min(\Cprd_\Bid{}, \lfloor\tspanStar{\Pblock_\Bid{}}/\Clen_\Bid{}\rfloor) \quad \mbox{otherwise}.\end{array} \right. \] 

Obviously, the sum of the distances between the children of the block cannot be larger than the time span of a repetition. Therefore, we can represent the distances between the children of $\Pblock_\Bid{}$ with code words such that 
\[\sum_{\Pblock_{\Bid{}i} \in \child{\Pblock_{\Bid}}, i > 1} \cl(d_{\Bid{}i}) = (\abs{\child{\Pblock_\Bid{}}}-1) \cdot \log\big(\maxtspanRepStar{\Pblock_{\Bid}} + 1\big)\;.\]

We can then determine the maximum span of each child of a block.
If interleaving is allowed, the child can span as much time as is left in the time span of its parent after accounting for the distances of the left siblings:
\[ \maxtspanStar{\Pblock_{\Bid{}i}}\label{sym:maxtspanStar} = \maxtspanRepStar{\Pblock_\Bid{}} - \sum_{1 \leq j \leq i} \interd{\Bid{}j}. \] 
Alternatively, if interleaving is not allowed, all events of the child must occur before the first event of the next sibling:
\[ \maxtspanStar{\Pblock_{\Bid{}i}} = \left\{ \begin{array}{l@{}l}
     \maxtspanRepStar{\Pblock_\Bid{}} & - \sum_{j \neq i} \interd{\Bid{}j} \\ & \quad \mbox{if $\Pblock_{\Bid{}i}$ is the right-most child},\\
     \interd{\Bid{}(i+1)} & \mbox{otherwise}.\end{array} \right. \] 
Note that $\interd{\Bid{}(i+1)}$ is not defined if $\Pblock_{\Bid{}i}$ is the right-most child of the block.

Applying the formulas above recursively allows to compute the length of the code words needed to represent all the periods and inter-block distances in the tree, for a known value $\tspanRepStar{\Pblock_0}$.

Looking at the last occurrence of the main cycle $(\Pblock_0, \LL{\Clen_0})$, we have
\[\Cto + (\Clen_0-1)\Cprd_0 + \tspanRepStar{\Pblock_0} + \cume{o_{zz}} \leq \tSend(\seq)\;,\]
and hence
\[\maxtspanRepStar{\Pblock_0} = \tSend(\seq) - \cume{o_{zz}} - (\Clen_0-1)\Cprd_0 - \Cto \;,\]
where $\cume{o_{zz}}$ denotes the accumulated corrections that apply to the event having the largest uncorrected timestamp.

If interleaving is not allowed, that event is the right-most leaf node of the expansion tree, i.e.\ the last element in the occurrence list.  Besides, if interleaving is not allowed, we also have $\tspanRepStar{\Pblock_0} \leq \Cprd_0$.

On the other hand, if interleaving is allowed the event having the largest uncorrected timestamp is not necessarily the last one in the list of occurrences (see $\occsStar{\Ptree_6}$ in Fig.~\ref{fig:ex_tree1b} for instance). Since it depends on periods and inter-block distances within the block, which have not been specified at that point, we cannot determine which event has the largest timestamp. 
Hence, we compute $\cume{o_{i}}$ for all occurrences $o_i$ that correspond to the right most child of a block and take the minimum (possibly a negative value) as $\cume{o_{zz}}$.

To compute the periods and inter-block distances, we can use the actual value $\tspanRepStar{\Pblock_0}$, which we first need to transmit explicitly after the value of $\Cto$, with a code word of length 
$\log\big( \maxtspanRepStar{\Pblock_0} + 1\big)$.
Instead, we could use the upper-bound on $\maxtspanRepStar{\Pblock_0}$, which we do not need to transmit. It is probably more economical to transmit the value explicitly.}

We denote as $\inDP$\label{sym:inDP} the collection of all the periods (except $\Cprd_0$) and inter-block distances in the tree\VLongOnly{ (as well as $\tspanRepStar{\Pblock_0}$, if necessary)}, that need to be transmitted to fully describe the pattern.
\VLongOnly{The corresponding code length is
\[\cl(\inDP) = \sum_{v \in \inDP} \cl(v)\;,\]
where the code length of each element can be computed using the formulas presented above.

}To put everything together, the code used to represent a pattern $\patt = (\Ptree, \Cto, \Csc)$ has length
\VLongOnly{\begin{align*}
\cl(\patt) &= \cl((\Ptree, \Cto, \Csc)) \\
&= \cl(\evtseq) + \cl(\lens) + \cl(\Cprd_{0}) + \cl(\inDP) + \cl(\Cto) + \cl(\Csc)\;.
\end{align*}}
\VShortOnly{\[\cl(\patt) = \cl(\evtseq) + \cl(\lens) + \cl(\Cprd_{0}) + \cl(\inDP) + \cl(\Cto) + \cl(\Csc)\;.\]}

\VLongOnly{
\mpara{From simpler patterns to more complex ones.}
Let us have a look at what happens to the encoding of a simple cycle, when using this more complex encoding scheme to represent it.
Consider a simple cycle $\cycle = (\Cev, \Clen, \Cprd, \Cto, \Csc)$. Using the more complex encoding it can be represented as $\patt = (\Ptree, \Cto, \Csc)$, where the cycle is represented using a more general pattern formalism $\Ptree = \BinfoRP{\Clen}{\Cprd}\Bstart{}\activity{\Cev}\Bend{}$.
Both encodings are very similar, with $\lens = \LL{\Clen}$, $\Cprd_{0} = \Cprd$ and $\inDP = \llempty$, $\evtseq = \text{`( \Cev )'} $.
The code word representing the cycle length, $\cl(\Clen)$, depends only on the frequency of occurrence of the event, which is fixed. The corrections accumulated for the first occurrence of the last repetition of the main cycle are equal to the sum of the corrections in $\Csc$,\VLongOnly{ hence $\cume{o_{za}} = \sumel{\Csc}$,} so that the length of the code words representing the cycle period and starting point also remain the same. The corrections are the same and encoded the same way under both encodings. The only difference comes from the different way to encode the event, which is longer under the more complex encoding, to accommodate for the additional symbols which allow to represent (nested) event sequences.\VLongOnly{ That is, for any event $\alpha$, its code length under the more complex pattern encoding $\cl_P(\Cev)$ is larger than its code length under the simpler cycle encoding, $\cl_C(\Cev)$, due to the over-head of having block delimiters.}

Note that the actual value of $\Cto$ does not impact the code length of a pattern.
\VLongOnly{If we consider two cycles 
\[\cycle_1 = (\Cev_1, \Clen_1, \Cprd_1, \Cto_1, \Csc_1)\text{ and }\cycle_2 = (\Cev_2, \Clen_2, \Cprd_2, \Cto_2, \Csc_2)\]
 such that $\Cto_1 \neq \Cto_2$ but all other values are equal, then $\cl(\cycle_1) = \cl(\cycle_2)$. Simply put, translation does not affect the cost of a cycle or pattern.

}On the other hand, the values of the corrections\VLongOnly{, through $\cume{o_{za}}$} impact the length of the code words representing the starting point and the main period.\VLongOnly{ For this reason, given two cycles with the same length and period but with different corrections (i.e.\ such that $\Clen_1 = \Clen_2$ and $\Cprd_1 = \Cprd_2$, but $\Csc_1 \neq \Csc_2$), the code words representing their respective periods and starting points will differ (i.e.\ we will have $\cl(\Clen_1) = \cl(\Clen_2)$ but $\cl(\Cprd_1) \neq \cl(\Cprd_2)$ and $\cl(\Cto_1) \neq \cl(\Cto_2)$).}
}

\VLongOnly{\section{Combining patterns and comparing costs}}
\VShortOnly{\section{Algorithm for Mining Periodic Patterns that Compress}
\label{sec:algo}}
\label{sec:comb}

Recall that for a given input sequence $\seq$, our goal is to find a collection of patterns $\ccycle$ that minimises the cost
\[\cl(\ccycle, \seq) = \sum_{\patt \in \ccycle} \cl(\patt) + \sum_{o \in \residual{\ccycle, \seq}} \cl(o)\;.\]

It is useful to compare the cost of different patterns, or sets of patterns, on a subset of the data, i.e.\ compare $\cl(\ccycle', \seq')$ for different sets of patterns $\ccycle'$ and some subsequence $\seq' \subseteq \seq$.
In particular, we might compare the cost of a pattern $\patt$ to the cost of representing the same occurrences separately. This means comparing
\[\cl(\{\patt\}, \cov{\patt}) = \cl(\patt) \quad\text{and}\quad 
\cl(\emptyset, \cov{\patt}) = \sum_{o \in \cov{\patt}} \cl(o)\;.\]
If $\cl(\{\patt\}, \cov{\patt}) < \cl(\emptyset, \cov{\patt})$, we say that pattern $\patt$ is \emph{cost-effective}.
In addition, we compare patterns in terms of their cost-per-occurrence ratio defined, for a pattern $\patt$, as 
\custmath{\custfrac{\cl(\patt)}{\abs{\cov{\patt}}}}{,}
and say that a pattern is more \emph{efficient} when this ratio is smaller.\VLongOnly{

Furthermore, in order to reduce the number of candidate patterns considered and to retain only the most promising ones, we use a procedure called $\algFilterTopKCov$ that takes as input a collection of patterns $\mathcal{K}$ together with some integer $k$ and returns only those patterns from $\mathcal{K}$ that are among the top-$k$ most efficient ones for some occurrence they cover.}

\medskip

A natural way to build patterns is to start with the simplest patterns, i.e.\ cycles over single events, and combine them together into more complex, possibly multi-level multi-event patterns. 
\VLongOnly{Therefore, we now look at how the cost of patterns relates to the cost of the building blocks they are constructed from.
We start by looking at the cost of covering $k$ occurrences ($k \geq 3$) with a simple cycle as compared to representing them separately. In other words, we look in more details at what it takes for a cycle to be cost-effective.

\mpara{Simple cycles vs.\ residuals.}
Assume we have a candidate cycle $C$ of length $k \geq 3$, covering $k$ occurrences of event $\alpha$, and we want to check whether this cycle is cost-effective, i.e.\ compare the cost of representing this $k$-subsequence with $C$ to the cost of representing it with individual occurrences
\[\cl(\{C\}, \cov{C}) = \cl(C) \quad\text{and}\quad 
\cl(\emptyset, \cov{C}) = \sum_{o \in \cov{C}} \cl(o)\;.\]

The cost of representing the individual occurrences separately is
\[\cl(\emptyset, \cov{C}) = k \cdot (\cl(t) + \cl(\alpha)) = k \big( \log(\tspan{\seq}+1) -\log(\frac{\len{\seq[\alpha]}}{\len{\seq}})\big) \]
and the cost for representing the same occurrences with cycle $C$ is
\[\cl(C) = \cl(\alpha) + \beta + \cl(\Clen) + \cl(\Cprd) + \cl(\Cto) +\cl(\Csc)\;,\]
where $\beta$ denotes the length of the code for one pair of block delimiters.
The cost of corrections in the cycle is
\[\cl(\Csc) = 2(k-1) + \sum_{e \in \Csc} \abs{e}\]
and the code length of the period and starting point of a cycle satisfy, respectively,
\[\cl(\Cprd) < \log(\frac{\tspan{\seq}+1}{k-1}) \quad \text{and} \quad \cl(\Cto) < \cl(t),\]
so that
\[\cl(C) < \cl(\alpha) + \beta + \cl(\Clen) + \log\big(\frac{\tspan{\seq}+1}{k-1}\big) + \cl(t)  + 2k-2 + \sum_{e \in \Csc} \abs{e}\;.\]

If we let
\begin{align*}
W(k) =\,& (k-1) (\cl(t) + \cl(\alpha))- \beta - \cl(\Clen) - \log\big(\frac{\tspan{\seq}+1}{k-1}\big)  - 2k+2\\
=\,& (k-2) \log(\tspan{\seq}+1) + (k-1) \cl(\alpha) - \beta - \log(\len{\seq[\Cev]}) + \log(k-1) - 2k+2\;,
\end{align*}
we have
\[\sum_{e \in \Csc} \abs{e} < W(k) \implies \cl(\{C\}, \cov{C}) < \cl(\emptyset, \cov{C})\;.\]
In other words, if the sum of the absolute shift corrections in a cycle $C$ of length $k$ is less than $W(k)$, then the cost of representing the occurrences with $C$ is smaller than the cost of representing them separately.

Furthermore, we can state the following:
\begin{lemma}
\label{lemma:cycles}
Given a sequence $\seq$, if $C$ is a cycle of length $k$ over event $\alpha$ with corrections $\Csc$ satisfying $\sum_{e \in \Csc} \abs{e} < W(k)$, and if extending $C$ to cover one further occurrence of event $\alpha$ does not increase the sum of the absolute corrections by more than $\log(\tspan{\seq}+1) - 2$, then the cost of representing the $k+1$ occurrences with the extended cycle is smaller than the cost of representing them separately, i.e.\ the extended cycle remains cost-effective.
\end{lemma}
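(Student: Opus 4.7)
The plan is to invoke the sufficient criterion for cost-effectiveness that was established immediately before the lemma statement, namely that any cycle of length $m$ whose shift corrections $\Csc$ satisfy $\sum_{e \in \Csc} \abs{e} < W(m)$ is cost-effective. Thus it suffices to show that the extended cycle of length $k+1$ still satisfies the corresponding bound at length $k+1$. Let $\Csc'$ denote the corrections of the extended cycle. Combining the two hypotheses gives
\[\sum_{e \in \Csc'} \abs{e} \;\leq\; \sum_{e \in \Csc} \abs{e} + \bigl(\log(\tspan{\seq}+1) - 2\bigr) \;<\; W(k) + \log(\tspan{\seq}+1) - 2,\]
so what remains is to establish the inequality $W(k+1) \geq W(k) + \log(\tspan{\seq}+1) - 2$.

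This is a direct computation from the closed-form expression
\[W(k) = (k-2)\log(\tspan{\seq}+1) + (k-1)\cl(\alpha) - \beta - \log(\len{\seq[\alpha]}) + \log(k-1) - 2k + 2\]
derived just before the lemma. Reading off the terms that depend on $k$: the coefficient of $\log(\tspan{\seq}+1)$ advances by one, contributing $+\log(\tspan{\seq}+1)$; the coefficient of $\cl(\alpha)$ advances by one, contributing $+\cl(\alpha)$; the $\log(k-1)$ term contributes $\log(k/(k-1))$; the affine term $-2k+2$ contributes $-2$; and the constants $\beta$ and $\log(\len{\seq[\alpha]})$ cancel. Therefore
\[W(k+1) - W(k) \;=\; \log(\tspan{\seq}+1) + \cl(\alpha) + \log\!\Big(\tfrac{k}{k-1}\Big) - 2.\]

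Since $\cl(\alpha) = -\log(\len{\seq[\alpha]}/\len{\seq}) \geq 0$ and $\log(k/(k-1)) > 0$ for $k \geq 2$, the bound $W(k+1) - W(k) \geq \log(\tspan{\seq}+1) - 2$ holds (in fact strictly). Chaining with the earlier inequality yields $\sum_{e \in \Csc'} \abs{e} < W(k+1)$, and the sufficient condition applied at length $k+1$ gives cost-effectiveness of the extended cycle. The whole argument is essentially bookkeeping, and the only point requiring any care is identifying precisely which pieces of $W(k)$ depend on $k$, so that the event-frequency term and the block-delimiter constant $\beta$ drop out of the difference; the nonnegativity of $\cl(\alpha)$ and the strict positivity of $\log(k/(k-1))$ are what give the requisite slack.
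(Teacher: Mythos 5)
Your proposal is correct and follows essentially the same route as the paper's proof: compute $W(k+1)-W(k) = \log(\tspan{\seq}+1) + \cl(\alpha) + \log(k/(k-1)) - 2$, observe this exceeds $\log(\tspan{\seq}+1)-2$ because $\cl(\alpha)\geq 0$ and $\log(k/(k-1))>0$, and chain the inequalities to get $\sum_{e\in\Csc'}\abs{e} < W(k+1)$, at which point the sufficient condition for cost-effectiveness applies at length $k+1$. Your write-up is if anything slightly more explicit than the paper's about why the difference dominates $\log(\tspan{\seq}+1)-2$.
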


\begin{proof}
Assume we have a cycle $C$ with corrections $\Csc$, satisfying $\sum_{e \in \Csc} \abs{e} < W(k)$. Let $C'$ be the cycle obtained by extending $C$ to cover one further occurrence, i.e.\ $C'$ is a cycle of length $k+1$, and let $\Csc'$ be the associated corrections.
Since
\[W(k+1) - W(k) = \log(\tspan{\seq}+1) + \cl(\alpha) + \log(k/(k-1)) - 2 > \log(\tspan{\seq}+1) - 2\;,\] 
we have
\begin{align*} 
 & \sum_{e \in \Csc'} \abs{e} - \sum_{e \in \Csc} \abs{e} \leq \log(\tspan{\seq}+1) - 2 \\
\implies & \sum_{e \in \Csc'} \abs{e} - \sum_{e \in \Csc} \abs{e} < W(k+1) - W(k) \\
\implies & \sum_{e \in \Csc'} \abs{e}  < W(k+1) - W(k) + \sum_{e \in \Csc} \abs{e} \\
\implies & \sum_{e \in \Csc'} \abs{e}  < W(k+1) \\
\implies & \cl(\{C'\}, \cov{C'}) < \cl(\emptyset, \cov{C'}) \;.
\end{align*} 
\end{proof}

For a simple criterion to decide whether to extend a cycle
we compare the magnitude of the new correction to $\log(\tspan{\seq}+1) - 2$.}

\VLongOnly{\mpara{Vertical combination: Nesting cycles.}
First, let us consider a practical example. Imagine that the following sequence is part of the input:
\begin{align*}
\seqex{2} = \lls & (2,a), (5,a), (7,a), (8,a), (13,a), (15,a), \\
& (20,a), (21,a), (26,a), (29,a), (32,a), (33,a) \lle \;.
\end{align*}

We can represent this sequence with simple cycles, using three patterns over pattern tree $\Ptree_1$ from Fig.~\ref{fig:ex_tree1a} with starting points $2$, $13$, and $26$, respectively. 

Using this notation, the first option is to represent the sequence with the collection 
\begin{align*} 
\ccycle_1 &= \{\patt_{1,1}, \patt_{1,2}, \patt_{1,3} \}\\ &= \{ (\Ptree_1, 2, \LL{1,0,-1}), (\Ptree_1, 13, \LL{0,3,-1}), (\Ptree_1, 26, \LL{1,1,-1})\}\;.
\end{align*} 

Alternatively, we can represent the sequence using four patterns over pattern tree $\Ptree_2$ from Fig.~\ref{fig:ex_tree1a} with starting points $2$, $5$, $7$ and $8$, respectively:
\begin{align*}
\ccycle_2 &= \{\patt_{2,1}, \patt_{2,2}, \patt_{2,3}, \patt_{2,4} \} \\
&= \{(\Ptree_2, 2, \LL{-2,0}), (\Ptree_2, 5, \LL{-3,1}), \\ 
&\phantom{= \{} (\Ptree_2, 7, \LL{0,-1}), (\Ptree_2, 8, \LL{0,-1})\}\;.
\end{align*}

But it can also be represented as a single pattern containing two nested cycles, namely as patterns over pattern trees $\Ptree_3$ or $\Ptree_4$ from Fig.~\ref{fig:ex_tree1a}, respectively, depending whether the inner cycle is $\Ptree_1$ or $\Ptree_2$. So, we can represent the sequence with a single pattern, with either 
\begin{align*}
 \ccycle_3 &= \{\patt_{3,1}\} = \{(\Ptree_3, 2, \LL{1, 0, -1, -2, 0, 3, -1, 0, 1, 1, -1})\},\text{ or }\\
 \ccycle_4 &= \{\patt_{4,1}\} = \{(\Ptree_4, 2, \LL{-2, 0, 1, -3, 1, 0, 0, -1, -1, 0, -1})\}\;.
\end{align*}

Note that with this type of pattern combining two nested cycles over the same event, the list of corrections for the combined pattern is a simple combination of corrections for the basic cycles: 
\[\Csc_{3,1} = \Csc_{1,1} \opconcat \LL{\Csc_{2,1}[1]} \opconcat \Csc_{1,2} \opconcat \LL{\Csc_{2,1}[2]} \opconcat \Csc_{1,3}\]
where $\Csc_{x,y}$ is the list of shift corrections for pattern $\patt_{x,y}$ and $\Csc_{x,y}[i]$ is the correction at position $i$ in that list.

Let us look at the code lengths for these different patterns.
For this example, we have
\[
\begin{array}{r@{}lr@{}lr@{}l}
\tSstart(\seqex{2})&=0,&\tSend(\seqex{2})&=34,&\tspan{\seqex{2}}&=34,\\ 
\multicolumn{2}{c}{\text{ and }}&\len{\seqex[a]{2}}&= 12\;.\\
\end{array}
\]
We list the code lengths for the different elements in Tables~\ref{tab:ex_clC1}--\ref{tab:ex_clC3-4}.
In Fig.~\ref{fig:ex_timelinesP3-P4} we provide a timeline schema of the occurrences of $\patt_{3,1}$ as well as of the occurrences of $(\Ptree_3, 0, \nullC)$ and $(\Ptree_4, 0, \nullC)$, i.e.\ the occurrences of pattern trees $\Ptree_3$ and $\Ptree_4$ with starting point $0$ and no corrections.

\begin{figure}[tbp]
\centering
\setlength{\ndhlen}{1.cm}
\setlength{\ndwlen}{.8cm}
\newcommand{\treeMedH}[5]{
  \node[main node] (b1#1) at (#2) {};
  \node[info node] (l1rp#1) [above of=b1#1, xshift=-.5cm, node distance=.3\ndhlen] {$\Clen=#5, \Cprd=7$};
  \node[info node] (l1t#1) [above of=l1rp#1, node distance=1.2em] {$\Cto=#4$};
  \node[leaf node] (b2#1) [below of=b1#1] {};
  %%%% labels
  \node[pterm node, xshift=-.2cm] (p2#1) [below of=b2#1] {$#3$};
  \path
    (b1#1) edge (b2#1);
}
\newcommand{\treeMedHP}[5]{
  \node[main node] (b1#1) at (#2) {};
  \node[info node] (l1rp#1) [above of=b1#1, xshift=-.5cm, node distance=.3\ndhlen] {$\Clen=#5, \Cprd=7$};
  \node[info node] (l1t#1) [above of=l1rp#1, node distance=1.2em] {$\Cto=#4$};
  \node[main node] (b2#1) [below of=b1#1] {};

  \node[info node] (l2r#1) [right of=b2#1, node distance=.8\ndwlen] {$\Clen=3$};
  \node[info node] (l2p#1) [below of=l2r#1, node distance=1.2em] {$\Cprd=2$};

  \node[leaf node] (b3#1) [below of=b2#1] {};
  %%%% labels
  \node[pterm node, xshift=-.2cm] (p2#1) [below of=b3#1] {$#3$};
  \path
    (b1#1) edge (b2#1)
    (b2#1) edge (b3#1);
}

\newcommand{\treeMedCH}[3]{
  \node[main node] (b2#1) at (#2) {};
  \node[info node] (l1rp#1) [above of=b2#1, xshift=-.5cm, node distance=.3\ndhlen] {$\Clen=5, \Cprd=7$};
  \node[info node] (l1t#1) [above of=l1rp#1, node distance=1.2em] {$\Cto=#3$};

  \node[leaf node] (b22#1) [below of=b2#1] {};
  \node[leaf node] (b21#1) [left of=b22#1, node distance=1.5\ndwlen] {};
  \node[main node] (b23#1) [right of=b22#1, node distance=1.5\ndwlen] {};
  \node[leaf node] (b231#1) [below of=b23#1] {};

  \node[info node] (l2r#1) [right of=b23#1, node distance=.8\ndwlen] {$\Clen=3$};
  \node[info node] (l2p#1) [below of=l2r#1, node distance=1.2em] {$\Cprd=2$};

  %%%% labels
  \node[pterm node, xshift=-.2cm] (p22#1) [below of=b22#1] {$a$};
  \node[pterm node, xshift=-.2cm] (p23#1) [below of=b21#1] {$b$};
  \node[pterm node, xshift=-.2cm] (p211#1) [below of=b231#1] {$b$};
  \path
    (b2#1) edge (b21#1)
    (b2#1) edge (b22#1)
    (b2#1) edge (b23#1)
    (b23#1) edge (b231#1);

\path[dotted, thin, ->, bend right=20, color=darkgray]
    (b21#1) edge node[below] {$d=2$} (b22#1)
    (b22#1) edge node[below] {$d=1$} (b23#1);
}

\newcommand{\treeMedV}[3]{
  \node[ghost node] (b0#1) at (#2) {};
  \node[main node] (b2#1) [below of=b0#1] {};
  \node[ghost node] (b2x#1) [below of=b2#1] {};
  \node[info node] (l2rp#1) [above of=b2#1, xshift=-.5cm, node distance=.3\ndhlen] {$\Clen=3, \Cprd=2$};
  \node[info node] (l2t#1) [above of=l2rp#1, node distance=1.2em] {$\Cto=#3$};
  \node[leaf node] (b21#1) [left of=b2x#1, node distance=\ndwlen] {};
  \node[leaf node] (b22#1) [right of=b2x#1, node distance=\ndwlen] {};
  %%%% labels
  \node[pterm node, xshift=-.2cm] (p21#1) [below of=b21#1] {$c$};
  \node[pterm node, xshift=-.2cm] (p22#1) [below of=b22#1] {$e$};
  \path
    (b2#1) edge (b21#1)
    (b2#1) edge (b22#1);
\path[dotted, thin, ->, bend right=20, color=darkgray]
    (b21#1) edge node[below] {$d=1$} (b22#1);
}
\newcommand{\treeMedCV}[2]{
  \node[main node] (b0#1) at (#2) {};
  \node[info node] (l0rp#1) [above of=b0#1, xshift=-.5cm, node distance=.3\ndhlen] {$\Clen=12, \Cprd=7$};
  \node[info node] (l0t#1) [above of=l0rp#1, node distance=1.2em] {$\Cto=3$};

  \node[main node] (b2#1) [below of=b0#1] {};
  \node[ghost node] (b2x#1) [below of=b2#1] {};
  \node[info node] (l2p#1) [right of=b2#1, node distance=\ndwlen] {$\Cprd=2$};
  \node[info node] (l2r#1) [above of=l2p#1, node distance=1.2em] {$\Clen=3$};

  \node[leaf node] (b21#1) [left of=b2x#1, node distance=\ndwlen] {};
  \node[leaf node] (b22#1) [right of=b2x#1, node distance=\ndwlen] {};
  %%%% labels
  \node[pterm node, xshift=-.2cm] (p21#1) [below of=b21#1] {$c$};
  \node[pterm node, xshift=-.2cm] (p22#1) [below of=b22#1] {$e$};
  \path
    (b0#1) edge (b2#1)
    (b2#1) edge (b21#1)
    (b2#1) edge (b22#1);
 \path[dotted, thin, ->, bend right=20, color=darkgray]
    (b21#1) edge node[below] {$d=1$} (b22#1);

}

\setlength{\ndhlen}{1.cm}
\setlength{\ndwlen}{.8cm}

\begin{tikzpicture}[-,auto,node distance=\ndhlen, thick]
\node[anchor=west] at (-\ndwlen,1.5\ndhlen) {a) $\algGrowH$:};
\treeMedH{h}{0,0}{b}{2}{6}
\treeMedH{f}{2.6\ndwlen,0}{a}{4}{5}
\treeMedHP{g}{5.2\ndwlen,0}{b}{5}{5}
\node at (7.\ndwlen,-.5\ndhlen) {$\longrightarrow$};
\treeMedCH{y}{9.5\ndwlen,0}{2}

\node[anchor=west] at (-\ndwlen,-3\ndhlen) {b) $\algGrowV$:};
\treeMedV{a}{0,-3.5\ndhlen}{3}
\treeMedV{b}{2.6\ndwlen,-3.5\ndhlen}{12}
\node at (4.3\ndwlen,-5\ndhlen) {$\dots$};
\treeMedV{d}{6\ndwlen,-3.5\ndhlen}{102}
\node at (7.7\ndwlen,-5\ndhlen) {$\longrightarrow$};
\treeMedCV{x}{9.5\ndwlen,-3.5\ndhlen}
\end{tikzpicture}
\caption{Examples of growing patterns through combinations.}
\label{fig:ex_tree_grow}
\end{figure}

\medskip

Now, let us turn to the general case.}
Assume that we have a pattern tree $\Ptree_I$ which occurs multiple times in the event sequence. In particular, assume that it occurs at starting points $\Cto_{1}$, $\Cto_{2}$, $\dots$, $\Cto_{\PlenJ}$ \VLongOnly{(where the starting points are ordered) }and that this sequence of starting points itself can be represented as a cycle of length $\PlenJ$ and period $p_J$. \VLongOnly{In other words, 
if we denote  as $\alpha$ the left-most event of $\Ptree_I$, i.e.\ the event associated to the starting point of $\Ptree_I$, the sequence consisting of the starting points of the different occurrences of $\Ptree_I$ can be represented by a pattern $(\Ptree_J, \Cto_{1}, \Csc_J)$ where $\Ptree_J = \BinfoRP{\PlenJ}{p_J}\Bstart{}\activity{\alpha}\Bend{}$ is a cycle of length $\PlenJ$ and period $p_J$ over event $\alpha$, with shift corrections 
\[\Csc_J = \LL{(\Cto_{i}-\Cto_{i-1})-p_J \text{ for } i \in [2,\PlenJ]}\;.\]

}In such a case, the occurrences of $\Ptree_I$ might be combined together and represented as a nested pattern tree\VLongOnly{ $\Ptree_N = \BinfoRP{\PlenJ}{p_J}\Bstart{}\activity{\Ptree_I}\Bend{}$}. \VLongOnly{We refer to such a combination as \emph{vertical combination}, since it produces patterns of greater depth than the original ones. }$\algGrowV$ is the procedure which takes as input a collection $\ccycle_I$ of patterns over a tree $\Ptree_I$\VLongOnly{, i.e.\ $\ccycle_I = \{(\Ptree_I, \Cto_{1}, \Csc_{I,1}), \dots (\Ptree_I, \Cto_{\PlenJ}, \Csc_{I,\PlenJ})\}$} and returns the nested pattern\VLongOnly{, covering the same timestamp--event pairs,} obtained by combining them together as depicted in Fig.~\ref{fig:ex_tree_grow}(b).\VLongOnly{

This situation is illustrated in Fig.~\ref{fig:ex_combineV}.

\begin{lemma}
\label{lemma:vertical}
Let $\ccycle_I = \{(\Ptree_I, \Cto_{1}, \Csc_{I,1}), \dots (\Ptree_I, \Cto_{\PlenJ}, \Csc_{I,\PlenJ})\}$ be a collection of patterns consisting of $\PlenJ$ occurrences of the same pattern tree $\Ptree_I$ and $\patt_N = \algGrowV(\ccycle_I)$ be the nested pattern obtained by combining the patterns in $\ccycle_I$.
If the cycle $\patt_J$ over the starting points of the patterns in $\ccycle_I$ satisfies 
\[\cl(\patt_J) < (\PlenJ-1) \cdot \cl((\Ptree_I, \Cto_{1}, \emptyC)) \;,\] then 
\[\cl(\{\patt_N\}, \cov{\ccycle_I}) < \cl(\ccycle_I, \cov{\ccycle_I})\;.\]
\end{lemma}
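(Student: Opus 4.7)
The plan is to reduce the inequality to a direct comparison of pattern costs and then decompose both sides into the pieces that are shared structurally (the description of $\Ptree_I$, paid once in $\patt_N$ but $\PlenJ$ times in $\ccycle_I$) and the pieces unique to each occurrence. Since $\algGrowV$ produces a pattern $\patt_N$ that covers exactly the same timestamp--event pairs as $\ccycle_I$, both residuals are empty and the target inequality becomes $\cl(\patt_N) < \sum_{i=1}^{\PlenJ}\cl((\Ptree_I, \Cto_i, \Csc_{I,i}))$. Introducing the shorthand $S_I = \cl(\evtseq_I) + \cl(\lens_I) + \cl(\Cprd_0^I) + \cl(\inDP_I)$ for the structural cost of $\Ptree_I$, the right-hand side splits cleanly as $\PlenJ\,S_I + \sum_i \cl(\Cto_i) + \sum_i \cl(\Csc_{I,i})$.

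For the left-hand side I would inspect the construction of $\Ptree_N$, which places $\Ptree_I$ as the sole child of a new root cycle of length $\PlenJ$ and period $p_J$. Thus $\evtseq_N$ wraps $\evtseq_I$ in one extra pair of delimiters; $\lens_N$ prepends $\PlenJ$ to $\lens_I$; the new top-level period is $p_J$, while the former root period $\Cprd_0^I$ is demoted into $\inDP_N$ alongside all of $\inDP_I$; the starting point is $\Cto_1$; and $\Csc_N$ interleaves $\Csc_J$ with the $\PlenJ$ lists $\Csc_{I,i}$ in depth-first traversal order. The key accounting identity $\cl(\Csc_N) = \cl(\Csc_J) + \sum_i \cl(\Csc_{I,i})$ follows immediately from the additivity of the correction encoding $\cl(\Csc) = 2\abs{\Csc} + \sum_{e \in \Csc}\abs{e}$ in both $\abs{\Csc}$ and $\sum \abs{e}$. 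Reassembling, the outer-cycle contributions (the two extra delimiters, $\cl(\PlenJ)$, $\cl(p_J)$, $\cl(\Cto_1)$, and $\cl(\Csc_J)$) together constitute $\cl(\patt_J) - \cl(\alpha)$, where $\alpha$ is the left-most event of $\Ptree_I$, giving
\[\cl(\patt_N) \;=\; S_I + \cl(\patt_J) - \cl(\alpha) + \sum_{i=1}^{\PlenJ} \cl(\Csc_{I,i}).\]

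Subtracting the two decompositions yields
\[\cl(\patt_N) - \sum_{i=1}^{\PlenJ}\cl((\Ptree_I, \Cto_i, \Csc_{I,i})) \;=\; \cl(\patt_J) - \cl(\alpha) - (\PlenJ-1)\,S_I - \sum_{i=1}^{\PlenJ} \cl(\Cto_i).\]
The hypothesis supplies $\cl(\patt_J) < (\PlenJ-1)(S_I + \cl(\Cto_1))$, and since the admissible range for the starting point depends only on the (fixed) structure and corrections, one has $\sum_i \cl(\Cto_i) \geq (\PlenJ-1)\,\cl(\Cto_1)$. Combined with $\cl(\alpha) \geq 0$, this makes the right-hand side strictly negative, which is exactly the desired bound.

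The main obstacle is the context dependence of the code lengths. The admissible range (and hence the code length) of $\Cprd_0^I$ is in general different when $\Cprd_0^I$ appears inside $\inDP_N$ than when it serves as the root period of the standalone $\patt_I$, and $\cl(\Cto_i)$ in each individual pattern depends on the accumulated corrections through $\cume{o_{za}}$. Handling this rigorously requires verifying that each such contextual change favours the nested encoding---that is, that the display for $\cl(\patt_N)$ is at most the expression written above. Because nesting can only shrink the time span in which an inner block lives (relative to the whole sequence $\seq$), the bounds used by the nested encoding are always tighter, the decomposition is a valid upper bound on $\cl(\patt_N)$, and the strict inequality provided by the hypothesis is preserved.
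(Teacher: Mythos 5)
Your proof is correct and follows essentially the same route as the paper's: both reduce the claim to comparing $\cl(\patt_N)$ with $\sum_i \cl(\patt_{I,i})$, decompose $\cl(\patt_N)$ so that the structural cost of $\Ptree_I$ is paid once while the outer cycle contributes $\cl(\patt_J)$ (minus the doubly-counted event cost), use the exact additivity $\cl(\Csc_N)=\cl(\Csc_J)+\sum_i\cl(\Csc_{I,i})$, and invoke the hypothesis; your closing remark about context-dependent code lengths corresponds to the paper's one-line observation that nesting only tightens the bounds on $\inDP$. The only difference is bookkeeping (you keep $\cl(\Cprd_0^I)$ and $-\cl(\alpha)$ explicit where the paper absorbs them into inequalities), which if anything makes the accounting of the demoted inner period slightly cleaner.
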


\begin{proof}
The code length of the event sequence in $\Ptree_N$, i.e.\ $\evtseq_N = \text{`(}\Ptree_I\text{)'}$ equals the code length to encode the event sequence in $\Ptree_I$ plus the code length for one pair of block delimiters and satisfies
\[\cl(\evtseq_N) < \cl(\evtseq_{I,\PlenJ}) + \cl(\evtseq_J).\]
Once nested, the time spans in $T$ can only become more constrained, so that $\cl(\inDP_N) \leq \cl(\inDP_{I,\PlenJ})$. 
The shift corrections for the nested pattern can be written as
\[\Csc_{N} = \Csc_{I,1} \opconcat \LL{\Csc_{J}[1]} \opconcat \Csc_{I,2} \opconcat \LL{\Csc_{J}[2]}  \dots \LL{\Csc_{J}[\PlenJ-1]} \opconcat \Csc_{I,\PlenJ},\]
so that
\[\cl(\Csc_N) = \cl(\Csc_J) + \sum_{i \in [1,\PlenJ]} \cl(\Csc_{I,i})\;.\]
For the remaining elements, we have
\begin{align*}
\cl(\lens_N) &= \cl(\lens_{I,\PlenJ}) + \cl(\lens_J) \\
\cl(\Cprd_{0N}) &= \cl(\Cprd_{0J}) \\
\cl(\Cto_N) &= \cl(\Cto_J)
\end{align*}

Hence, the following holds for the code length of the nested pattern $\patt_N$ when compared to the code length for the inner patterns $\patt_{I,i}$ and the outer pattern $\patt_J$:
\[\cl(\patt_N) < \cl(\patt_J) + \cl(\evtseq_{I,\PlenJ}) + \cl(\lens_{I,\PlenJ}) + \cl(\inDP_{I,\PlenJ}) + \sum_{i \in [1,\PlenJ]} \cl(\Csc_{I,i})\;.\]

We can then compare the code length of the outer pattern to the code length of the structure of all but one of the inner patterns $\patt_J$, that is
\begin{align*}
& \cl(\patt_J) < (\PlenJ-1) \cdot \cl((\Ptree_I, \Cto_1, \emptyC)) \\
\implies & \cl(\patt_J) + \cl(\evtseq_{I,\PlenJ}) + \cl(\lens_{I,\PlenJ}) + \cl(\inDP_{I,\PlenJ}) + \sum_{i \in [1,\PlenJ]} \cl(\Csc_{I,i}) \\ 
& <  (\PlenJ-1) \cdot \cl((\Ptree_I, \Cto_1, \emptyC)) + \cl(\evtseq_{I,\PlenJ}) + \cl(\lens_{I,\PlenJ}) + \cl(\inDP_{I,\PlenJ}) + \sum_{i \in [1,\PlenJ]} \cl(\Csc_{I,i}) \\
\implies & \cl(\patt_N) = \cl(\{\patt_N\}, \cov{\ccycle_I}) < \sum_{i \in [1,\PlenJ]} (\Ptree_{I,i}, \Cto_{I,i}, \Csc_{I,i}) = \cl(\ccycle_I, \cov{\ccycle_I})\;.
\end{align*}
\end{proof}

\mpara{Horizontal combination: Concatenating cycles.}
Again, let us first consider a practical example.
Imagine that the following sequence is part of the input:
\begin{align*}
\seqex{3} = \lls & (2,b), (5,a), (7,c), (13,b), (18,a), (21,c), \\
& (26,b), (30,a), (31,c) \lle \;.
\end{align*}

We can represent this sequence with single cycles of length $3$ and period $13$, over events $b$, $a$, and $c$ and with starting points $2$, $5$, and $7$, respectively. 
The cycle over $a$ corresponds to pattern tree $\Ptree_2$ from Fig.~\ref{fig:ex_tree1a}, the other two cycles correspond to similar pattern trees but over event $b$ and $c$, so we denote them respectively as $\Ptree_{2b}$ and $\Ptree_{2c}$. 
This corresponds to the following collection:
\begin{align*} 
\ccycle_5 &= \{\patt_{5,1}, \patt_{5,2}, \patt_{5,3}\} \\ 
&= \{ (\Ptree_{2b}, 2, \LL{-2,0}), (\Ptree_2, 5, \LL{0,-1}), (\Ptree_{2c}, 7, \LL{1,-3})\}\;.
\end{align*}

We can also use a more complex pattern tree, concatenating the three events. This corresponds to using pattern tree $\Ptree_5$ from Fig.~\ref{fig:ex_tree1b}: 
\begin{align*} 
\ccycle_6 &= \{\patt_{6,1}\} \\ &= \{(\Ptree_5, 2, \LL{0, 1, -2, 2, 2, 0, 1, 0})\}\;.
\end{align*}

Let us look at the code lengths for these different patterns.
For this example, we have 
\[
\begin{array}{r@{}lr@{}lr@{}l}
\tSstart(\seqex{3})&=0,&\tSend(\seqex{3})&=34,&\tspan{\seqex{3}}&=34,\\ 
\text{ and } & \multicolumn{4}{r}{\len{\seqex[a]{3}} =\len{\seqex[b]{3}}=\len{\seqex[c]{3}}}&= 3\;.\\
\end{array}
\]
We list the code lengths for the different elements in Tables~\ref{tab:ex_clC5}--\ref{tab:ex_clC6}.
In Fig.~\ref{fig:ex_timelinesP5} we provide a timeline schema of the occurrences of $\patt_{6,1}$ as well as of the occurrences of $(\Ptree_5, 0, \nullC)$.

\medskip 

Given}\VShortOnly{

On the other hand, given}
a collection of patterns that occur close to one another and share similar periods, we might want to combine them together into a concatenated pattern by merging the roots of their respective trees. \VLongOnly{We refer to such a combination as \emph{horizontal combination}, since it produces patterns of greater width than the original ones.

To understand what this means in terms of cost, we focus on the basic case where we have two patterns $\patt_I$ and $\patt_J$, such that $\Ptree_I = \BinfoRP{\Plen}{p_I}\Bstart{}\activity{T}\Bend{}$ and $\Ptree_J = \BinfoRP{\Plen}{p_J}\Bstart{}\activity{T'}\Bend{}$, both patterns have top-level blocks of the same length $\Plen$, and with starting points $\Cto_I \leq \Cto_J$.
We compare the cost of these two patterns to the code length for the pattern that concatenates them, that is, pattern $\patt_N$ with  $\Ptree_N = \BinfoRP{\Plen}{p_N}\Bstart{}\activity{T} \BinfoD{d_N} \activity{T'}\Bend{}$ covering the same event occurrences in the original sequence. $\PnbOI$ and $\PnbOJ$ denote the number of occurrence in one repetition of the top-level block of patterns $\patt_I$ and $\patt_J$ respectively, that is $\abs{\occsStar{T}} = \PnbOI$ and $\abs{\occsStar{T'}} = \PnbOJ$. 
This situation is illustrated in Fig.~\ref{fig:ex_combineH}.

Since the shift corrections are applied relatively within a block, concatenating $T$ and $T'$ only impacts the first event occurrence of each repetition of the top-level block in either pattern, i.e.\ the left-most leaf in $T$ and in $T'$.
We must look at the timestamps of occurrences of the first event in $T$ and in $T'$, let's denote the timestamp of the $i^{th}$ occurrence of these events as $t(o_{i,1})$ and $t(o'_{i,1})$ respectively.

Looking at the position at which these occurrences are produced by the different patterns, we have 
\begin{align*}
\Csc_I(o_{i,1}) &= \Csc_I[(i-1)\PnbOI] &
\Csc_J(o'_{i,1}) &= \Csc_J[(i-1)\PnbOJ] \\
\Csc_N(o_{i,1}) &= \Csc_N[(i-1)(\PnbOI+\PnbOJ)] &
\Csc_N(o'_{i,1}) &= \Csc_N[i\PnbOI+(i-1)\PnbOJ]\;. 
\end{align*}

Per $(\Ptree_I, \Cto_I, \Csc_I)$ we have
\begin{align}
t(o_{1,1}) &= \Cto_I\;, \label{a1x} \\
t(o_{2,1}) &= \Cto_I + p_I + \Csc_I(o_{2,1})\;, \label{a2x}\\
t(o_{3,1}) &= \Cto_I + 2 p_I + \Csc_I(o_{3,1}) + \Csc_I(o_{2,1})\;, \label{a3x}
\end{align}
and per $(\Ptree_N, \Cto_N, \Csc_N)$
\begin{align}
t(o_{1,1}) &= \Cto_N\;, \label{a1z}\\
t(o_{2,1}) &= \Cto_N + p_N + \Csc_N(o_{2,1})\;, \label{a2z}\\
t(o_{3,1}) &= \Cto_N + 2 p_N + \Csc_N(o_{3,1}) + \Csc_N(o_{2,1})\;. \label{a3z}
\end{align}
Hence, from eq.~\ref{a1x} and eq.~\ref{a1z} we get
\[ \Cto_N = \Cto_I. \]
And generalising from eq.~\ref{a2x} and eq.~\ref{a2z} we get
\[ \Csc_N(o_{i,1}) = (p_I - p_N) + \Csc_I(o_{i,1}).\]
And therefore, we let $p_N = p_I$ so that $\Csc_N(o_{i,j}) = \Csc_I(o_{i,j})$ for all event occurrences of $\patt_I$.

Furthermore, we have per $(\Ptree_J, \Cto_J, \Csc_J)$ 
\begin{align}
t(o'_{1,1}) &= \Cto_J\;, \label{b1y}\\
t(o'_{2,1}) &= \Cto_J + p_J + \Csc_J(o'_{i,2})\;, \label{b2y}\\
t(o'_{3,1}) &= \Cto_J + 2 p_J + \Csc_J(o'_{i,3}) + \Csc_J(o'_{i,2})\;, \label{b3y}
\end{align}
and per $(\Ptree_N, \Cto_N, \Csc_N)$
\begin{align}
t(o'_{1,1}) &= \Cto_N + d_N + \Csc_N(o'_{1,1})\;, \label{b1z} \\
t(o'_{2,1}) &= t(o_{2,1}) + d_N + \Csc_N(o'_{2,1})\;, \label{b2z}\\
t(o'_{3,1}) &= t(o_{3,1}) + d_N + \Csc_N(o'_{3,1})\;. \label{b3z}
\end{align}

Hence, from eq.~\ref{b1y} and eq.~\ref{b1z} we get
\[ d_N = (\Cto_J - \Cto_I) - \Csc_N(o'_{1,1})\;.\]
and therefore we let $d_N = (\Cto_J - \Cto_I)$.
From eq.~\ref{b2y} and eq.~\ref{b2z} we get
\begin{align*}
\Cto_J +& p_J + \Csc_J(o'_{2,1}) \\ 
&= \Cto_N + p_N + \Csc_N(o_{2,1}) + d_N + \Csc_N(o'_{2,1})\;,\\
&= \Cto_N + p_N + \Csc_N(o_{2,1}) + (\Cto_J - \Cto_I) - \Csc_N(o'_{1,1}) + \Csc_N(o'_{2,1})\;,
\end{align*}
and hence
\[(p_J - p_N) + \Csc_J(o'_{2,1}) =  \Csc_N(o'_{2,1}) -  \Csc_N(o'_{1,1}) +  \Csc_N(o_{2,1}) \;.\]
More generally, we have
\[(p_J - p_N) + \Csc_J(o'_{i,1}) =  \Csc_N(o'_{i,1}) -  \Csc_N(o'_{(i-1),1}) +  \Csc_N(o_{i,1}) \;,\]
and using $p_N = p_I$ and $\Csc_N(o_{i,1}) = \Csc_I(o_{i,1})$:
\[ \Csc_N(o'_{i,1})  = (p_J - p_I) + \Csc_I(o_{i,1}) - \Csc_J(o'_{i,1})  + \Csc_N(o'_{(i-1),1})\;.\]

In the best case, the patterns are well aligned, in the sense that $\Csc_I(o_{i,1}) = \Csc_J(o'_{i,1})$,
so then, summing up the shift corrections above, which are the only ones that differ between the old patterns and the new one, we get
\[ \sum_{i \in [1,\Plen-1]} \abs{\Csc_N(o'_{i,1})} = \frac{\Plen(\Plen-1)}{2}\abs{p_J - p_I}\;.\]
We use this as a filter for patterns to concatenate requiring that 
\[\sum_{i \in [1,\Plen-1]} \abs{\Csc_N(o'_{i,1})} \leq \sum_{i \in [1,\Plen-1]} \abs{\Csc_J(o'_{i,1})}\;,\]
i.e.\
\[\abs{p_J - p_I} \leq \frac{2}{\Plen(\Plen-1)} \sum_{i \in [1,\Plen-1]} \abs{\Csc_J(o'_{i,1})} \;.\]
This can be interpreted as requiring that the difference in period between the two concatenated patterns does not produce shift corrections larger than in the original patterns.

} $\algGrowH$ is the procedure which takes as input a collection of patterns and returns the pattern obtained by concatenating them together in order of increasing starting points as depicted in Fig.~\ref{fig:ex_tree_grow}(a).\VLongOnly{ More specifically, let the input collection be $\{ \patt_i \}$, where each pattern is a cycle of length $\Clen_i$ and period $\Cprd_i$ over a pattern tree $\Ptree_i$ (possibly a single event) with starting point $\Cto_i$, and assume that the patterns in the collection are indexed in order of increasing starting points, i.e.\ in the order in which they occur in the data. The resulting pattern tree $\Ptree_N$ is a cycle of length $\Clen_N = \min(\Clen_i)$ and period $\Cprd_N=\Cprd_1$ over the concatenation of $\Ptree_1, \Ptree_2, \dots$, where the distance between $\Ptree_{i-1}$ and $\Ptree_{i}$ is set to $d_{i}= \Cto_i - \Cto_{i-1}$, and with $\Cto_N = \Cto_1$.
}

\VShortOnly{

\medskip}

\VLongOnly{\section{Algorithm for Mining Periodic Patterns that Compress}
\label{sec:algo}}

\VLongOnly{We are now ready to present our main algorithm for mining a collection of periodic patterns that compresses the input sequence.}
As outlined in Algorithm~\ref{alg:mine_patterns}, our proposed algorithm consists of three stages: \textit{(i)} extracting cycles (line~\ref{alg:main-cycles}), \textit{(ii)} building tree patterns from cycles (lines~\ref{alg:main-comb-start}--\ref{alg:main-comb-end}) and \textit{(iii)} selecting the final pattern collection (line~\ref{alg:main-select}).
We now present each stage in turn\VShortOnly{ at a high-level}.

\begin{algorithm}[tb]
\caption{Mining periodic patterns that compress.}
\label{alg:mine_patterns}
\begin{algorithmic}[1]
\Require A multi-event sequence $\seq$, a number $k$ of top candidates to keep
\Ensure A collection of patterns $\mathcal{P}$
\State $\mathcal{I} \gets \algCycles(\seq, k)$ \label{alg:main-cycles}
\State $\mathcal{C} \gets \emptyset; \mathcal{V} \gets \mathcal{I}; \mathcal{H} \gets \mathcal{I}$ \label{alg:main-comb-start}
\While{$\mathcal{H} \neq \emptyset$ \textbf{or} $\mathcal{V} \neq \emptyset$} \label{alg:main-while}
\State $\mathcal{V}' \gets \algCombineV(\mathcal{H}, \mathcal{P}, \seq, k)$ \label{alg:main-combV}
\State $\mathcal{H}' \gets \algCombineH(\mathcal{V}, \mathcal{P}, \seq, k)$ \label{alg:main-combH}
\State $\mathcal{C} \gets \mathcal{C} \cup \mathcal{H} \cup \mathcal{V}; 
\mathcal{V} \gets \mathcal{V}'; \mathcal{H} \gets \mathcal{H}'$ \label{alg:main-comb-end}
\EndWhile
\State $\mathcal{P} \gets \algFilterFinal(\mathcal{C}, \seq)$ \label{alg:main-select}
\State \textbf{return} $\mathcal{P}$
\end{algorithmic}
\end{algorithm}

\mpara{Extracting cycles.}
\VLongOnly{The first stage of the algorithm consists in extracting cycles (line~\ref{alg:main-cycles}).
The algorithm used for the initial mining of cycles is given as Algorithm~\ref{alg:cycles}.}
Considering each event in turn, we use two different routines to mine cycles from the sequence of timestamps obtained by restricting the input sequence to the event of interest, combine and filter their outputs to generate the set \VLongOnly{$\mathcal{I}$ }of initial candidate patterns.\VLongOnly{

} The first routine, $\algCyclesDP{}$\VLongOnly{ (line~\ref{alg:cycles-dp})}, uses dynamic programming. Indeed, if we allow neither gaps in the cycles nor overlaps between them, finding the best set of cycles for a given sequence corresponds to finding an optimal segmentation of the sequence, and since our cost is additive over individual cycles, we can use dynamic programming to solve it optimally~\cite{bellman:61:on}.\VLongOnly{

} The second routine, $\algCyclesFold{}$\VLongOnly{ (line~\ref{alg:cycles-fold})}, extracts cycles using a heuristic which allows for gaps and overlappings. It collects triples $(t_0, t_1, t_2)$ such that $\abs{\abs{t_2 - t_1} - \abs{t_1 - t_0}} \leq \ell$, where $\ell$ is set so that the triple can be beneficial when used to construct longer cycles. 
Triples are then chained into longer cycles.\VLongOnly{ A triple $(t_{-1}, t_0, t_{+1})$, can be seen as an elementary cycle with a single shift correction $e =\abs{(t_0-t_{-1}) - (t_{+1}-t_{0})}$. Since we are looking for triples that could produce cost-effective cycles, we only keep triples for which $e < \log(\tspan{\seq}+1) - 2$, following Lemma~\ref{lemma:cycles}. Triples $(t_{-1}, t_0, t_{+1})$ and $(t'_{-1}, t'_0, t'_{+1})$ are chained together if $t_{0} = t'_{-1}$ and $t_{+1} = t'_{0}$, producing $(t_{-1}, t_0, t_{+1}, t'_{+1})$, and so on.

} Finally, the set $\mathcal{C}$ of cost-effective cycles obtained by merging the output of the two routines is filtered \VLongOnly{with $\algFilterTopKCov$, }to keep only the $k$ most efficient patterns for each occurrence\VLongOnly{ (line~\ref{alg:cycles-filter})} for a user-specified $k$, and returned.

\VLongOnly{\begin{algorithm}[tb]
\caption{$\algCycles$: Mines simple cycles from the data sequence.}
\label{alg:cycles}
\begin{algorithmic}[1]
\Require A sequence $\seq$
\Ensure A collection of cycles $\mathcal{C}$
\State $\mathcal{C} \gets \emptyset$
\State $l_{\max} \gets \log(\tspan{\seq}+1) - 2$ \label{alg:cycles-crit}
\For{\textbf{each} event $\alpha \in \omega$} \label{alg:cycles-event}
\State $\mathcal{C} \gets \mathcal{C} \cup \algCyclesDP(\seq[\alpha])$ \label{alg:cycles-dp}
\State $\mathcal{C} \gets \mathcal{C} \cup \algCyclesFold(\seq[\alpha], l_{\max})$ \label{alg:cycles-fold}
\EndFor
\State $\algFilterTopKCov(\mathcal{C}, \seq, k)$ \label{alg:cycles-filter}
\State \textbf{return} $\mathcal{C}$
\end{algorithmic}
\end{algorithm}}

\VLongOnly{\medskip}

\mpara{Building tree patterns from cycles.}
The second stage of the algorithm builds tree patterns, starting from the cycles produced in the previous stage.
That is, while there are new candidate patterns, the algorithm performs combination rounds, trying to generate more complex patterns through vertical and horizontal combinations.
If desired, this stage can be skipped, thereby restricting the pattern language to simple cycles. 

In a round of vertical combinations performed by $\algCombineV$ (line~\ref{alg:main-combV}), each distinct pattern tree represented among the new candidates in $\mathcal{H}$ is considered in turn. Patterns over that tree are collected and $\algCyclesFold$ is used to mine cycles from the corresponding sequence of starting points.\VLongOnly{ This time, the threshold used to mine the cycles is derived from the cost of the considered pattern tree, in accordance with Lemma~\ref{lemma:vertical}.} For each obtained cycle, a nested pattern is produced by combining the corresponding candidates using $\algGrowV$ (see Fig.~\ref{fig:ex_tree_grow}(b)). \VLongOnly{The set of candidates produced through these vertical combinations is filtered, and returned as $\mathcal{V}'$. 
The procedure $\algCombineV$ for generating candidate patterns by means of vertical combinations is shown in Algorithms~\ref{alg:combineV}.

\begin{algorithm}[tb]
\caption{$\algCombineV$: Combine patterns vertically.}
\label{alg:combineV}
\begin{algorithmic}[1]
\Require A collection of new candidate patterns $\mathcal{H}$, and other candidate patterns $\mathcal{C}$, a sequence $\seq$, a number $k$ of top candidates to keep
\Ensure A collection of patterns resulting from vertical combinations $\mathcal{V}'$
\State $\mathcal{V}' \gets \emptyset$
\For{\textbf{each} distinct $\Ptree_c \in \mathcal{H}$}
\State $\mathcal{C} \gets \{(\Ptree_x, \Cto_x, \Csc_x) \in \mathcal{H} \cup \mathcal{C}, \text{ such that } \Ptree_x=\Ptree_c\}$
\State $l_{\max} \gets \cl((\Ptree_1, \Cto_1, \emptyC))$
\For{\textbf{each} cycle $(r, p, O) \in \algCyclesFold(\{\Cto_x \in \mathcal{C}\}, l_{\max})$}
\State $\mathcal{K} \gets \{(\Ptree_y, \Cto_y, \Csc_y) \in \mathcal{C}, \text{ such that } \Cto_y \in O\}$
\State $K \gets \algGrowV(\mathcal{K})$
\If{$\cl(\{K\}, \cov{\mathcal{K}}) < \cl(\mathcal{K}, \cov{\mathcal{K}})$}
\State $\mathcal{V}' \gets \mathcal{V}' \cup \{ K \}$
\EndIf
\EndFor
\EndFor
\State $\mathcal{V}' \gets \algFilterTopKCov(\mathcal{V}', \seq, k)$
\State \textbf{return} $\mathcal{V}'$
\end{algorithmic}
\end{algorithm}
}

In a round of horizontal combinations performed by $\algCombineH$ (line~\ref{alg:main-combH}), \VLongOnly{pairs of candidates such that \textit{(i)} at least one of the two patterns was produced in the previous round, and \textit{(ii)} their starting points are closer than the period of the earliest occurring of the two patterns are considered for concatenation.
A}\VShortOnly{a} graph $G$ is constructed, with vertices representing candidate patterns and with edges connecting pairs of candidates $\mathcal{K} = \{\patt_I, \patt_J\}$ for which the concatenated pattern $\patt_N = \algGrowH(\mathcal{K})$ 
satisfies $\cl(\{\patt_N\}, \cov{\mathcal{K}}) < \cl(\mathcal{K}, \cov{\mathcal{K}})$. A new pattern is then produced for each clique of $G$, by applying $\algGrowH$ to the corresponding set of candidate patterns.
\VLongOnly{ The set $\mathcal{H}'$ of new patterns is then filtered and returned.
The procedure $\algCombineH$ for generating candidate patterns by means of horizontal combinations is shown in Algorithms~\ref{alg:combineH}.

To limit the number of concatenations generated and evaluated when testing pairs of patterns, we require that the periods of two patterns be similar enough not to produce shift corrections larger than in the patterns of the pair, as discussed in Section~\ref{sec:comb}.

Note that if we obtain, as a result from a horizontal combination, a pattern a the following shape
\[\BinfoRP{r_0}{p_0} \Bstart{}\BinfoRP{r_1}{p_1} \Bstart{}\activity{T_a}\Bend{} \BinfoD{d} \BinfoRP{r_1}{p_1} \Bstart{}T_b\Bend{}\Bend{}\]
 we will factorise it into
\[\BinfoRP{r_0}{p_0} \Bstart{}\BinfoRP{r_1}{p_1} \Bstart{} T_a \BinfoD{d} T_b\Bend{}\Bend{}\;,\]
if it results in shorter code length, as is often the case.

\begin{algorithm}[tb]
\caption{$\algCombineH$: Combine patterns horizontally.}
\label{alg:combineH}
\begin{algorithmic}[1]
\Require A collection of new candidate patterns $\mathcal{V}$, and other candidate patterns $\mathcal{C}$, a sequence $\seq$, a number $k$ of top candidates to keep
\Ensure A collection of patterns resulting from horizontal combinations $\mathcal{H}'$
\State $\mathcal{H}' \gets \emptyset; G \gets \emptyset$
\State $\mathcal{C} \gets$ pattern pairs $(\patt_a, \patt_b) \in (\mathcal{V} \cup \mathcal{C})^2$, such that 
$(\patt_a \in \mathcal{V}$ or $\patt_b \in \mathcal{V})$ and  $\Cto_b \leq \Cto_a + \Cprd_{0a}$ 
\For{\textbf{each} pair of patterns $\mathcal{K} = (\patt_a, \patt_b) \in \mathcal{C}$}
\State $K \gets \algGrowH(\mathcal{K})$
\If{$\cl(\{K\}, \cov{\mathcal{K}}) < \cl(\mathcal{K}, \cov{\mathcal{K}})$}
\State $\mathcal{H}' \gets \mathcal{H}' \cup \{ K \}$
\State $G \gets G \cup \{(a,b)\}$
\EndIf
\EndFor
\State $\mathcal{H}' \gets \mathcal{H}' \cup \{ \algGrowH(\mathcal{K})$ for each clique $\mathcal{K}$ in the graph $G \}$
\State $\mathcal{H}' \gets \algFilterTopKCov(\mathcal{H}', \seq, k)$
\State \textbf{return} $\mathcal{H}'$
\end{algorithmic}
\end{algorithm}}

\mpara{Selecting the final pattern collection.}
Selecting the final set of patterns to output among the candidates in $\mathcal{C}$ is very similar to solving a weighted set cover problem.
\VLongOnly{Each candidate pattern can be seen as a set containing the occurrences it covers and associated to a weight representing its code length. A singleton set is associated to each occurrence whose weight is the cost of encoding that occurrence as a residual.
}
Therefore, the selection is done using a simple variant of the greedy algorithm for this problem, denoted as $\algFilterFinal$ (line~\ref{alg:main-select})\VShortOnly{.}\VLongOnly{, that works as follows.
Initially, the set $\mathcal{P}$ of selected patterns is empty.
Let $\mathcal{O}$ be the set of event occurrences covered so far, also initially empty.
In each round, the pattern $P$ with smallest value of $\cl(P)/\abs{\occs{P} \setminus \mathcal{O}}$ among remaining candidates, i.e.\ the most efficient when considering only uncovered occurrences, is selected. If $P$ is cost-effective for the remaining uncovered occurrences, it is added to $\mathcal{P}$, $\mathcal{O}$ is updated and the selection proceeds to the next round. Otherwise the selection stops and $\mathcal{P}$ is returned.
}

\section{Experiments}
\label{sec:xps}

In this section, we evaluate the ability of our algorithm to find patterns that compress the input event sequences. We make the code and the prepared datasets publicly available.\footnote{\urlcode{}}
To the best of our knowledge, no existing algorithm carries out an equivalent task and we are therefore unable to perform a comparative evaluation against competitors. To better understand the behaviour of our algorithm, we first performed experiments on synthetic sequences. We then applied our algorithm to real-world sequences including process execution traces, smartphone applications activity, and life-tracking. We evaluate our algorithm's ability to compress the input sequences and present some examples of extracted patterns.

\VShortOnly{
\begin{table}[tbp]
\caption{Statistics of the event log sequences used in the experiments.}
\label{tab:data-stats-traces}
\centering
\begin{tabular}{@{\hspace{1ex}}l@{\hspace{4ex}}r@{\hspace{2ex}}r@{\hspace{2ex}}r@{\hspace{2.5ex}}r@{\hspace{1.5ex}}r@{\hspace{2ex}}r@{\hspace{2ex}}r@{\hspace{2.5ex}}r@{\hspace{1.5ex}}r@{\hspace{1ex}}}
\toprule
 & $\len{\seq}$ & $\tspan{\seq}$ & $\abs{\ABC}$ & \multicolumn{2}{c}{$\len{\seq[\alpha]}$} &  & $\clEmpty$ & \multicolumn{2}{c}{RT (s)} \\ 
 &  &  &  & $\omed$ & $\max$ &  &  & cycles & overall \\ 
\midrule
\dstTZap{0} & $181644$ & $181643$ & $443$ & $22$ & $36697$ &  & $4154277$ & $2094$ & $35048$ \\ 
\dstBugz{0} & $16775$ & $16774$ & $91$ & $6$ & $3332$ &  & $303352$ & $112$ & $522$ \\ 
\dstSamba{} & $28751$ & $7461$ & $119$ & $44$ & $2905$ &  & $520443$ & $214$ & $2787$ \\ 
\dstSachaG{15} & $65977$ & $221445$ & $141$ & $231$ & $4389$ &  & $1573140$ & $2963$ & $14377$ \\ 
\cmidrule{2-10} \\ [-1.2em]
\multicolumn{10}{l}{\dstUbiAR{\iAbs} ($31$ sequences)} \\ 
\midrule
min & $413$ & $11391$ & $10$ & $23$ & $194$ &  & $6599$ & $1$ & $1$ \\ 
median & $23859$ & $87591$ & $87$ & $52$ & $2131$ &  & $486633$ & $232$ & $1020$ \\ 
max & $167863$ & $17900307$ & $241$ & $129$ & $6101$ &  & $3733349$ & $2297$ & $28973$ \\ 
\bottomrule
\end{tabular}
\end{table}
}

For a given event sequence, the main objective of our algorithm is to mine and select a good collection of periodic patterns, in the sense that the collection should allow to compress the input sequence as much as possible. 
Therefore, the main measure that we consider in our experiments is the \emph{compression ratio}, defined as the ratio between the length of the code representing the input sequence with the considered collection of patterns and the length of the code representing the input sequence with an empty collection of patterns, i.e.\ using only individual event occurrences, given as a percentage.
For a given sequence $S$ and collection of patterns $\ccycle$ the compression ratio is defined as
\[ \prcCl{}\label{sym:prcCl} = 100 \cdot \clCC / \clEmpty\;,\]
with smaller values associated to better pattern collections.

\VLongOnly{
\subsection{Mining synthetic sequences}
\label{ssec:xps_synthe}

We begin by probing the behaviour of our algorithm on synthetic sequences containing planted periodic patterns.
 
First we generate sequences that contain a single pattern. Each pattern consists of a basis of one to three events, repeated in a cycle, in two nested cycles or in three nested cycles, that is building pattern trees of depth $1$, $2$ and $3$ respectively.
The simplest basis consists of event $a$, with the period of the inner cycle being either greater than five (specifically, in $[5, 9]$) or greater than $10$ (specifically, in $[10, 24]$).
To build more complex patterns, we use event $a$ followed by event $b$ at distance $4$, i.e.\ \Bstart{}\activity{a} \BinfoDT{4} \activity{b}\Bend{}, as well as event $a$ followed by event $c$ at distance $1$, followed by event $d$ at distance $2$, i.e.\ \Bstart{}\activity{a} \BinfoDT{1} \activity{c} \BinfoDT{2} \activity{d}\Bend{}.

Each resulting perfect synthetic sequence can then be perturbed with \emph{shift noise}, i.e.\ by displacing the occurrences by a few time steps either forward or backward, or with \emph{additive noise}, i.e.\ by adding sporadic occurrences. 
Displacement noise is parameterised, on one hand, by the maximum absolute shift by which the occurrences might be displaced and, on the other hand, by the fraction of occurrences that are displaced. We refer to these two parameters as the \emph{level} and the \emph{density} of the noise, respectively.
For additive noise, we insert occurrences of event $a$ at random timestamps. This type of noise has a single parameter, \emph{density}, fixing the number of of sporadic occurrences as compared to the number of occurrences of the event in the unperturbed sequence.
The generated sequences contain from about fifty up to over two thousand occurrences. 

In each round, we mine each generated sequence in turn for periodic patterns, check whether the planted pattern was recovered exactly and compare the length of the code for encoding the perturbed sequence using either the planted pattern, denoted as $\cl_H$, or those that have been selected by the algorithm, denoted as $\cl_F$.
The first round of experiments is run on sequences with only shift noise. 
The second and third rounds of experiments are run on sequences with additive noise of density $0.1$ and density $0.5$ respectively.
The fourth round is run on sequences with only shift noise, but letting the occurrences of the planted pattern interleave, unlike in the three previous rounds.

In Fig.~\ref{fig:synthe_1_cr}--\ref{fig:synthe_4_cr}, we plot the compression ratio achieved by the planted pattern versus the compression ratio achieved by the pattern collection selected by the algorithm for each of the twenty sequences generated with each considered combination of parameters, for the four rounds respectively.
A different take on the same results is presented in Fig.~\ref{fig:synthe_1}--\ref{fig:synthe_4}, where we show the distribution of $\prcCl_F - \prcCl_H$ among the twenty sequences generated with each combination of parameters as boxplots, for the four rounds respectively. 
A value of $\prcCl_F - \prcCl_H=0$ means that the patterns selected by our algorithm achieve the same compression as the planted patterns, while positive (resp.\ negative) values of $\prcCl_F - \prcCl_H$ correspond to selected patterns achieving longer (resp.\ shorter) code length than with planted patterns.
On the left next to each boxplot, we indicate the number of sequences for which the planted pattern was recovered exactly.

Next, we consider sequences containing multiple planted patterns. For this purpose, we consider the pool of sequences generated in each of the four rounds with single patterns above and generate new sequences by selecting between two and five sequences from the pool and combining them together. The patterns can be combined either with or without overlap, that is, either letting a sequence start before or after the preceding sequence ends. 
The results for the runs over these synthetic sequences containing multiple planted patterns are presented in Fig.~\ref{fig:synthe_comb}.

We see from Fig.~\ref{fig:synthe_1} that when no spurious occurrences are inserted the planted pattern is recovered exactly in most cases for simple patterns of depth one, while the performance deteriorates and fewer planted patterns are recovered for more complex patterns and greater depths, as also visible from Fig.~\ref{fig:synthe_1_cr}. This is expected since recovering multi-event patterns requires that the corresponding cycles are properly recovered in the first stage of the algorithm for each of the events that make up the pattern. Even in the absence of noise, the algorithm might miss the planted pattern, e.g.\ because it merges successive nested repetition of a cycle that appear close to each other.
When the sequences involve interleaving (Fig.~\ref{fig:synthe_4_cr} and~\ref{fig:synthe_4}) the algorithm behaves in a similar way, except for the more complex basis with depths two and three, which are expectedly impacted more strongly by interleaving, resulting in more degraded performances. 

Spurious occurrences break the planted patterns which are no longer recovered by the algorithm. With low density of additive noise the algorithm often selects patterns very similar to the planted one but covering also the spurious occurrences, using shift corrections to accommodate them (Fig.~\ref{fig:synthe_2}). This is typical of the dynamic programming cycle mining, which is able to find cycles with many repetitions but does not allow to skip any occurrence, which are thus incorporated at the cost of increased corrections.
When the density of noise becomes fairly large, the inserted occurrences might actually generate new patterns that can result in shorter code length than the planted pattern, as can be observed in Fig.~\ref{fig:synthe_3}. Indeed, except for the patterns over single event $a$ with long periods, the difference in compression ratios is negative in the majority of cases.

When several planted patterns are combined without overlap, the algorithm is able to recover them all exactly in roughly half of the cases for patterns taken from pools with no additive noise, with or without interleaving ($44$ and $51\%$, respectively, see Fig.~\ref{fig:synthe_comb}).
In most cases the patterns selected by the algorithm yield a longer code length than the planted patterns, except in the presence of dense additive noise. 

Note that the requirement that the planted pattern(s) should be recovered exactly is very strict, as it means that the pattern(s) selected by the algorithm should cover the exact same occurrences as the planted ones, with the exact same pattern tree. 
Closer inspection of the results reveals that the algorithm is able to recover large fragments of the planted patterns in most cases. More specifically, in cases where it fails to recover planted patterns with height greater than one, the algorithm is in general able to identify cycles that constitute large fragments of different repetitions of the inner cycle of the pattern, but merely omitting a few occurrences in these fragment prevents the algorithm from combining them into vertical patterns of greater height. Designing a procedure that is able to build on the extracted fragments from different repetitions to recover the omitted occurrences could make the retrieval of this type of patterns more robust, but is clearly not trivial.

\subsection{Mining real-world sequences}
\label{ssec:xps_quantitative}

Next, we apply our algorithm to real-world datasets.}

\mpara{Datasets.}
Our first two datasets come from a collaboration with STMicroelectronics and are execution traces of a set-top box based on the STiH418 SoC\footnote{STiH418 description: \url{http://www.st.com/resource/en/data_brief/stih314.pdf}} running STLinux.
Both traces are a log of system actions (interruptions, context switches and system calls) taken by the KPTrace instrumentation system developed at STMicroelectronics.
The \dstTZapX{} \VShortOnly{sequence}\VLongOnly{dataset} corresponds to 3 successive changes of channel (``zap''), while the \dstBugzX{} \VShortOnly{sequence}\VLongOnly{dataset} corresponds to logging a display blackout bug into the bug tracking system of ST. \VLongOnly{Each dataset contains two traces, one for either of the two cores of the box, named respectively \dstTZap{0} and \dstTZap{1}, on one hand, \dstBugz{0} and \dstBugz{1}, on the other hand.}
For our analysis of these traces, we do not consider timestamps, only the succession of events.

The \dstUbi{} dataset was obtained from the UCI Machine learning repository.\!\footnote{\url{https://archive.ics.uci.edu/ml/datasets/UbiqLog+(smartphone+lifelogging)}} It contains traces collected from the smartphones of users over the course of two months. For each of $31$ users \VLongOnly{(we excluded those whose data was not encoded using Hindu-Arabic numerals), }we obtain a sequence recording what applications are run on that user's smartphone.   
\VShortOnly{We consider absolute timestamps with a granularity of one minute.}
\VLongOnly{We either consider absolute timestamps with a granularity of one minute or only the succession of events, and denote the corresponding collections of sequences respectively as \dstUbiAR{\iAbs} and \dstUbiAR{\iRel}. }

The \dstSamba{} dataset consists of a single sequence recording the emails identifying the authors of commits on the git repository of the samba network file system\footnote{\url{https://git.samba.org/}} from $1996$ to $2016$.
We consider timestamps with a granularity of one day. \VLongOnly{User commits are instantaneous.} We aggregated together users that appeared fewer than $10$ times\VLongOnly{ as ``other''}.

The \dstSacha{} dataset \VShortOnly{consists of a single sequence containing }\VLongOnly{contains }records from the \textit{quantified awesome} life log\footnote{\url{http://quantifiedawesome.com/records}} recording the daily activities of its author between November 2011 and January 2017. The daily activities are associated to start and end timestamps, and are divided between categories organised into a hierarchy. Categories with fewer than $200$ occurrences were aggregated to their parent category. Each resulting category is represented by an event. Adjacent occurrences of the same event were merged together.
\VShortOnly{We consider absolute timestamps with a granularity of $15$ minutes.}
\VLongOnly{We either consider absolute timestamps with a granularity of one minute or only the succession of events, and denote the corresponding sequences respectively as \dstSachaAR{\iAbs} and \dstSachaAR{\iRel}. 
Further, we investigate what happens when we coarsen the time granularity, from the original one minute to $15$ minutes, $30$ minutes, $1$ hour, half a day and a full day. The corresponding sequences are denoted \dstSachaG{15}, \dstSachaG{30}, \dstSachaG{60}, \dstSachaG{720} and \dstSachaG{1440}, respectively. }

\VLongOnly{When considering absolute timestamps for occurrences involving non-instant processes (e.g.\ daily activities, running applications), each process might be associated with three different events representing its start, its end, and the process happening for a duration smaller than the time granularity respectively.
When considering only the succession of events or, in other words, focusing on the order in which things happen rather than the specific times, we only consider the starting time of the process and each process is hence associated with only one event.}
 
\VShortOnly{Table~\ref{tab:data-stats-traces} presents the statistics of the sequences used in our experiments.}
\VLongOnly{Tables~\ref{tab:data-stats-traces}--\ref{tab:data-stats-UbiqLog-IS-rel} present the statistics of the sequences used in our experiments.}
We indicate the length ($\len{\seq}$) and duration ($\tspan{\seq}$) of each sequence, the size of its alphabet ($\abs{\ABC}$), as well as the median and maximum length of the event subsequences ($\len{\seq[\alpha]}$).
We also indicate the code length of the sequence when encoded with an empty collection of patterns ($\clEmpty$), as well as the running time of the algorithm (RT, in seconds) for mining and selecting the patterns, as well as for the first stage of mining cycles for each separate event.

\VShortOnly{
\begin{table}[tbp]
\caption{Summary of results for the separate event sequences.}
\label{tab:res-short}
\centering
\begin{minipage}{.47\textwidth}
\begin{tabular}{@{\hspace{.2ex}}l@{\hspace{.6ex}}r@{\hspace{.7ex}}r@{\hspace{.7ex}}c@{/}c@{/}c@{/}c@{\hspace{.8ex}}r@{\hspace{.2ex}}}
\toprule
 & $\prcCl{}~~$ & $\ratioClR{}$ & \parbox[b][1em][b]{.4cm}{\centering $\nbS$} & \parbox[b][1em][b]{.4cm}{\centering $\nbV$} & \parbox[b][1em][b]{.4cm}{\centering $\nbH$} & \parbox[b][1em][b]{.4cm}{\centering $\nbTD$} & $\nbOmax$ \\ 
\cmidrule{2-8} \\ [-1.2em]
\multicolumn{8}{c}{\dstTZap{0}} \\ 
\midrule
$\collS$ & $56.32$ & 0.41 & $11852$ & -- & -- & -- & $2325$ \\ 
$\collV$ & $55.14$ & 0.40 & $10581$ & $581$ & -- & -- & $2325$ \\ 
$\collH$ & $47.84$ & 0.35 & $3459$ & -- & $4912$ & -- & $2325$ \\ 
$\collVH$ & $47.40$ & 0.34 & $3499$ & $419$ & $4302$ & -- & $2325$ \\ 
$\collF$ & $46.99$ & 0.34 & $3499$ & $91$ & $4154$ & $268$ & $2325$ \\ 
[.2em]
\multicolumn{8}{c}{\dstSamba{}} \\ 
\midrule
$\collS$ & $28.42$ & 0.14 & $429$ & -- & -- & -- & $2657$ \\ 
$\collF$ & $28.37$ & 0.13 & $409$ & $0$ & $17$ & $0$ & $2657$ \\ 
\bottomrule
\end{tabular}
\end{minipage}
\hfill
\begin{minipage}{.47\textwidth}
\begin{tabular}{@{\hspace{.2ex}}l@{\hspace{.6ex}}r@{\hspace{.7ex}}r@{\hspace{.7ex}}c@{/}c@{/}c@{/}c@{\hspace{.8ex}}r@{\hspace{.2ex}}}
\toprule
 & $\prcCl{}~~$ & $\ratioClR{}$ & \parbox[b][1em][b]{.4cm}{\centering $\nbS$} & \parbox[b][1em][b]{.4cm}{\centering $\nbV$} & \parbox[b][1em][b]{.4cm}{\centering $\nbH$} & \parbox[b][1em][b]{.4cm}{\centering $\nbTD$} & $\nbOmax$ \\ 
\cmidrule{2-8} \\ [-1.2em]
\multicolumn{8}{c}{\dstBugz{0}} \\ 
\midrule
$\collS$ & $48.58$ & 0.12 & $262$ & -- & -- & -- & $1652$ \\ 
$\collV$ & $48.56$ & 0.12 & $259$ & $1$ & -- & -- & $1652$ \\ 
$\collH$ & $42.43$ & 0.12 & $133$ & -- & $70$ & -- & $1652$ \\ 
$\collVH$ & $42.39$ & 0.12 & $130$ & $1$ & $72$ & -- & $1652$ \\ 
$\collF$ & $42.41$ & 0.13 & $124$ & $1$ & $70$ & $2$ & $1652$ \\ 
[.2em]
\multicolumn{8}{c}{\dstSachaG{15}} \\ 
\midrule
$\collS$ & $74.34$ & 0.37 & $9602$ & -- & -- & -- & $304$ \\ 
$\collF$ & $68.64$ & 0.35 & $3957$ & $0$ & $2996$ & $0$ & $582$ \\ 
\bottomrule
\end{tabular}
\end{minipage}
\end{table}
}
\VShortOnly{
\begin{figure}[tbp]
\centering
\caption{Compression ratios for the sequences from the \dstUbiAR{\iAbs} dataset.}
\label{fig:xps-prcCl-ubi-abs}
\includegraphics[trim=20 0 40 0,clip,width=.49\textwidth]{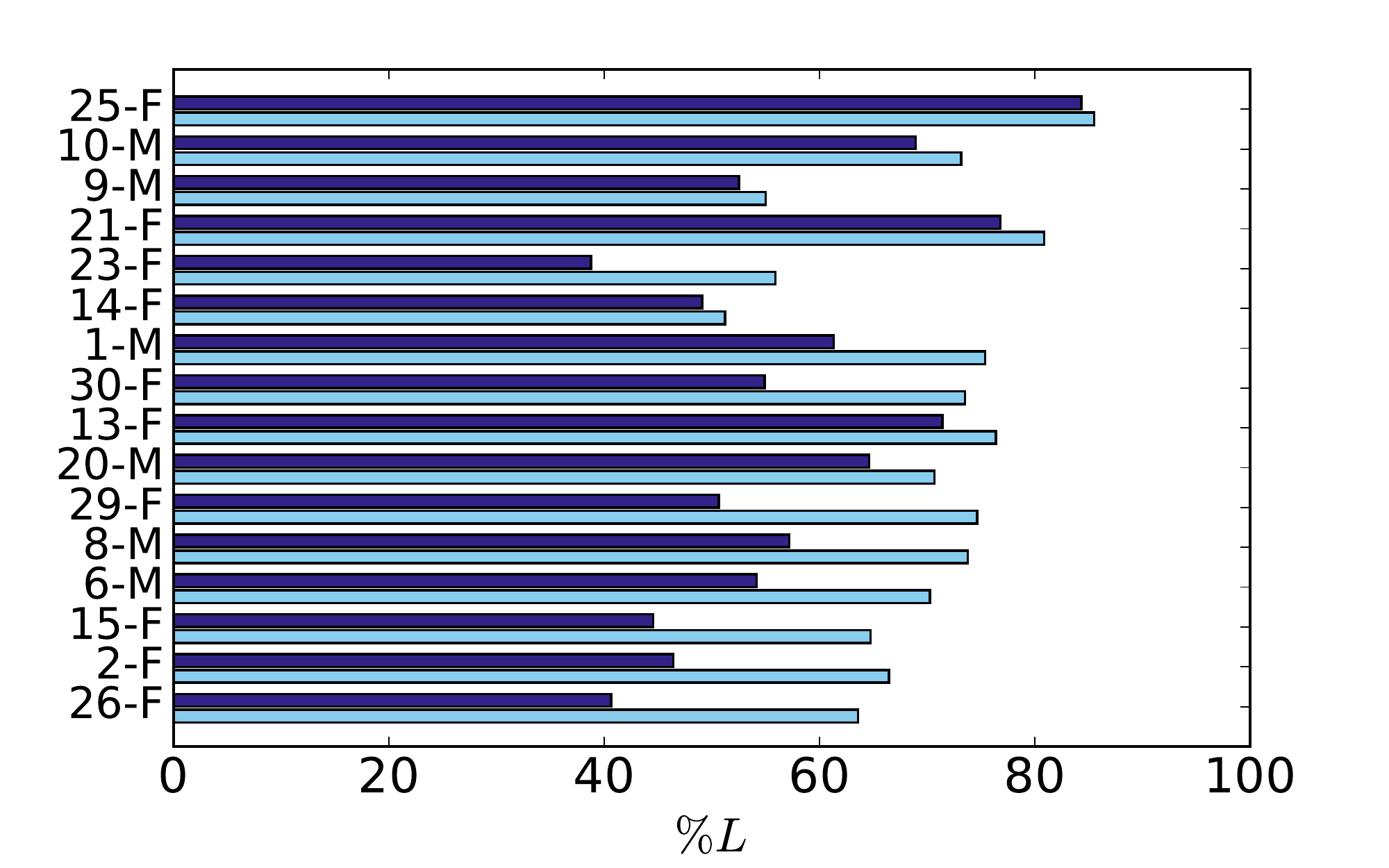}
\includegraphics[trim=20 0 40 0,clip,width=.49\textwidth]{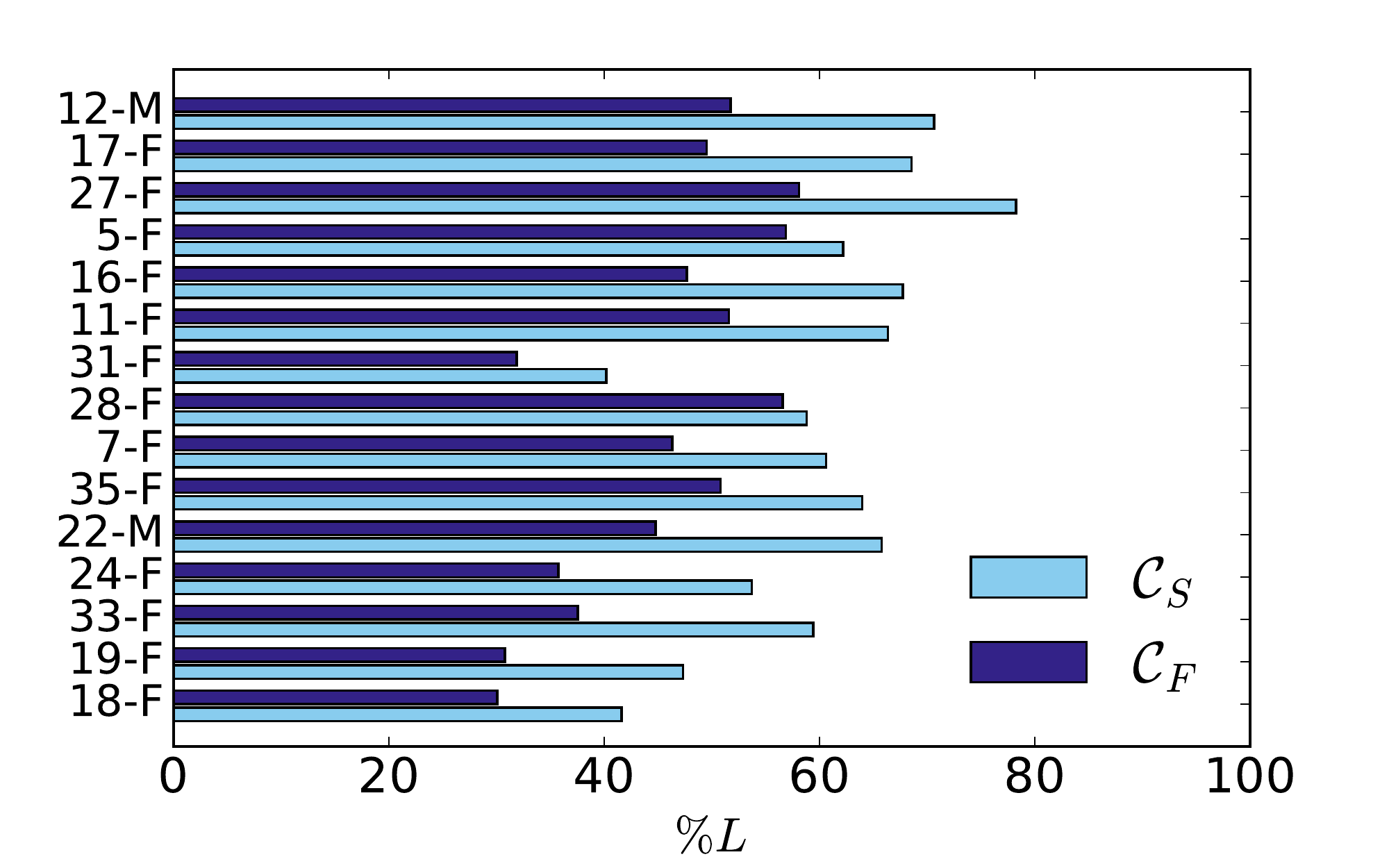}
\end{figure}}

\mpara{Measures.}
Beside\VLongOnly{ the code length and} the compression ratio\VShortOnly{ ($\prcCl$)} achieved with the selected pattern collections, we consider several other characteristics\VShortOnly{ (see Table~\ref{tab:res-short})}.
For a given pattern collection $\ccycle$, we denote the set of residuals $\residual{\ccycle, \seq}$ simply as $\resSet$\label{sym:resSet} and look at what fraction of the code length is spent on them, denoted as $\ratioClR\label{sym:ratioClR} = \sum_{o \in \resSet} \cl(o)/\clCC$. \VLongOnly{Note that when the pattern collection is empty $\ratioClR=1$, since only residuals are used, and hence the code length results entirely from residuals.
$\abs{\resSet}$ and $\abs{\ccycle}$ are the number of residuals (individual event occurrences) and the number of patterns in the collection, respectively.} We also look at the number of patterns of different types in $\ccycle$\VShortOnly{:}\VLongOnly{, specifically, \textit{(i)}}\VShortOnly{ ($\nbS$)} simple cycles, i.e.\ patterns with \VLongOnly{both width and height equal to $1$}\VShortOnly{$\text{width}=1$ and $\text{height}=1$}, \VLongOnly{\textit{(ii)}}\VShortOnly{($\nbV$)} vertical patterns, \VLongOnly{having a width of $1$ and a height strictly greater than $1$}\VShortOnly{with $\text{width}=1$ and $\text{height}>1$}, \VLongOnly{\textit{(iii)}}\VShortOnly{($\nbH$)} horizontal patterns, \VLongOnly{having a height of $1$ and a width strictly greater than $1$}\VShortOnly{with $\text{width}>1$ and $\text{height}=1$}, and \VLongOnly{\textit{(iv)}}\VShortOnly{($\nbTD$)} proper two-dimensional patterns, \VLongOnly{with both height and width greater than $1$}\VShortOnly{with $\text{width}>1$ and $\text{height}>1$}.
Finally, we look at the \VLongOnly{fraction of patterns in $\ccycle$ that cover strictly more than three occurrences, i.e.\
\[\nbOTC = \abs{\{ P \in \ccycle, \abs{\cov{P}} > 3\}}/\abs{\ccycle}\;,\]
where $\cov{P}$ denotes the set of timestamp--event pairs covered by a pattern $P$, and the median and }maximum cover size of patterns in $\ccycle$\VShortOnly{, denoted as $\nbOmax$}. 

\mpara{Results.}
\VShortOnly{In addition to the final collection of patterns returned by the algorithm after potentially a few rounds of combinations (denoted as $\collF$), we also consider intermediate collections of patterns, namely a collection selected among cycles mined during the first stage of the algorithm (denoted as $\collS$), including patterns from the first round of horizontal combinations ($\collH$), of vertical combinations ($\collV$), and both, i.e.\ at the end of the first round of combinations ($\collVH$).}
\VLongOnly{To better understand the role of the pattern combinations, in addition to looking at the final collection of patterns returned by the algorithm (denoted as $\collF$), we also consider intermediate collections of patterns, namely a collection selected among simple cycles mined during the initial phase of the algorithm (denoted as $\collS$), a collection selected among simple cycles and patterns resulting from the first round of horizontal combinations (denoted as $\collH$), from the first round of vertical combinations (denoted as $\collV$) and from both, or in other words among the candidate patterns obtain at the end of the first round of combinations (denoted as $\collVH$).

Table~\ref{tab:res-long-traces} shows the results for application trace log sequences \dstTZap{0}, \dstTZap{1}, \dstBugz{0}, \dstBugz{1} and \dstSamba{}.
Table~\ref{tab:res-long-sacha_1} shows the results for \dstSacha{} sequences when considering timestamps with different time granularities, as well as when considering only the event succession.
Tables~\ref{tab:res-long-UbiqLog-ISE-abs_1}--\ref{tab:res-long-UbiqLog-ISE-abs_5} show the results for the sequences from the \dstUbiAR{\iAbs} dataset, while tables~\ref{tab:res-long-UbiqLog-IS-rel_1}--\ref{tab:res-long-UbiqLog-IS-rel_5} show the results for the sequences from the \dstUbiAR{\iRel} dataset.

For each sequence and pattern collection we indicate the compression ratio ($\prcCl{}$), the code length ($\cl_{\ccycle}$), the fraction of code used for residual ($\ratioClR{}$), the number of residuals ($\abs{\resSet}$) and of patterns ($\abs{\ccycle}$), the number of simple, vertical, horizontal and two-dimensional patterns ($\nbS$, $\nbV$, $\nbH$, and $\nbTD$, respectively), the fraction of patterns covering more than three occurrences ($\nbOTC$) as well as the median ($\nbOmed$) and the maximum ($\nbOmax$) cover size of patterns in the collection.
}

\VShortOnly{A summary of the results for the separate event sequences \dstTZap{0}, \dstBugz{0}, \dstSamba{} and \dstSacha{}, is presented in Table~\ref{tab:res-short}.}
\VLongOnly{Table~\ref{tab:res-agg} shows aggregated results for the \dstUbiAR{\iAbs} and \dstUbiAR{\iRel} datasets, where we indicate the range of values taken for the different sequences in each subset.}
\VShortOnly{Fig.~\ref{fig:xps-prcCl-ubi-abs} shows the compression ratios achieved on event sequences from the \dstUbiAR{\iAbs} dataset.}
\VLongOnly{Fig.~\ref{fig:xps-prcCl-other}--\ref{fig:xps-prcCl-ubi-rel} show the compression ratios achieved for sequences from the different datasets.}

\VLongOnly{
\begin{table}
\caption{Aggregated results for \dstUbi{} sequences.}
\label{tab:res-agg}
\centering
\begin{tabular}{@{}c@{\hspace{1ex}}r@{\hspace{.7ex}}r@{\hspace{.7ex}}c@{/}c@{/}c@{/}c@{\hspace{.4ex}}r@{}}
\toprule
 & $\prcCl{}$ & $\ratioClR{}$ & \parbox[b][1em][b]{.2cm}{\centering $\nbS$} & \parbox[b][1em][b]{.2cm}{\centering $\nbV$} & \parbox[b][1em][b]{.2cm}{\centering $\nbH$} & \parbox[b][1em][b]{.2cm}{\centering $\nbTD$} & $\nbOmax$ \\ 
\cmidrule{2-8} \\ [-.5em]
\multicolumn{8}{c}{\dstUbiAR{\iAbs} (31)} \\ 
\midrule
$\collS$ & [$40.18$, $85.52$] & [$0.22$, $0.60$] & [$41$, $9468$] & [$0$, $0$] & [$0$, $0$] & [$0$, $0$] & [$17$, $388$] \\ 
$\collV$ & [$40.17$, $85.52$] & [$0.23$, $0.60$] & [$41$, $9445$] & [$0$, $57$] & [$0$, $0$] & [$0$, $0$] & [$17$, $388$] \\ 
$\collH$ & [$30.08$, $84.33$] & [$0.24$, $0.60$] & [$31$, $3113$] & [$0$, $0$] & [$5$, $2256$] & [$0$, $0$] & [$17$, $2328$] \\ 
$\collVH$ & [$30.08$, $84.33$] & [$0.24$, $0.60$] & [$31$, $3107$] & [$0$, $4$] & [$5$, $2252$] & [$0$, $0$] & [$17$, $2328$] \\ 
$\collF$ & [$30.06$, $84.33$] & [$0.24$, $0.60$] & [$31$, $3102$] & [$0$, $2$] & [$5$, $2233$] & [$0$, $11$] & [$17$, $2328$] \\ 
[.5em]
\multicolumn{8}{c}{\dstUbiAR{\iRel} (31)} \\ 
\midrule
$\collS$ & [$26.05$, $64.94$] & [$0.12$, $0.45$] & [$9$, $2567$] & [$0$, $0$] & [$0$, $0$] & [$0$, $0$] & [$158$, $8500$] \\ 
$\collV$ & [$26.05$, $64.94$] & [$0.12$, $0.45$] & [$9$, $2567$] & [$0$, $2$] & [$0$, $0$] & [$0$, $0$] & [$158$, $8500$] \\ 
$\collH$ & [$25.91$, $63.48$] & [$0.12$, $0.41$] & [$9$, $2083$] & [$0$, $0$] & [$0$, $339$] & [$0$, $0$] & [$158$, $35300$] \\ 
$\collVH$ & [$25.91$, $63.48$] & [$0.12$, $0.41$] & [$9$, $2083$] & [$0$, $2$] & [$0$, $334$] & [$0$, $0$] & [$158$, $35300$] \\ 
$\collF$ & [$25.91$, $63.48$] & [$0.12$, $0.41$] & [$9$, $2083$] & [$0$, $2$] & [$0$, $334$] & [$0$, $1$] & [$158$, $35300$] \\
\bottomrule
\end{tabular}
\end{table}
}

We see that the algorithm is able to find sets of patterns that compress the input event sequences. The compression ratio varies widely depending on the considered sequence, from a modest $84\%$ for some sequences from \dstUbiAR{\iAbs} to a reduction of more than two thirds, for instance for \dstSamba{}.
To an extent, the achieved compression can be interpreted as an indicator of how much periodic structure is present in the sequence (at least of the type that can be exploited by our proposed encoding and detected by our algorithm).
In some cases, as with \dstSamba{}, the compression is achieved almost exclusively with simple cycles, but in many cases the final selection contains a large fraction of horizontal patterns (sometimes even about two thirds), which bring a noticeable improvement in the compression ratio (as can be seen in Fig.~\ref{fig:xps-prcCl-ubi-abs}, for instance).    
Vertical patterns, on the other hand, are much more rare, and proper two-dimensional patterns are almost completely absent. The \dstBugzX{} sequence\VLongOnly{s} feature\VShortOnly{s} such patterns, and even more so the \dstTZapX{} sequence\VLongOnly{s}. This agrees with the intuition that recursive periodic structure is more likely to be found in execution logs tracing multiple recurrent automated processes. 

\VLongOnly{
\begin{figure}[tbp]
  \centering
  \begin{tabular}{@{}ccc@{}}
    \includegraphics[trim=5 0 20 0,width=.32\textwidth,clip]{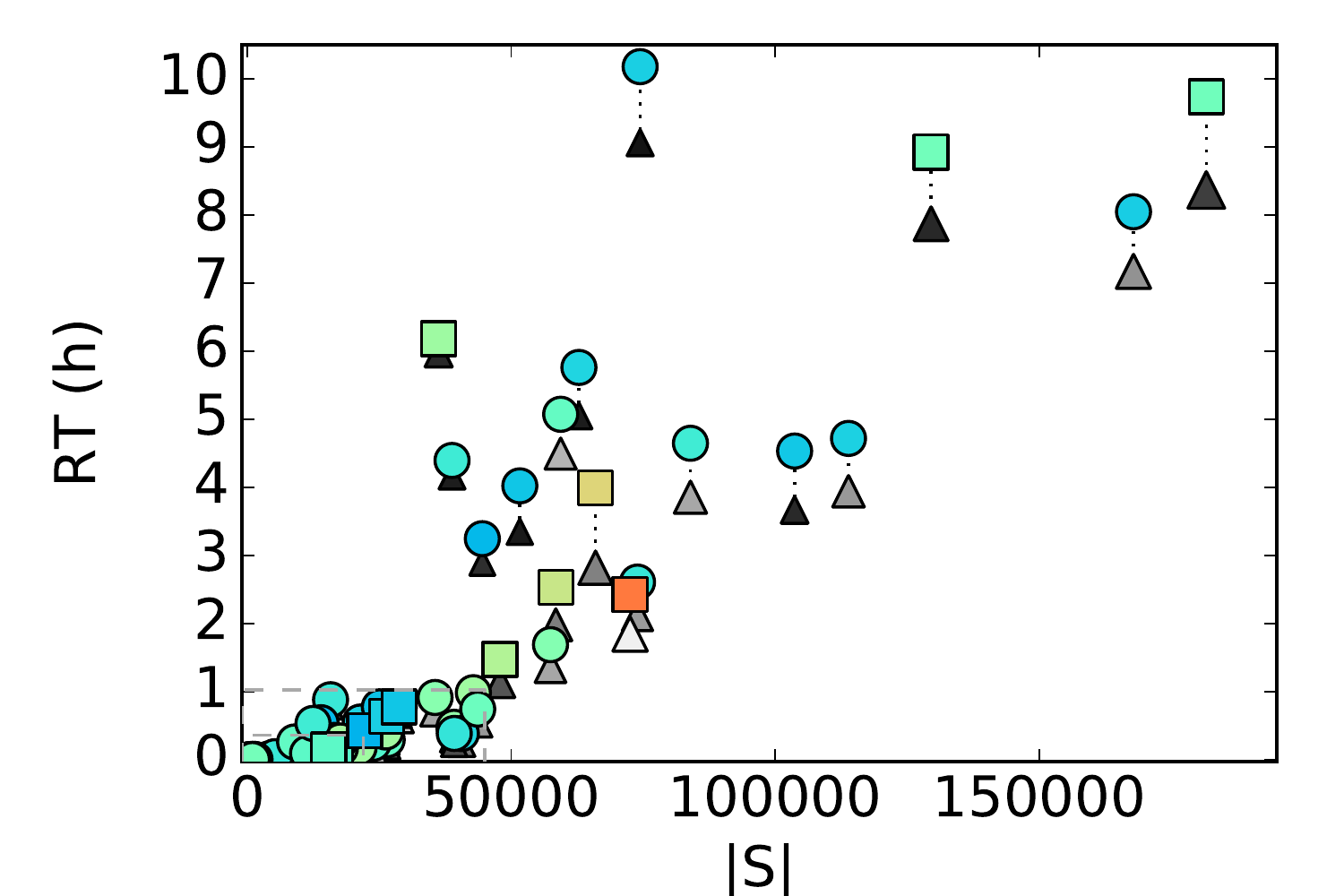} &
    \includegraphics[trim=5 0 20 0,width=.32\textwidth,clip]{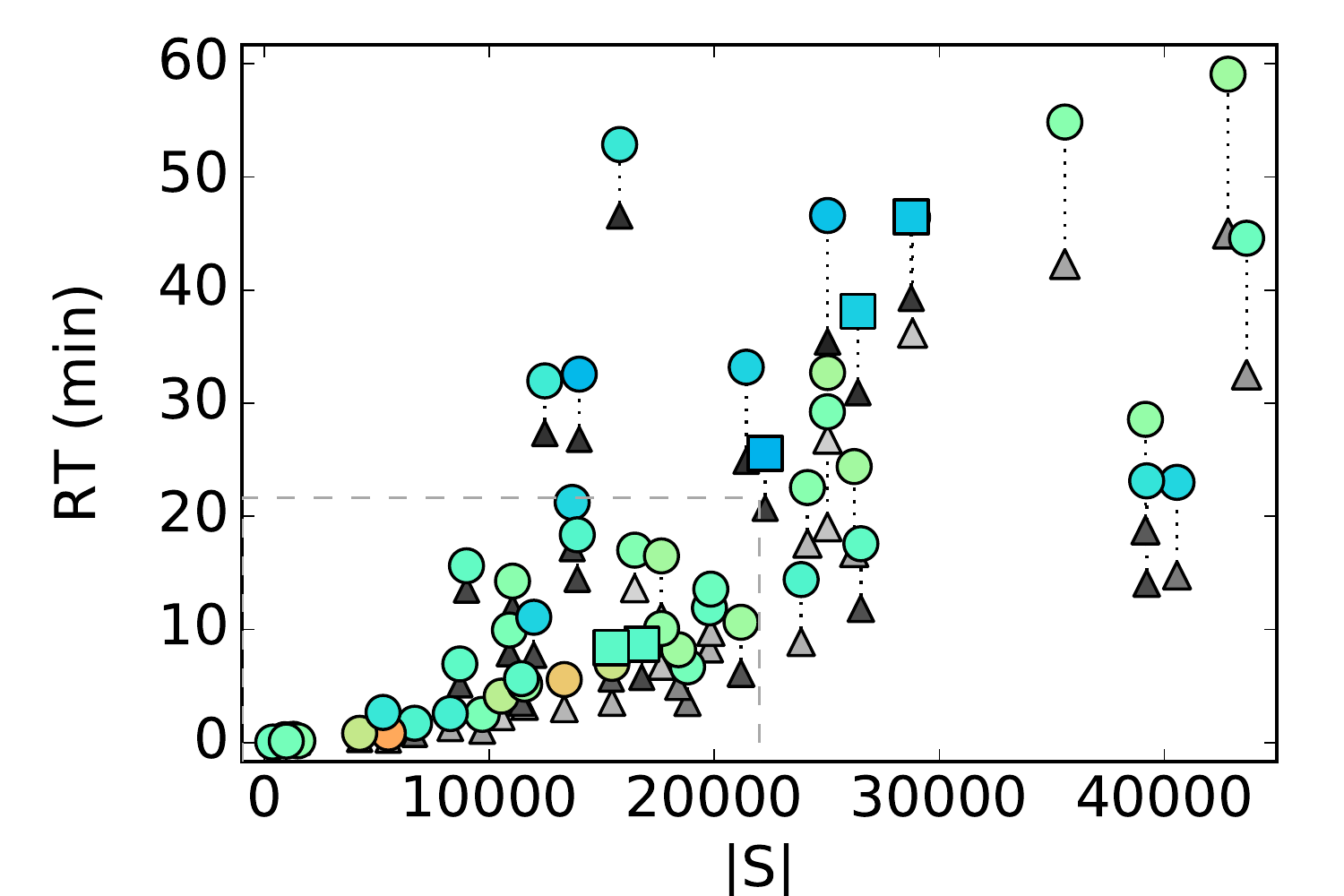} &
    \includegraphics[trim=5 0 20 0,width=.32\textwidth,clip]{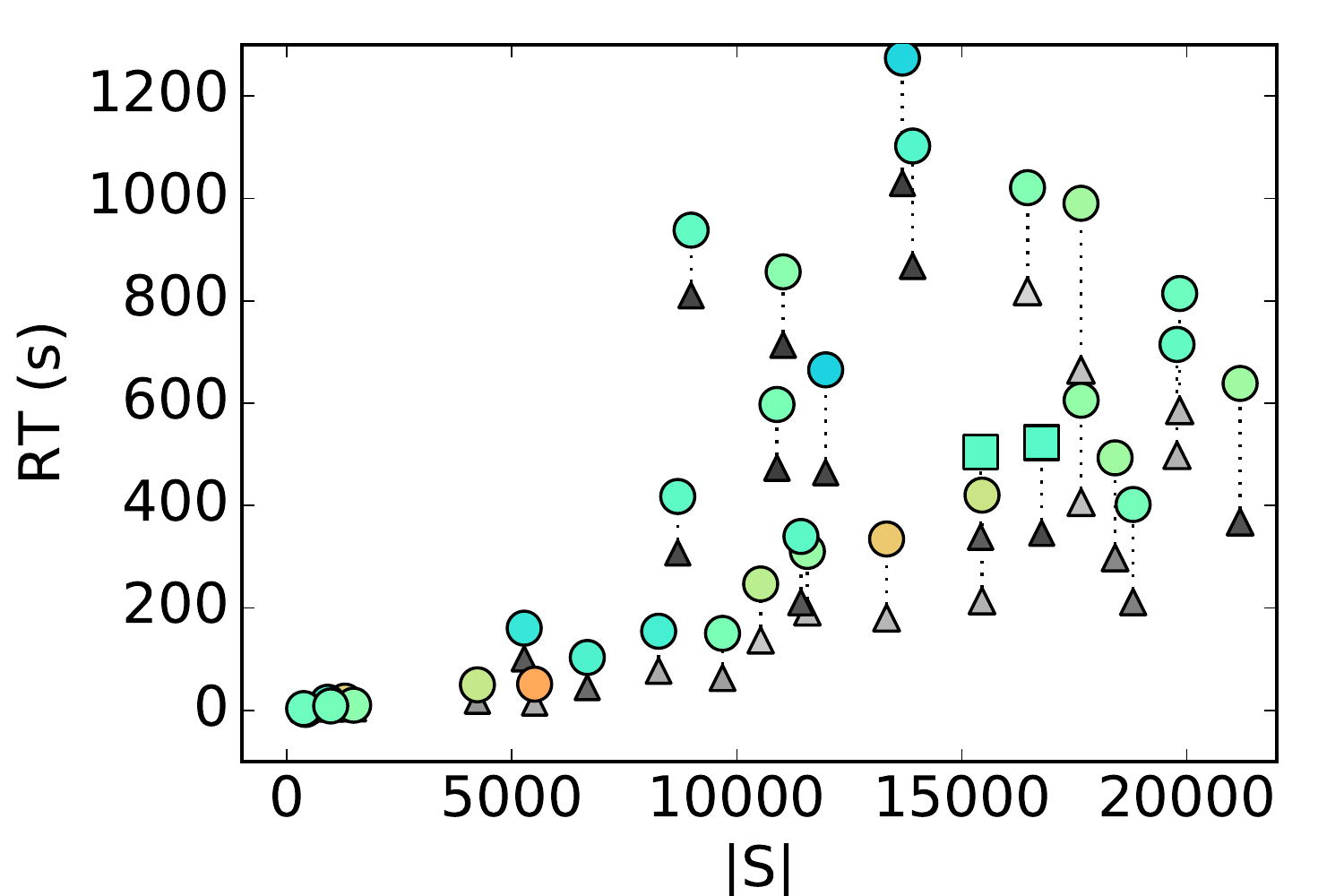} \\
    \multicolumn{3}{c}{\hfill \includegraphics[trim=100 133 10 222,width=.45\textwidth,clip]{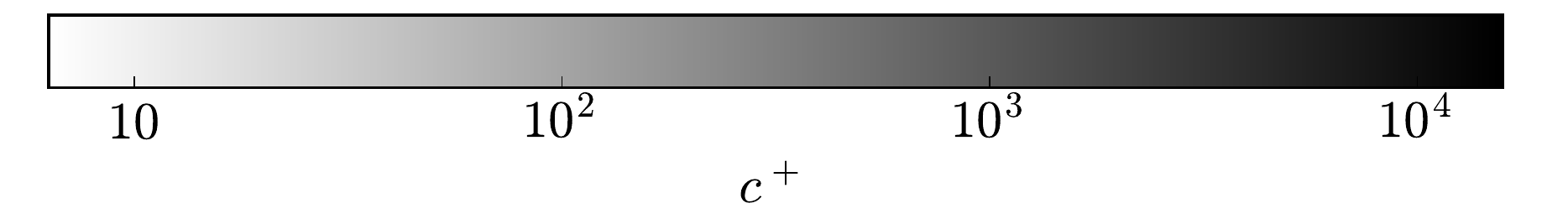} \hfill
    \includegraphics[trim=100 45 10 310,width=.45\textwidth,clip]{fig_times_l_details} \hfill} \\
  \end{tabular}
\caption{Running times for sequences from the different datasets, in hours (left) and zoomed-in in minutes (middle) and seconds (right).}
\label{fig:xps-times}
\end{figure}
}

\VShortOnly{
\begin{figure}[tbp]
  \centering
  \begin{tabular}{@{}cc@{}}
    \includegraphics[trim=5 0 20 0,height=3.5cm,clip]{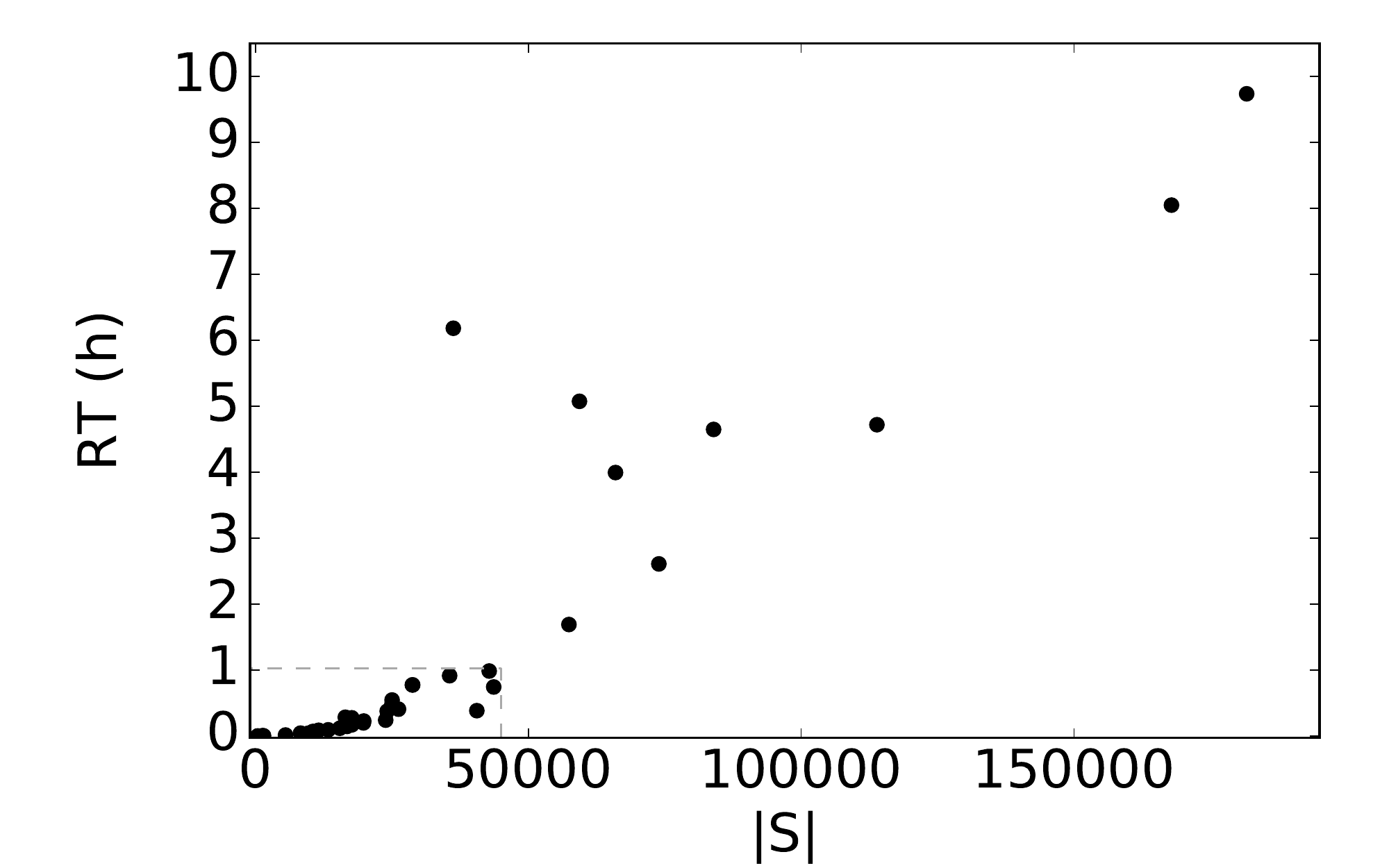} &
    \includegraphics[trim=5 0 20 0,height=3.5cm,clip]{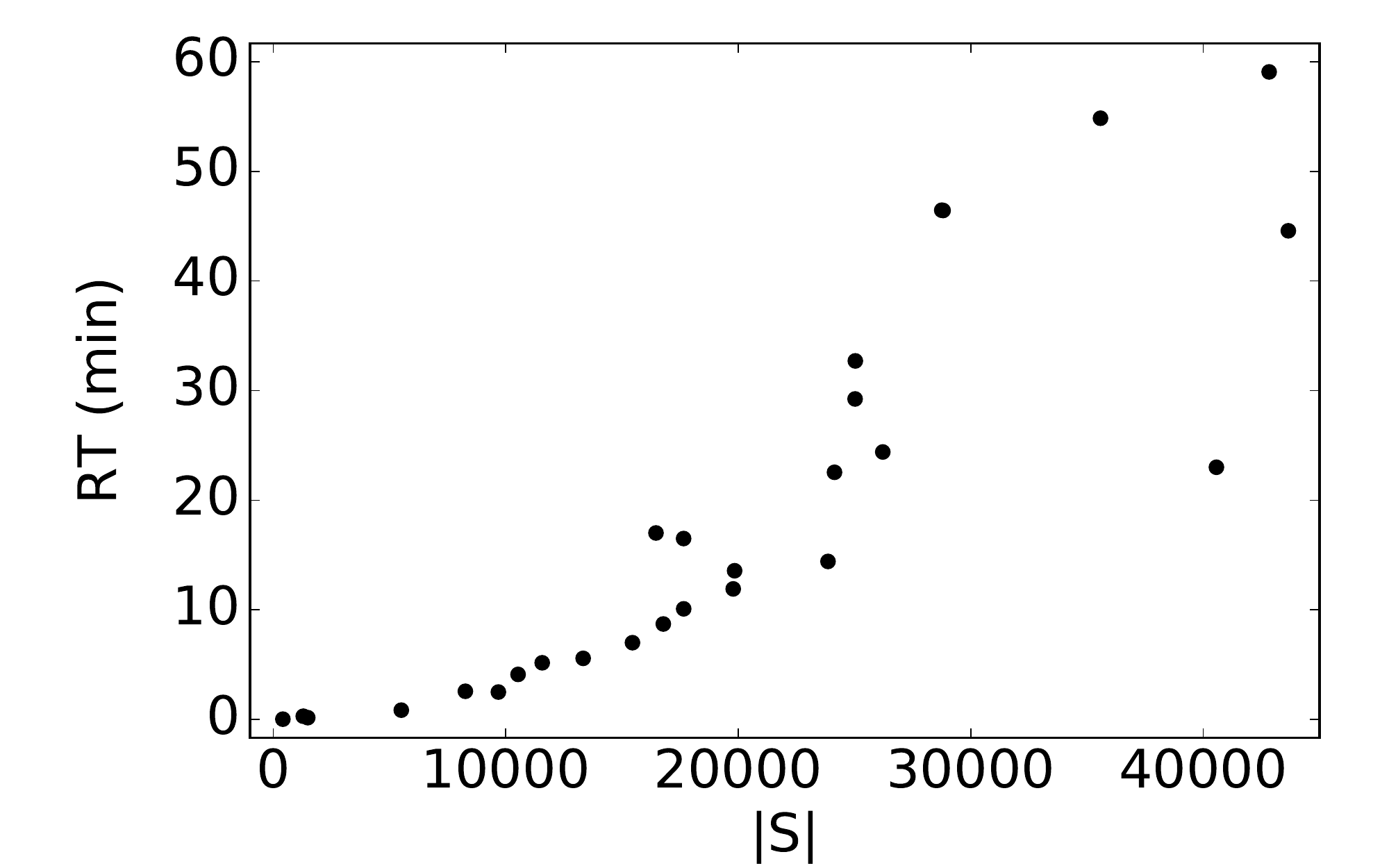} \\
  \end{tabular}
\caption{Running times for mining the different sequences (in hours, left) and zooming in on shorter sequences (in minutes, right).}
\label{fig:xps-times}
\end{figure}
}

\VLongOnly{
In most cases, a large proportion of the selected patterns cover more than the minimum three timestamp--event pairs. Some of the largest patterns cover several hundreds or a few thousand occurrences, depending on the length of the input sequence, obviously, as well as the strength of its periodic structure).   
Obviously, the more occurrences a pattern covers, the more efficient it is, assuming it can be represented concisely.

From Table~\ref{tab:res-long-sacha_1} we can see that the chosen time granularity has a strong impact on the extracted patterns. With the finest time granularity, i.e.\ $1$ minute time step (\dstSachaG{1}), few patterns are found because the activities need to reoccur with minute regularity and any deviation must be accounted in the shift corrections. Therefore periodic patterns are not very efficient and only little compression is achieved. When increasing the time granularity to $15$ minutes, $30$ minutes and to $1$ hour (respectively \dstSachaG{15}, \dstSachaG{30} and \dstSachaG{60}) allows to be more forgiving of small deviations the exact times when activities happen, resulting in more efficient patterns found. This is evidenced by a sharp decrease in the fraction of simple cycles ($\nbS/\abs{\ccycle}$) and increase in the fraction of patterns covering more than three occurrences ($\nbOTC$) and the maximum cover size ($\nbOmax$).
Further coarsening the time granularity, to a half day and a full day (\dstSachaG{720} and \dstSachaG{1440}) the fraction of simple cycles among the selected pattern increases again, but this time each one covers a large number of occurrences.
At such level of granularity, the time and order in which the activities are carried out during the day no longer matter, only which activities are performed on any given day.
Finally, with type of data considering the succession of activities rather than absolute timestamps (\dstSachaAR{\iRel}) might allow to identify fairly different patterns, since activities in a pattern are no longer separated by a time span but by the number of other activities performed in between. However, in this context, this can result in patterns that are difficult to understand, since they cannot be easily mapped back to time points and hence calendar dates and hours of the days cannot be used when interpreting the patterns.  
Hence, the choice of using succession or absolute timestamps, and, in the latter case, of choosing the granularity of the time step, has to be made by the analyst in consideration of the context and the time scale that is of interest.

In some cases (e.g.\ \dstBugz{0} in Table~\ref{tab:res-long-traces}, \dstSachaG{60} in Table~\ref{tab:res-long-sacha_1} and several \dstUbi{} sequences), the collection of patterns selected from the final set of candidates, $\collF$, achieves worse compression than collections selected from intermediate sets of candidates, despite the fact that the intermediate candidate sets are subsets of the final one. This is due to the fact that the pattern selection, which is in essence a weighted set cover problem is solved greedily (see Section~\ref{sec:algo}), and a local decision of choosing a more efficient pattern produced in later combination rounds, might eventually result in degraded compression. However, the degradation is fairly limited and one might simply decide to replace the final solution by an intermediate one, when the candidates produced later on do not appear to contribute to shortening the code length.}

Fig.~\ref{fig:xps-times} shows the running times for sequences from the different datasets.
\VLongOnly{Circles and squares, coloured according to achieved compression ratio, indicate the running time of the algorithm for sequences from the \dstUbi{} dataset and from other datasets, respectively.
Each such marker is connected to a triangle indicating the running time for the combination rounds. Larger triangles correspond to sequences for which more simple cycles are extracted during the initialisation phase. Darker triangles correspond to sequences for which the maximum cover size among these simple cycles is larger.}
The running times vary greatly, from only a few seconds to several hours. Naturally, mining longer sequences tends to require longer running times. \VLongOnly{However, directly observable characteristics of the sequence, such as its size, the size of its alphabet, relative frequencies of the events, etc.\ are not the only factors impacting the running time. The number and length of the cycles extracted in the first stage have a major effect on the time required by the combination rounds, i.e.\ the second stage, which take the bulk of the overall running time. Indeed, if the initial candidates contain many long cycles, many more tests will be needed when trying to combine them into more complex patterns.}

\setlength{\ndhlen}{2.2em}
\setlength{\ndwlen}{.8cm}

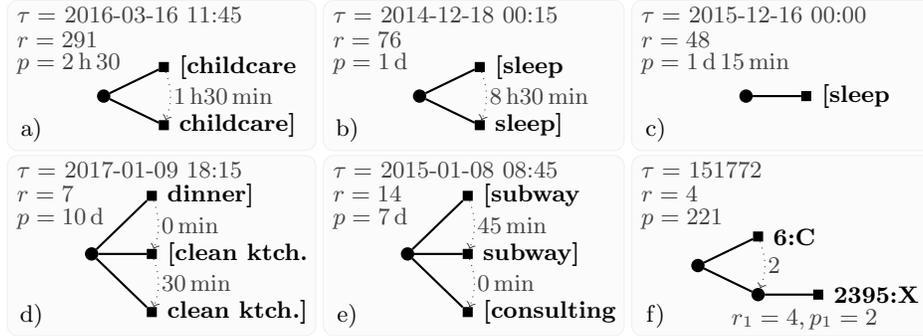
\begin{figure}[tbp] %
\centering
\begin{tikzpicture}[-,auto,node distance=\ndhlen, thick]
\draw[info rec, xshift=-1.4\ndwlen, yshift=1.4\ndhlen+.7em] (0, 0) rectangle +(4.1, -2.4) {};
\threeEvEx{0,0}{7}{\SI{10}{\day}}{2017-01-09 18:15}{dinner]}{\SI{0}{\minute}}{[clean ktch.}{\SI{30}{\minute}}{clean ktch.]}
\node[xshift=-1.\ndwlen, yshift=1.4\ndhlen+1.2em] at (0,-2.3) {d)};
\draw[info rec, xshift=-1.4\ndwlen, yshift=1.4\ndhlen+.7em] (4.2, 0) rectangle +(4., -2.4) {};
\threeEvEx{4.2,0}{14}{\SI{7}{\day}}{2015-01-08 08:45}{[subway}{\SI{45}{\minute}}{subway]}{\SI{0}{\minute}}{[consulting}
\node[xshift=-1.\ndwlen, yshift=1.4\ndhlen+1.2em] at (4.2,-2.3) {e)};
\draw[info rec, xshift=-1.4\ndwlen, yshift=1.4\ndhlen+.7em] (8.3, 0) rectangle +(4., -2.4) {};
\otherEvEx{8.3,0}{4}{221}{151772}{6:C}{2}{2395:X}{4}{2}
\node[xshift=-1.\ndwlen, yshift=1.4\ndhlen+1.2em] at (8.3,-2.3) {f)};

\draw[info rec, xshift=-1.4\ndwlen, yshift=1.4\ndhlen+.9em] (0, 2.) rectangle +(4.1, -2.) {};
\twoEvEx{0,2.1}{291}{\SI{2}{\hour}\,30}{2016-03-16 11:45}{[childcare}{\SI{1}{\hour} \SI{30}{\minute}}{childcare]}
\node[xshift=-1.\ndwlen, yshift=1.4\ndhlen+1.2em] at (0,.18) {a)};
\draw[info rec, xshift=-1.4\ndwlen, yshift=1.4\ndhlen+.9em] (4.2, 2.) rectangle +(4., -2.) {};
\twoEvEx{4.2,2.1}{76}{\SI{1}{\day}}{2014-12-18 00:15}{[sleep}{\SI{8}{\hour} \SI{30}{\minute}}{sleep]}
\node[xshift=-1.\ndwlen, yshift=1.4\ndhlen+1.2em] at (4.2,.18) {b)};
\draw[info rec, xshift=-1.4\ndwlen, yshift=1.4\ndhlen+.9em] (8.3, 2.) rectangle +(4., -2.) {};
\oneEvEx{8.3,2.1}{48}{\SI{1}{\day}\,\SI{15}{\minute}}{2015-12-16 00:00}{[sleep}
\node[xshift=-1.\ndwlen, yshift=1.4\ndhlen+1.2em] at (8.3,.18) {c)};
\end{tikzpicture}
\caption{Example patterns from \dstSachaG{15} (a--e) and \dstTZap{0} (f).}
\label{fig:res-ex}
\end{figure}

\mpara{Example patterns.}
Finally, we present some examples of patterns obtained from the \dstSachaG{15} and \dstTZap{0} sequences, in Fig.~\ref{fig:res-ex}. The start and end of an activity A are denoted as ``[A'' and ``A]'' respectively.
The patterns from the \dstSachaG{15} sequence are simple and rather obvious, but they make sense when considering everyday activities. The fact that we are able to find them is a clear sign that the method is working. The \dstTZap{0} pattern is a typical system case: the repetition of a context switch (6:C) followed by several activations of a process (2395:X). 
\VLongOnly{Further examples can be found in Tables~\ref{tab:res-ex-sacha} and~\ref{tab:res-ex-zap}. In \dstTZap{0} patterns, event names consist of a numerical part, indicating the process id, and one or two letter indicating the action. Upper and lower case letters represent the start and end of an action, respectively. The most common actions are interruption (I), context switch (C), system call (X), user function call (U).}

\medskip

Most of the discovered patterns are fairly simple. We suspect
that this is due to the nature of the data: there are no significantly complex patterns
in these event log sequences.
In any case, the expressivity of our proposed pattern language comes at no detriment to the simpler, more common patterns, but brings the potential benefit of identifying sequences containing exceptionally regular structure.

\section{Conclusion}
\label{sec:conclusion}

In this paper, we propose a novel approach for mining periodic patterns with a MDL criterion, and an algorithm to put it into practise.
Through our experimental evaluation, we show that we are able to extract sets of patterns that compress the input event sequences and to identify meaningful patterns. 

\VLongOnly{An analyst parsing a log might have some intuition about what periods are more meaningful, as well as relations and dependencies between events, depending on the generating process. For instance, we expect days and weeks to strongly structure life tracking logs, while patterns with periods of, say, 21 hours or 17 days would be considered less intuitive. }
How to take\VLongOnly{ such} prior knowledge into account is an interesting question to explore.\VLongOnly{

} Making the algorithm more robust to noise and making it more scalable using for instance parallelisation, are some pragmatic directions for future work, as is adding a visualisation tool to support the analysis and interpretation of the extracted patterns in the context of the event log sequence.

\mpara{Acknowledgements.}
The authors thank Hiroki Arimura and Jilles Vreeken for valuable discussions.
This work has been supported by Grenoble Alpes Metropole through the Nano2017 Itrami project, by the QCM-BioChem project (CNRS Mastodons) and by the Academy of Finland projects ``Nestor'' (286211) and ``Agra'' (313927).

\bibliographystyle{abbrv}
\bibliography{lib}

\appendix
\FloatBarrier
\newgeometry{margin=1.in,asymmetric}
\renewcommand\thefigure{A.\arabic{figure}}
\renewcommand\thetable{A.\arabic{table}}

%%% Large figures and examples
\begin{figure}[p]
\centering
\begin{tabular}{@{\hspace{.5cm}}rcc@{\hspace{.5cm}}rcc@{\hspace{.5cm}}}
\toprule
$\Ptree_1$ & \qquad \BinfoRPT{4}{2} \Bstart{}\activity{a}\Bend{} & & $\Ptree_2$ & \qquad \BinfoRPT{3}{13} \Bstart{}\activity{a}\Bend{} & \\
&
\begin{tikzpicture}[-,auto]
\TreeOneEvtOneCyclVar{t1}{at (2,0)}{$4,2$}{a}
\end{tikzpicture} &
\begin{tikzpicture}[-,auto]
\LowestRep{1}{0}{0}{4}
\end{tikzpicture} & &
\begin{tikzpicture}[-,auto]
\TreeOneEvtOneCyclVar{t1}{at (2,0)}{$3,13$}{a}
\end{tikzpicture} &
\begin{tikzpicture}[-,auto]
\LowestRep{1}{0}{0}{3}
\end{tikzpicture}
\\ [.5em]
$\mapOids{\Ptree_1}=$ & \multicolumn{2}{p{.33\textwidth}}{$\lls (\Pblock_1, \LL{1}),$ $(\Pblock_1, \LL{2}),$ $(\Pblock_1, \LL{3}),$ $(\Pblock_1, \LL{4})\lle$} &
$\mapOids{\Ptree_2}=$ & \multicolumn{2}{p{.33\textwidth}}{$\lls (\Pblock_1, \LL{1}),$ $(\Pblock_1, \LL{2}),$ $(\Pblock_1, \LL{3})\lle$} \\
$\occsStar{\Ptree_1}=$ & \multicolumn{2}{p{.33\textwidth}}{$\lls (0, a),$ $(2, a),$ $(4, a),$ $(6, a) \lle$} &
$\occsStar{\Ptree_2}=$ & \multicolumn{2}{p{.33\textwidth}}{$\lls (0, a),$ $(13, a),$ $(26, a) \lle$} \\
\end{tabular}

\begin{tabular}{rcc}
\midrule
$\Ptree_3$ & \BinfoRPT{3}{13} \Bstart{}\BinfoRPT{4}{2} \Bstart{}\activity{a}\Bend{}\Bend{} & \\
& \begin{tikzpicture}[-,auto]
\TreeOneEvtTwoCyclVar{t2}{at (4,0)}{$3,13$}{$4,2$}{a}
\end{tikzpicture} &
\renewcommand{\Xslant}{2.5}
\renewcommand{\Yslant}{-.3}
\begin{tikzpicture}[-,auto]
  \path[rep edge] (0,0) edge ({(3-\rootf)*\Xslant},{(3-\rootf)*\Yslant});
   \node[main node] (R00) at (0,0) {};
\foreach \j in {1,...,3}{
 \LowestRep{\j}{(\j-\rootf)*\Xslant}{(\j-\rootf)*\Yslant}{4}
}
\end{tikzpicture} \\ [.5em]
$\mapOids{\Ptree_3}=$ & \multicolumn{2}{p{.8\textwidth}}{$\lls (\Pblock_{11}, \LL{1,1}),$ $(\Pblock_{11}, \LL{1,2}),$ $(\Pblock_{11}, \LL{1,3}),$ $(\Pblock_{11}, \LL{2,1}),$ $(\Pblock_{11}, \LL{2,2}),$ $(\Pblock_{11}, \LL{2,3}),$ $(\Pblock_{11}, \LL{3,1}),$ $(\Pblock_{11}, \LL{3,2}),$ $(\Pblock_{11}, \LL{3,3})\lle$} \\
$\occsStar{\Ptree_3}=$ & \multicolumn{2}{p{.8\textwidth}}{$\lls (0, a),$ $(2, a),$ $(4, a),$ $(6, a),$ $(13, a),$ $(15, a),$ $(17, a),$ $(19, a),$ $(26, a),$ $(28, a),$ $(30, a),$ $(32, a) \lle$} \\
\midrule
$\Ptree_4$ & \qquad \BinfoRPT{4}{2} \Bstart{}\BinfoRPT{3}{13} \Bstart{}\activity{a}\Bend{}\Bend{} & \\
& \begin{tikzpicture}[-,auto]
\TreeOneEvtTwoCyclVar{t3}{at (4,0)}{$4,2$}{$3,13$}{a}
\end{tikzpicture} &
\renewcommand{\Xslant}{2}
\renewcommand{\Yslant}{-.25}
\begin{tikzpicture}[-,auto]
  \path[rep edge] (0,0) edge ({(4-\rootf)*\Xslant},{(4-\rootf)*\Yslant});
  \node[main node] (R00) at (0,0) {};
  \foreach \j in {1,...,4}{
  \LowestRep{\j}{(\j-\rootf)*\Xslant}{(\j-\rootf)*\Yslant}{3}
}
\end{tikzpicture} \\ [.5em]
$\mapOids{\Ptree_4}=$ & \multicolumn{2}{p{.8\textwidth}}{$\lls (\Pblock_{11}, \LL{1,1}),$ $(\Pblock_{11}, \LL{1,2}),$ $(\Pblock_{11}, \LL{1,3}),$ $(\Pblock_{11}, \LL{2,1}),$ $(\Pblock_{11}, \LL{2,2}),$ $(\Pblock_{11}, \LL{2,3}),$ $(\Pblock_{11}, \LL{3,1}),$ $(\Pblock_{11}, \LL{3,2}),$ $(\Pblock_{11}, \LL{3,3})\lle$} \\
$\occsStar{\Ptree_4}=$ & \multicolumn{2}{p{.8\textwidth}}{$\lls (0, a),$ $(13, a),$ $(26, a),$ $(2, a),$ $(15, a),$ $(28, a),$ $(4, a),$ $(17, a),$ $(30, a),$ $(6, a),$ $(19, a),$ $(32, a) \lle$} \\
\bottomrule
\end{tabular}
\caption{Pattern trees $\Ptree_1$--$\Ptree_4$: Pattern and expansion trees, lists of leaf nodes and of perfect occurrences. }
\label{fig:ex_tree1a}
\end{figure}

\begin{figure}[p]
\centering
\begin{tabular}{rcc}
\toprule
$\Ptree_5$ & \qquad \BinfoRPT{3}{13} \Bstart{}\activity{b} \BinfoDT{3} \activity{a} \BinfoDT{1} \activity{c}\Bend{} & \\
& \begin{tikzpicture}[-,auto]
\TreeThreeEvtOneCyclVar{t3}{at (9,0)}{$3,13$}{b}{$3$}{a}{$1$}{c}
\end{tikzpicture} &
\renewcommand{\LowestBranch}[4]{\LowestBranchB{#1}{#2}{#3}{#4}}
\renewcommand{\xslant}{1.8}
\renewcommand{\yslant}{-.3}
\begin{tikzpicture}[-,auto]
\LowestRep{1}{0}{0}{3}
\end{tikzpicture} \\ [.5em]
$\mapOids{\Ptree_5}=$ & \multicolumn{2}{p{.8\textwidth}}{$\lls (\Pblock_{1}, \LL{1}),$ $(\Pblock_{2}, \LL{1}),$ $(\Pblock_{3}, \LL{1}),$ $(\Pblock_{1}, \LL{2}),$ $(\Pblock_{2}, \LL{2}),$ $(\Pblock_{3}, \LL{2}),$ $(\Pblock_{1}, \LL{3}),$ $(\Pblock_{2}, \LL{3}),$ $(\Pblock_{3}, \LL{3}) \lle$} \\
$\occsStar{\Ptree_5}=$ & \multicolumn{2}{p{.8\textwidth}}{$\lls (0, b),$ $(3, a),$ $(4, c), (13, b),$ $(16, a),$ $(17, c), (26, b),$ $(29, a),$ $(30, c) \lle$} \\
\midrule
$\Ptree_6$ & \qquad \BinfoRPT{5}{4} \Bstart{}\activity{b} \BinfoDT{3} \activity{a} \BinfoDT{1} \activity{c}\Bend{} & \\
& \begin{tikzpicture}[-,auto]
\TreeThreeEvtOneCyclVar{t3}{at (12,0)}{$5,4$}{b}{$3$}{a}{$1$}{c} 
\end{tikzpicture} &
\renewcommand{\LowestBranch}[4]{\LowestBranchB{#1}{#2}{#3}{#4}}
\renewcommand{\xslant}{1.8}
\renewcommand{\yslant}{-.3}
\begin{tikzpicture}[-,auto]
\LowestRep{1}{0}{0}{5}
\end{tikzpicture} \\ [.5em]
$\mapOids{\Ptree_6}=$ & \multicolumn{2}{p{.8\textwidth}}{$\lls (\Pblock_{1}, \LL{1}),$ $(\Pblock_{2}, \LL{1}),$ $(\Pblock_{3}, \LL{1}),$ $(\Pblock_{1}, \LL{2}),$ $(\Pblock_{2}, \LL{2}),$ $(\Pblock_{3}, \LL{2}),$ $(\Pblock_{1}, \LL{3}),$ $(\Pblock_{2}, \LL{3}),$ $(\Pblock_{3}, \LL{3}),$ $(\Pblock_{1}, \LL{4}),$ $(\Pblock_{2}, \LL{4}),$ $(\Pblock_{3}, \LL{4}),$ $(\Pblock_{1}, \LL{5}),$ $(\Pblock_{2}, \LL{5}),$ $(\Pblock_{3}, \LL{5}) \lle$} \\
$\occsStar{\Ptree_6}=$ & \multicolumn{2}{p{.8\textwidth}}{$\lls (0, b),$ $(3, a),$ $(4, c), (4, b),$ $(7, a),$ $(8, c), (8, b),$ $(11, a),$ $(12, c),$ $(12, b),$ $(15, a),$ $(16, c), (16, b),$ $(19, a),$ $(20, c) \lle$} \\
\bottomrule
\end{tabular}
\caption{Pattern trees $\Ptree_5$--$\Ptree_6$: Pattern and expansion trees, lists of leaf nodes and of perfect occurrences. }
\label{fig:ex_tree1b}
\end{figure}

\begin{figure}[p]
\centering
\begin{tabular}{rcc}
\toprule
$\Ptree_7$ & \multicolumn{2}{l}{\qquad \BinfoRPT{3}{10} \Bstart{}\activity{b} \BinfoDT{3} \BinfoRPT{4}{1} \Bstart{}\activity{a}\Bend{} \BinfoDT{1} \activity{c}\Bend{}} \\ 
& \begin{tikzpicture}[-,auto]
\TreeThreeEvtTwoCyclVar{t3}{at (15,0)}{$3,10$}{b}{$3$}{a}{$1$}{c}{$4,1$}
\end{tikzpicture} &
\renewcommand{\LowestBranch}[4]{\LowestBranchC{#1}{#2}{#3}{#4}}
\renewcommand{\xslant}{1.5}
\renewcommand{\yslant}{-.5}
\begin{tikzpicture}[-,auto]
\LowestRep{1}{0}{0}{3}
\end{tikzpicture} \\ [.5em]
$\mapOids{\Ptree_7}=$ & \multicolumn{2}{p{.8\textwidth}}{$\lls (\Pblock_1, \LL{1}),$ $(\Pblock_{21}, \LL{1, 1}),$ $(\Pblock_{21}, \LL{1, 2}),$ $(\Pblock_{21}, \LL{1, 3}),$ $(\Pblock_{21}, \LL{1, 4}),$ $(\Pblock_{3}, \LL{1}),$ $(\Pblock_1, \LL{2}),$ $(\Pblock_{21}, \LL{2, 1}),$ $(\Pblock_{21}, \LL{2, 2}),$ $(\Pblock_{21}, \LL{2, 3}),$ $(\Pblock_{21}, \LL{2,4}),$ $(\Pblock_{3}, \LL{2}),$ $(\Pblock_1, \LL{3}),$ $(\Pblock_{21}, \LL{3, 1}),$ $(\Pblock_{21}, \LL{3, 2}),$ $(\Pblock_{21}, \LL{3, 3}),$ $(\Pblock_{21}, \LL{3,4}),$ $(\Pblock_{3}, \LL{3}) \lle$} \\
$\occsStar{\Ptree_7}=$ & \multicolumn{2}{p{.8\textwidth}}{$\lls (0, b),$ $(3, a),$ $(4, a),$ $(5, a),$ $(6, a),$ $(4, c), (10, b),$ $(13, a),$  $(14, a),$ $(15, a),$ $(16, a),$ $(14, c), (20, b),$ $(23, a),$ $(24, a),$ $(25, a),$ $(26, a),$ $(24, c) \lle$} \\
\midrule
$\Ptree_8$ & \multicolumn{2}{l}{\qquad \BinfoRPT{2}{33} \Bstart{}\BinfoRPT{3}{10} \Bstart{}\activity{b} \BinfoDT{3} \BinfoRPT{4}{1} \Bstart{}\activity{a}\Bend{} \BinfoDT{5} \activity{c}\Bend{}\Bend{}} \\ [.8em]
& \begin{tikzpicture}[-,auto]
 \TreeThreeEvtThreeCyclVar{t3}{at (18.5,0)}{$3,10$}{b}{$3$}{a}{$5$}{c}{$4,1$}
\end{tikzpicture} &
\renewcommand{\LowestBranch}[4]{\LowestBranchC{#1}{#2}{#3}{#4}}
\renewcommand{\xslant}{1.5}
\renewcommand{\yslant}{-.5}
\renewcommand{\Xslant}{4.5}
\renewcommand{\Yslant}{-.35}
\begin{tikzpicture}[-,auto]
  \path[rep edge] ({(.85-\rootf)*\Xslant},{(.85-\rootf)*\Yslant}) edge ({(2-\rootf)*\Xslant},{(2-\rootf)*\Yslant});
  \node[main node] (R00) at ({(.85-\rootf)*\Xslant},{(.85-\rootf)*\Yslant}) {};
  \foreach \j in {1,...,2}{
  \LowestRep{\j}{(\j-\rootf)*\Xslant}{(\j-\rootf)*\Yslant}{3}
}
\end{tikzpicture} \\ [.5em]
$\mapOids{\Ptree_8}=$ & \multicolumn{2}{p{.8\textwidth}}{$\lls (\Pblock_{11}, \LL{1, 1}),$ $(\Pblock_{121}, \LL{1, 1, 1}),$ $(\Pblock_{121}, \LL{1, 1, 2}),$ $(\Pblock_{121}, \LL{1, 1, 3}),$ $(\Pblock_{121}, \LL{1, 1, 4}),$ $(\Pblock_{13}, \LL{1, 1}),$ $(\Pblock_{11}, \LL{1, 2}),$ $(\Pblock_{121}, \LL{1, 2, 1}),$ $(\Pblock_{121}, \LL{1, 2, 2}),$ $(\Pblock_{121}, \LL{1, 2, 3}),$ $(\Pblock_{121}, \LL{1, 2,4}),$ $(\Pblock_{13}, \LL{1, 2}),$ $(\Pblock_{11}, \LL{1, 3}),$ $(\Pblock_{121}, \LL{1, 3, 1}),$ $(\Pblock_{121}, \LL{1, 3, 2}),$ $(\Pblock_{121}, \LL{1, 3, 3}),$ $(\Pblock_{121}, \LL{1, 3,4}),$ $(\Pblock_{13}, \LL{1, 3}),$ $(\Pblock_{11}, \LL{2, 1}),$ $(\Pblock_{121}, \LL{2, 1, 1}),$ $(\Pblock_{121}, \LL{2, 1, 2}),$ $(\Pblock_{121}, \LL{2, 1, 3}),$ $(\Pblock_{121}, \LL{2, 1, 4}),$ $(\Pblock_{13}, \LL{2, 1}),$ $(\Pblock_{11}, \LL{2, 2}),$ $(\Pblock_{121}, \LL{2, 2, 1}),$ $(\Pblock_{121}, \LL{2, 2, 2}),$ $(\Pblock_{121}, \LL{2, 2, 3}),$ $(\Pblock_{121}, \LL{2, 2,4}),$ $(\Pblock_{13}, \LL{2, 2}),$ $(\Pblock_{11}, \LL{2, 3}),$ $(\Pblock_{121}, \LL{2, 3, 1}),$ $(\Pblock_{121}, \LL{2, 3, 2}),$ $(\Pblock_{121}, \LL{2, 3, 3}),$ $(\Pblock_{121}, \LL{2, 3,4}),$ $(\Pblock_{13}, \LL{2, 3}) \lle$} \\
$\occsStar{\Ptree_8}=$ & \multicolumn{2}{p{.8\textwidth}}{$\lls (0, b),$ $(3, a),$ $(4, a),$ $(5, a),$ $(6, a),$ $(8, c),$ $(10, b),$ $(13, a),$ $(14, a),$ $(15, a),$ $(16, a),$ $(18, c),$ $(20, b),$ $(23, a),$ $(24, a),$ $(25, a),$ $(26, a),$ $(28, c),$ $(33, b),$ $(36, a),$ $(37, a),$ $(38, a),$ $(39, a),$ $(41, c),$ $(43, b),$ $(46, a),$ $(47, a),$ $(48, a),$ $(49, a),$ $(51, c),$ $(53, b),$ $(56, a),$ $(57, a),$ $(58, a),$ $(59, a),$ $(61, c) \lle$} \\
\bottomrule
\end{tabular}
\caption{Pattern trees $\Ptree_7$--$\Ptree_8$: Pattern and expansion trees, lists of leaf nodes and of perfect occurrences. }
\label{fig:ex_tree2}
\end{figure}

\FloatBarrier
\renewcommand{\xsctm}{.47}
\begin{figure}[p]
\centering
\begin{tabular}{@{}c@{}}
\toprule
\timelineQThreeOne{$\patt_{3,1}$} \\ \midrule
\timelinePThree{$(\Ptree_3, 0, \nullC)$} \\ \midrule
\timelinePFour{$(\Ptree_4, 0, \nullC)$} \\
\bottomrule
\end{tabular}
\caption{Patterns $\patt_{3,1}$, $(\Ptree_3, 0, \nullC)$ and $(\Ptree_4, 0, \nullC)$ shown on timelines.}
\label{fig:ex_timelinesP3-P4}
\end{figure}

\begin{figure}[p]
\centering
\begin{tabular}{@{}c@{}}
\toprule
\timelineQSixOne{$\patt_{6,1}$} \\ \midrule
\timelinePFive{$(\Ptree_5, 0, \nullC)$} \\ 
\bottomrule
\end{tabular}
\caption{Pattern $\patt_{6,1}$ and $(\Ptree_5, 0, \nullC)$ shown on timeline.}
\label{fig:ex_timelinesP5}
\end{figure}

\FloatBarrier

\newcommand{\ccwidth}{\hspace{.2cm}}
\newcommand{\cswidth}{\hspace{.05cm}}

\begin{table}[tbp]
\centering
\caption{Code lengths for the example pattern collection $\ccycle_{1}$.}
\label{tab:ex_clC1}
\begin{tabular}{cr@{\ccwidth{}}c@{\cswidth{}}c@{}r@{\ccwidth{}}c@{\cswidth{}}c@{}r@{\ccwidth{}}c@{\cswidth{}}c@{}r}
\toprule
& \multicolumn{9}{l}{$\ccycle_{1}$} & $76.681$ \\ [.8em]
\cmidrule{2-11} % 
& & \multicolumn{2}{c}{$\patt_{1,1}$} & $24.657$ & \multicolumn{2}{c}{$\patt_{1,2}$} & $26.417$ & \multicolumn{2}{c}{$\patt_{1,3}$} & $25.607$ \\
\cmidrule{3-11} % 
$\evtseq$ && (a) & & $4.755$ & (a) & & $4.755$ & (a) & & $4.755$ \\
$\Csc$ && $\LL{1, 0, -1}$ & & $8.000$ & $\LL{0, 3, -1}$ & & $10.000$ & $\LL{1, 1, -1}$ & & $9.000$ \\
$\Clen_0$ && $4$ & $\log(12)=$ & $3.585$ & $4$ & $\log(12)=$ & $3.585$ & $4$ & $\log(12)=$ & $3.585$ \\
$\Cprd_0$ && $2$ & $\log(11)=$ & $3.459$ & $2$ & $\log(10)=$ & $3.322$ & $2$ & $\log(11)=$ & $3.459$ \\
$\Cto$ && $2$ & $\log(29)=$ & $4.858$ & $13$ & $\log(27)=$ & $4.755$ & $26$ & $\log(28)=$ & $4.807$ \\
\bottomrule
\end{tabular}
\end{table}

\begin{table}[tbp]
\centering
\caption{Code lengths for the example pattern collection $\ccycle_{2}$.}
\label{tab:ex_clC2}
\begin{tabular}{cr@{\ccwidth{}}c@{\cswidth{}}c@{}r@{\ccwidth{}}c@{\cswidth{}}c@{}r@{\ccwidth{}}c@{\cswidth{}}c@{}r@{\ccwidth{}}c@{\cswidth{}}c@{}r}
\toprule
& \multicolumn{12}{l}{$\ccycle_{2}$} & $87.437$ \\ [.8em]
\cmidrule{2-14} % 
&& \multicolumn{2}{c}{$\patt_{2,1}$} & $21.969$ & \multicolumn{2}{c}{$\patt_{2,2}$} & $23.969$ & \multicolumn{2}{c}{$\patt_{2,3}$} & $20.749$ & \multicolumn{2}{c}{$\patt_{2,4}$} & $20.749$ \\
\cmidrule{3-14} % 
$\evtseq$ && (a) & & $4.755$ & (a) & & $4.755$ & (a) & & $4.755$ & (a) & & $4.755$ \\
$\Csc$ && $\LL{-2, 0}$ & & $6.000$ & $\LL{-3, 1}$ & & $8.000$ & $\LL{0, -1}$ & & $5.000$ & $\LL{0, -1}$ & & $5.000$ \\
$\Clen_0$ && $3$ & $\log(12)=$ & $3.585$ & $3$ & $\log(12)=$ & $3.585$ & $3$ & $\log(12)=$ & $3.585$ & $3$ & $\log(12)=$ & $3.585$ \\
$\Cprd_0$ && $13$ & $\log(18)=$ & $4.170$ & $13$ & $\log(18)=$ & $4.170$ & $13$ & $\log(17)=$ & $4.087$ & $13$ & $\log(17)=$ & $4.087$ \\
$\Cto$ && $2$ & $\log(11)=$ & $3.459$ & $5$ & $\log(11)=$ & $3.459$ & $7$ & $\log(10)=$ & $3.322$ & $8$ & $\log(10)=$ & $3.322$ \\
\bottomrule
\end{tabular}
\end{table}

\begin{table}[tbp]
\centering
\caption{Code lengths for the example pattern collections $\ccycle_{3}$ and $\ccycle_{4}$.}
\label{tab:ex_clC3-4}
\begin{tabular}{cr@{\ccwidth{}}c@{\cswidth{}}c@{}rr@{\ccwidth{}}r@{\ccwidth{}}c@{\cswidth{}}c@{}r}
\toprule
&\multicolumn{3}{l}{$\ccycle_{3}$} & $59.724$ & & \multicolumn{3}{l}{$\ccycle_{4}$} & $63.920$ \\ [.8em]
\cmidrule{2-5} \cmidrule{7-10} % 
&& \multicolumn{2}{c}{$\patt_{3,1}$} & $59.724$ & & & \multicolumn{2}{c}{$\patt_{4,1}$} & $63.920$ \\
\cmidrule{3-5} \cmidrule{8-10} % 
$\evtseq$ && ((a)) & & $7.925$ & & & ((a)) & & $7.925$ \\
$\Csc$ && $\LL{1, 0, \dots}$ & & $33.000$ &&& $\LL{-2, 0, \dots}$ & & $32.000$ \\
$\Clen_0$ && $3$ & $\log(12)=$ & $3.585$ &&& $4$ & $\log(12)=$ & $3.585$ \\
$\Clen_1$ && $4$ & $\log(12)=$ & $3.585$ &&& $3$ & $\log(12)=$ & $3.585$ \\
$\Cprd_0$ && $13$ & $\log(18)=$ & $4.170$ &&& $2$ & $\log(11)=$ & $3.459$ \\
$\Cto$ && $2$ & $\log(11)=$ & $3.459$ &&& $2$ & $\log(29)=$ & $4.858$ \\
$\optspanRep^{*}$ && $6$ & $\log(8)=$ & $3.000$ &&& $26$ & $\log(28)=$ & $4.807$ \\
$\Cprd_{1}$ && $2$ & $\log(2)=$ & $1.000$ &&& $13$ & $\log(13)=$ & $3.700$ \\
\bottomrule
\end{tabular}
\end{table}

\FloatBarrier

\begin{table}[tbp]
\centering
\caption{Code lengths for the example pattern collection $\ccycle_{5}$.}
\label{tab:ex_clC5}
\begin{tabular}{cr@{\ccwidth{}}c@{\cswidth{}}c@{}r@{\ccwidth{}}c@{\cswidth{}}c@{}r@{\ccwidth{}}c@{\cswidth{}}c@{}r}
\toprule
& \multicolumn{9}{l}{$\ccycle_{5}$} & $65.443$ \\ [.8em]
\cmidrule{2-11} % 
& & \multicolumn{2}{c}{$\patt_{5,1}$} & $21.554$ & \multicolumn{2}{c}{$\patt_{5,2}$} & $20.334$ & \multicolumn{2}{c}{$\patt_{5,3}$} & $23.554$ \\
\cmidrule{3-11} % 
$\evtseq$ && (b) & & $6.340$ & (a) & & $6.340$ & (c) & & $6.340$ \\
$\Csc$ && $\LL{-2, 0}$ & & $6.000$ & $\LL{0, -1}$ & & $5.000$ & $\LL{1, -3}$ & & $8.000$ \\
$\Clen_0$ && $3$ & $\log(3)=$ & $1.585$ & $3$ & $\log(3)=$ & $1.585$ & $3$ & $\log(3)=$ & $1.585$ \\
$\Cprd_0$ && $13$ & $\log(18)=$ & $4.170$ & $13$ & $\log(17)=$ & $4.087$ & $13$ & $\log(18)=$ & $4.170$ \\
$\Cto$ && $2$ & $\log(11)=$ & $3.459$ & $5$ & $\log(10)=$ & $3.322$ & $7$ & $\log(11)=$ & $3.459$ \\
\bottomrule
\end{tabular}
\end{table}

\begin{table}[tbp]
\centering
\caption{Code lengths for the example pattern collection $\ccycle_{6}$.}
\label{tab:ex_clC6}
\begin{tabular}{cr@{\ccwidth{}}c@{\cswidth{}}c@{}r}
\toprule
&\multicolumn{3}{l}{$\ccycle_{6}$} & $53.538$ \\ [.8em]
\cmidrule{2-5}
&&\multicolumn{2}{c}{$\patt_{6,1}$} & $53.538$ \\
\cmidrule{3-5}
$\evtseq$ && (b a c) & & $12.680$ \\
$\Csc$ && $\LL{0, 1, \dots}$ & & $24.000$ \\
$\Clen_0$ && $3$ & $\log(3)=$ & $1.585$ \\
$\Cprd_0$ && $13$ & $\log(18)=$ & $4.170$ \\
$\Cto$ && $2$ & $\log(11)=$ & $3.459$ \\
$d_{12}$ && $3$ & $\log(4)=$ & $2.000$ \\
$d_{13}$ && $1$ & $\log(4)=$ & $2.000$ \\
$\optspanRep^{*}$ && $4$ & $\log(8)=$ & $3.000$ \\
\bottomrule
\end{tabular}
\end{table}

\FloatBarrier

\renewcommand{\PlenI}{r_I}
\renewcommand{\PlenJ}{r_J}
\renewcommand{\PnbOI}{\ell}
\renewcommand{\PnbOJ}{\ell'}

\begin{figure}[p]
\centering
\begin{tabular}{c} 
\begin{tikzpicture}[-,auto, xscale=\xsc, yscale=-\ysc]
\blocksPCombineVOccs{}
\end{tikzpicture} \\ [.2cm]
\begin{tikzpicture}[-,auto, xscale=\xsc, yscale=-\ysc]
\blocksPCombineVPR{}
\end{tikzpicture} \\ [.2cm]
\begin{tikzpicture}[-,auto, xscale=\xsc, yscale=-\ysc]
\blocksPCombineVPC{}
\end{tikzpicture} \\ [.2cm]
\begin{tikzpicture}[-,auto, xscale=\xsc, yscale=-\ysc]
\blocksPCombineV{}
\end{tikzpicture} 
\end{tabular}
\caption{Vertical combination: Combining patterns $\patt_{I,1}, \dots,\patt_{I,\PlenJ}$ into nested pattern $\patt_{N}$. Rounded rectangles represent event occurrences. Each colored rectangle represents a pattern and encloses the occurrence covered by the pattern. Arrows link occurrences to the preceding occurrences relative to which their timestamp is computed.}
\label{fig:ex_combineV}
\end{figure}

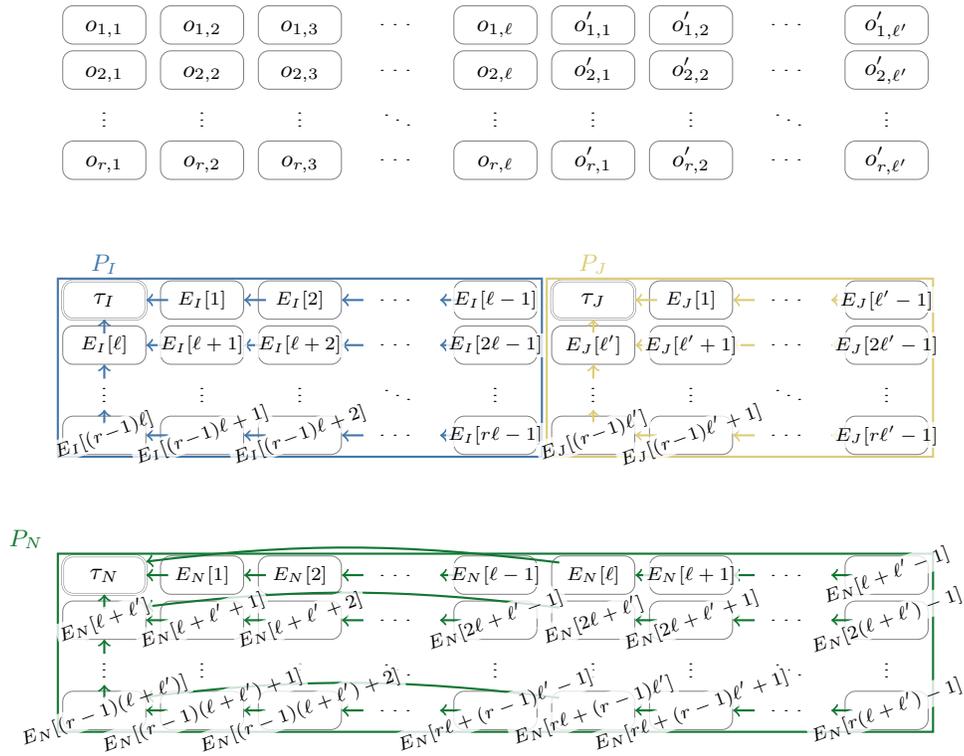
\begin{figure}[p]
\centering
\begin{tabular}{c}
\begin{tikzpicture}[-,auto, xscale=\xsc, yscale=-\ysc]
\blocksPCombineHOccs{}
\end{tikzpicture}  \\ [.2cm]
\begin{tikzpicture}[-,auto, xscale=\xsc, yscale=-\ysc]
\blocksPCombineHRC{}
\end{tikzpicture} \\ [.2cm]
\begin{tikzpicture}[-,auto, xscale=\xsc, yscale=-\ysc]
\blocksPCombineH{}
\end{tikzpicture} 
\end{tabular}
\caption{Horizontal combination: Concatenating patterns $\patt_{I}$ and $\patt_{J}$ into new pattern $\patt_{N}$. Rounded rectangles represent event occurrences. Each colored rectangle represents a pattern and encloses the occurrence covered by the pattern. Arrows link occurrences to the preceding occurrences relative to which their timestamp is computed.}
\label{fig:ex_combineH}
\end{figure}

\FloatBarrier
%%%%% Expes

\begin{figure}[tbp] \centering
\caption{Compression ratios for planted and extracted pattern collections ($\prcCl_H$ and $\prcCl_F$, respectively) on synthetic sequences perturbed only by shift noise.}
\label{fig:synthe_1_cr}
\includegraphics[trim=0 0 0 0,clip,width=.9\textwidth]{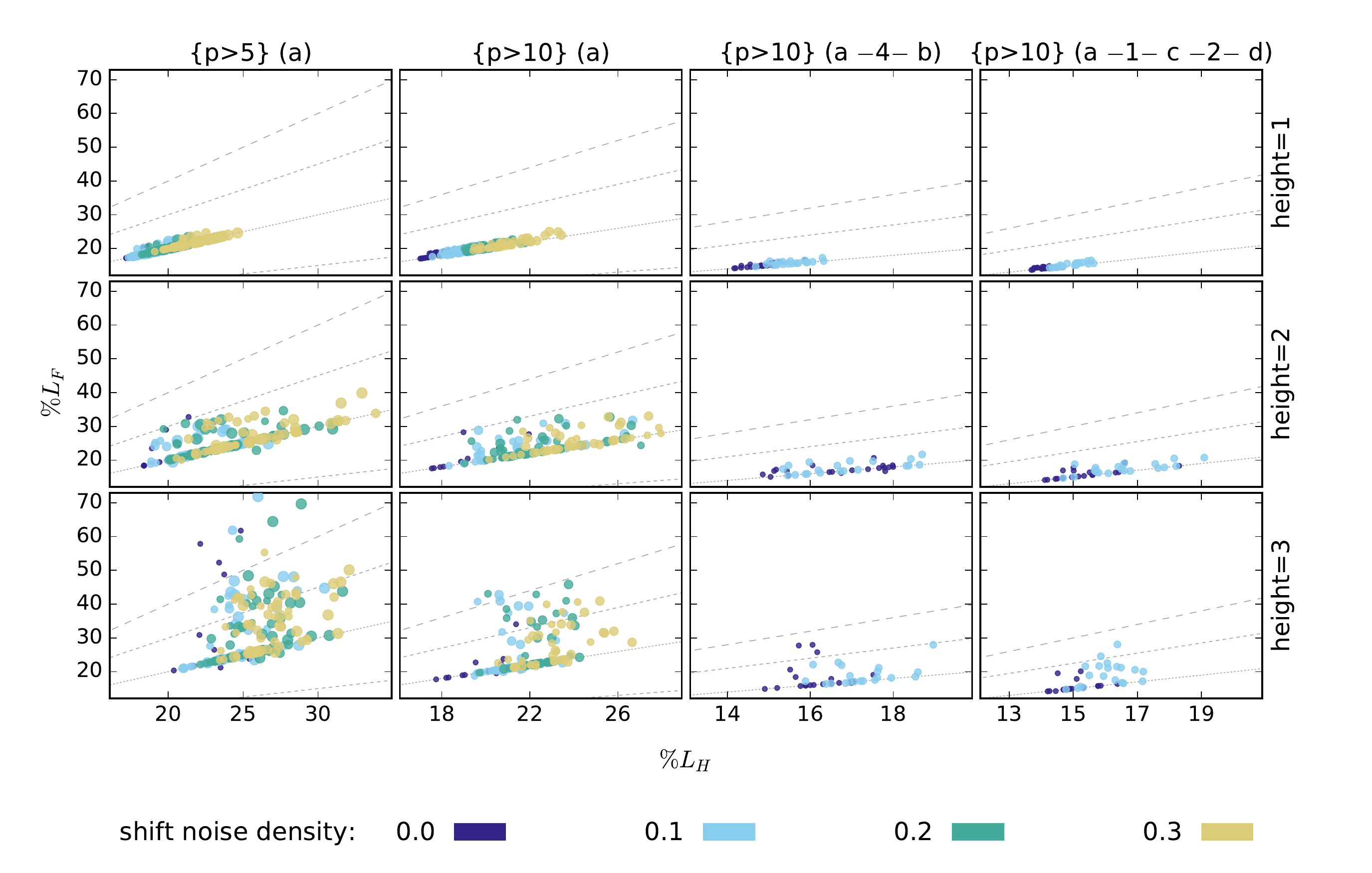}
\end{figure}

\begin{figure}[tbp] \centering
\caption{Compression ratios for planted and extracted pattern collections ($\prcCl_H$ and $\prcCl_F$, respectively) on synthetic sequences perturbed by additive noise $(a, 0.1)$.}
\label{fig:synthe_2_cr}
\includegraphics[trim=0 0 0 0,clip,width=.9\textwidth]{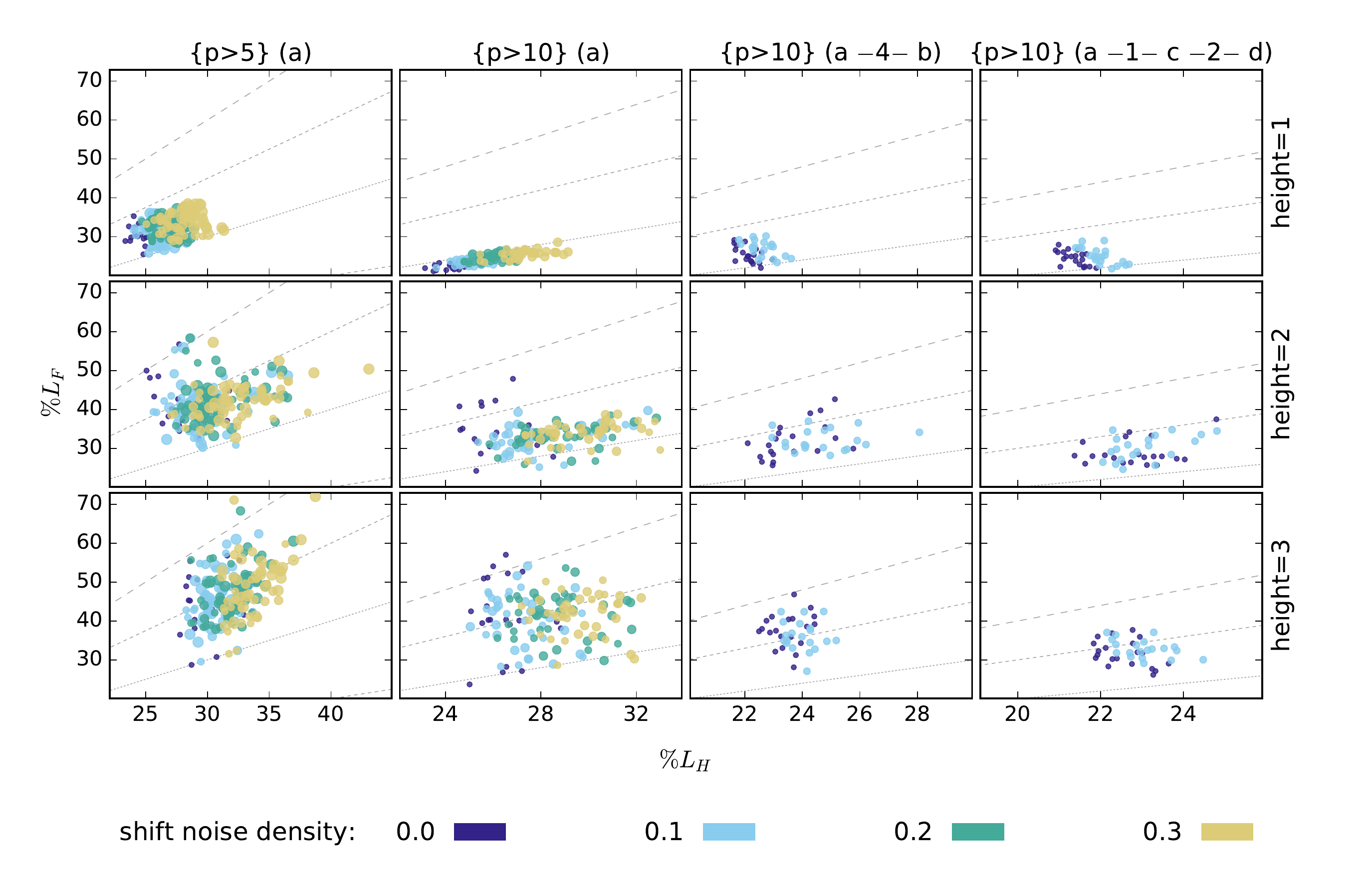}
\end{figure}

\begin{figure}[tbp] \centering
\caption{Compression ratios for planted and extracted pattern collections ($\prcCl_H$ and $\prcCl_F$, respectively) on synthetic sequences perturbed by additive noise $(a, 0.5)$.}
\label{fig:synthe_3_cr}
\includegraphics[trim=0 0 0 0,clip,width=.9\textwidth]{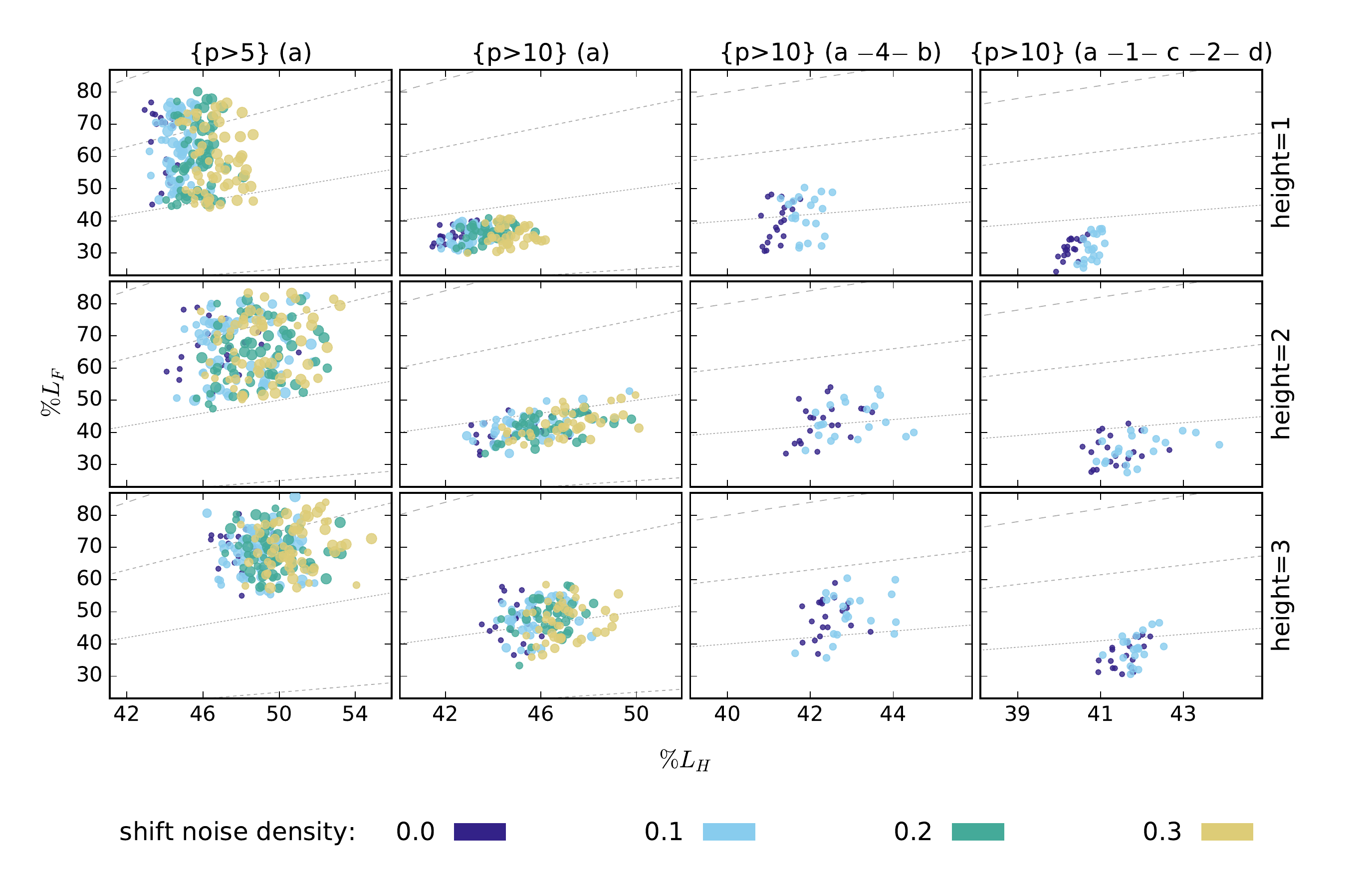}
\end{figure}

\begin{figure}[tbp] \centering
\caption{Compression ratios for planted and extracted pattern collections ($\prcCl_H$ and $\prcCl_F$, respectively) on synthetic sequences containing interleaving.}
\label{fig:synthe_4_cr}
\includegraphics[trim=0 0 0 0,clip,width=.9\textwidth]{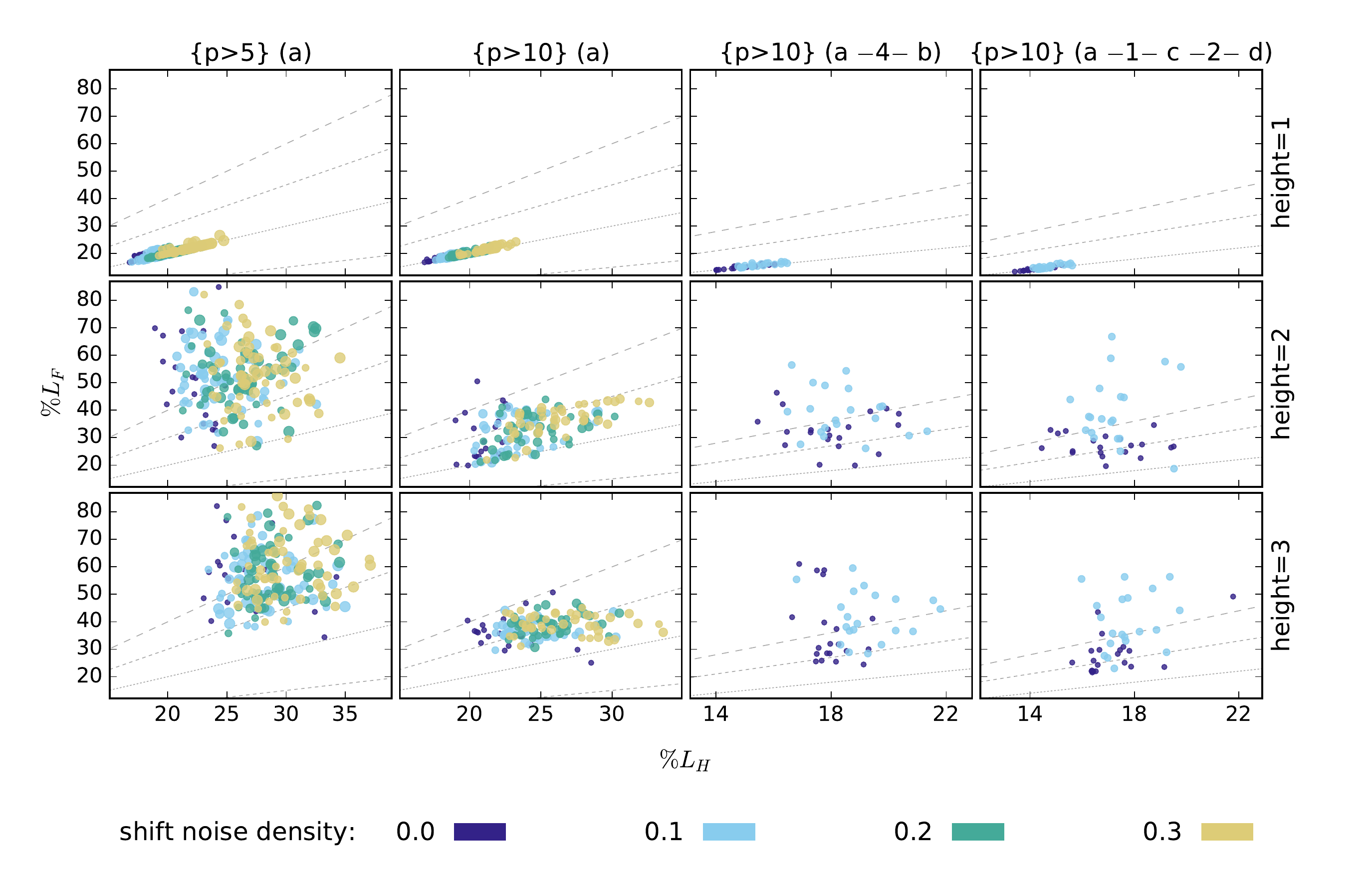}
\end{figure}

%%%%%%%%%%%%%%%%%%%%%%%%%%%%%%%%%%%%%%%%%
\begin{figure}[tbp] \centering
\caption{Differences in compression ratios for planted and extracted pattern collections ($\prcCl_H$ and $\prcCl_F$, respectively) on synthetic sequences perturbed only by shift noise.}
\label{fig:synthe_1}
\includegraphics[trim=30 20 100 20,clip,width=.9\textwidth]{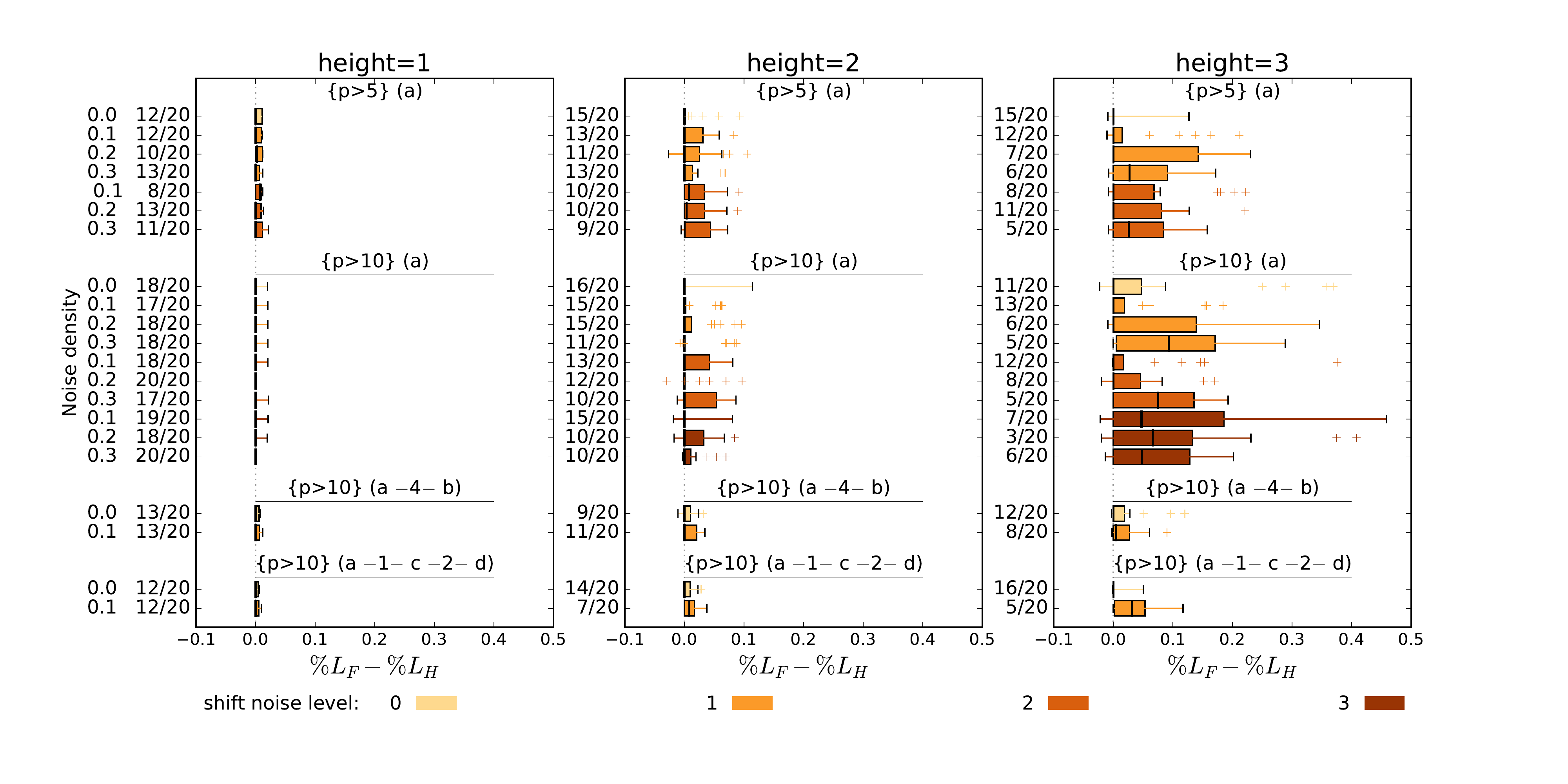}
\end{figure}

\begin{figure}[tbp] \centering
\caption{Differences in compression ratios for planted and extracted pattern collections ($\prcCl_H$ and $\prcCl_F$, respectively) on synthetic sequences perturbed by additive noise $(a, 0.1)$.}
\label{fig:synthe_2}
\includegraphics[trim=30 20 100 20,clip,width=.9\textwidth]{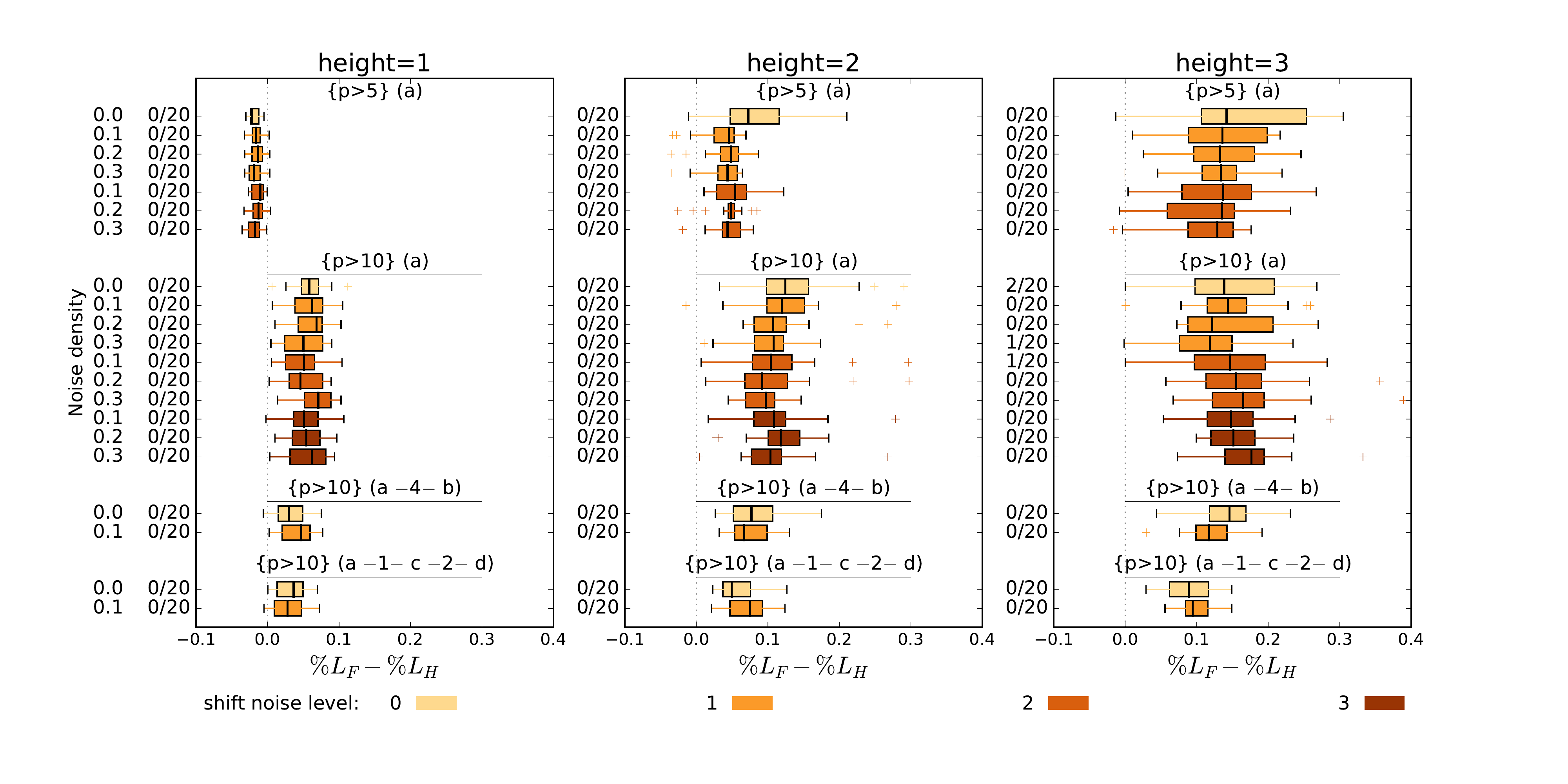}
\end{figure}

\begin{figure}[tbp] \centering
\caption{Differences in compression ratios for planted and extracted pattern collections ($\prcCl_H$ and $\prcCl_F$, respectively) on synthetic sequences perturbed by additive noise $(a, 0.5)$.}
\label{fig:synthe_3}
\includegraphics[trim=30 20 100 20,clip,width=.9\textwidth]{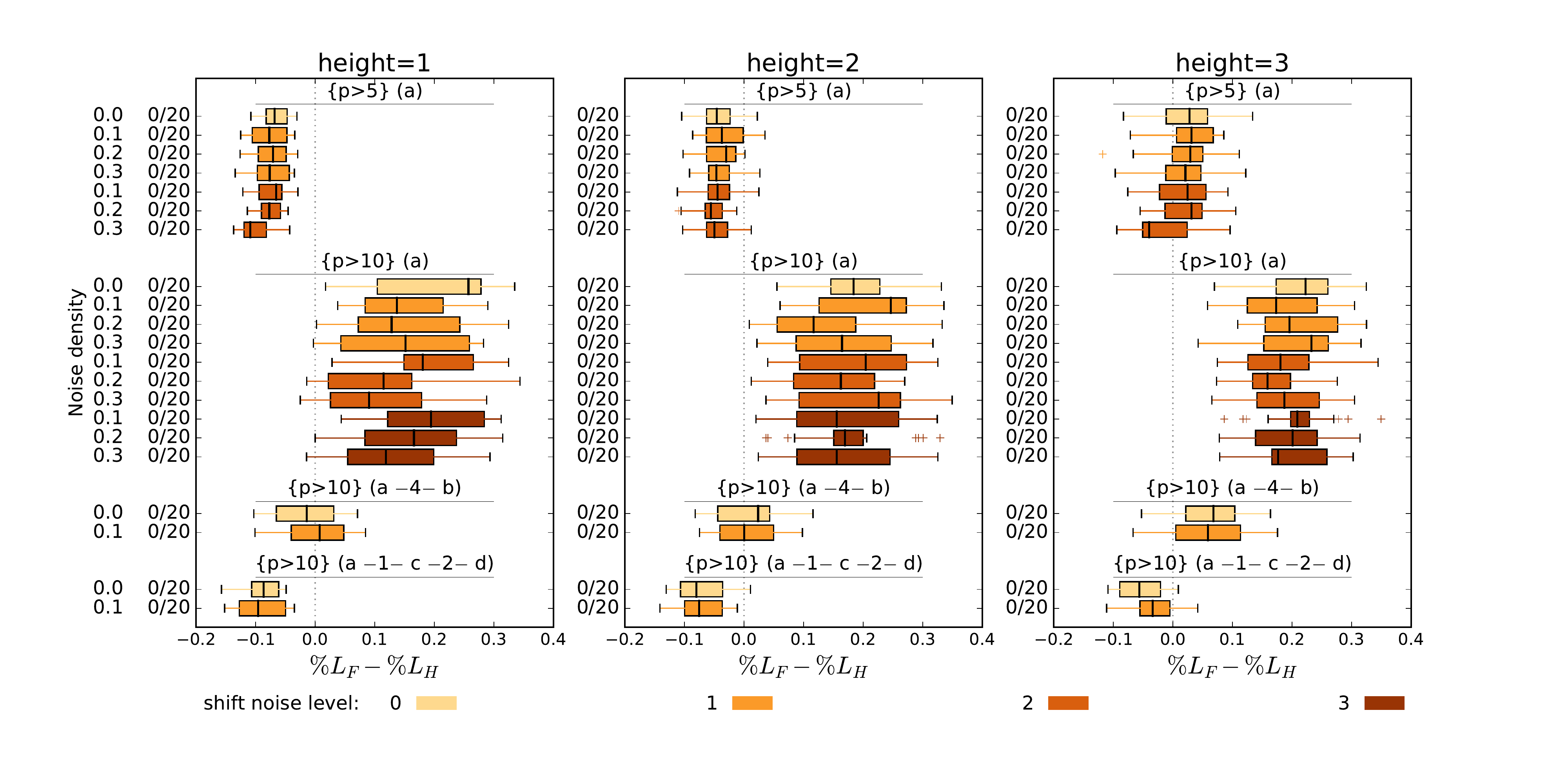}
\end{figure}

\begin{figure}[tbp] \centering
\caption{Differences in compression ratios for planted and extracted pattern collections ($\prcCl_H$ and $\prcCl_F$, respectively) on synthetic sequences containing interleaving.}
\label{fig:synthe_4}
\includegraphics[trim=30 20 100 20,clip,width=.9\textwidth]{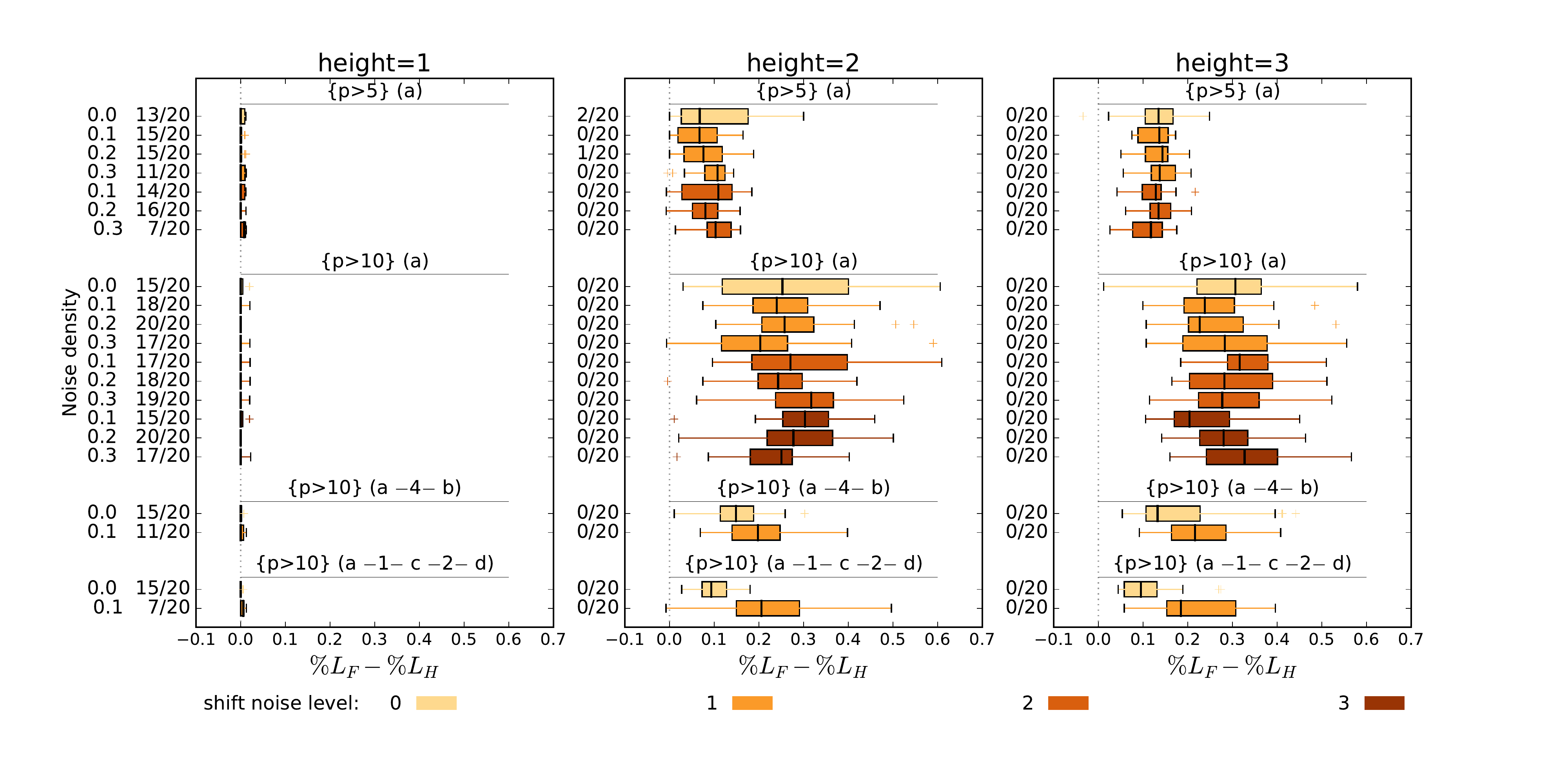}
\end{figure}

%%%%%%%%%%%%%%%%%%%%%%%%%%%%%%%%%%%%%%%%%
\begin{figure}[tbp] \centering
\caption{Compression ratios for planted and extracted pattern collections ($\prcCl_H$ and $\prcCl_F$, respectively) on synthetic sequences with multiple planted patterns.}
\label{fig:synthe_comb_cr}
\includegraphics[trim=0 0 0 0,clip,width=.8\textwidth]{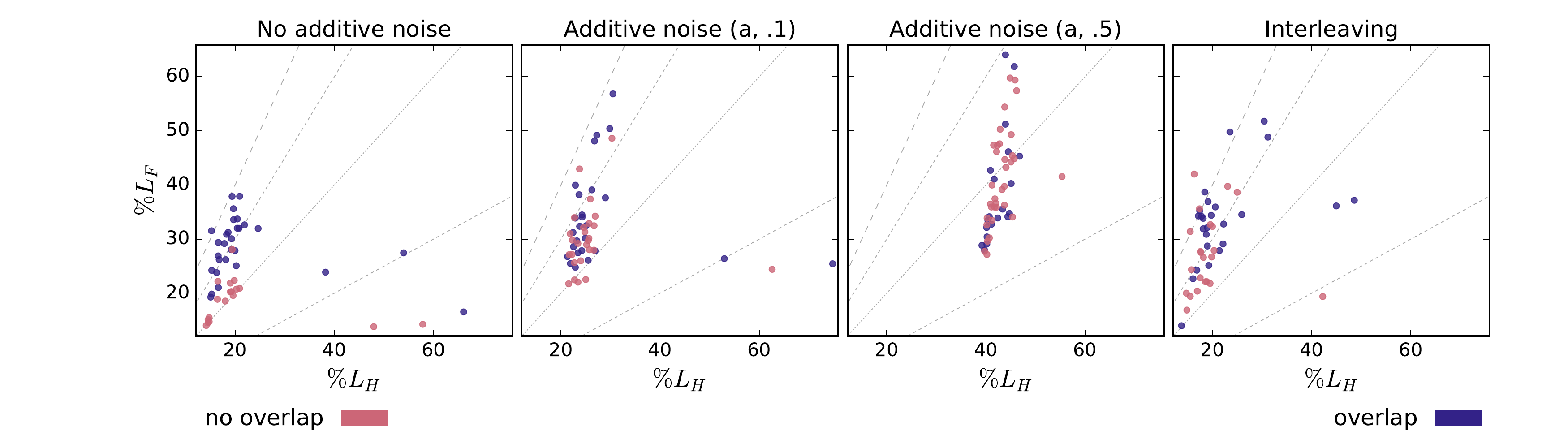}
\end{figure}

\begin{figure}[tbp] \centering
\caption{Differences in compression ratios for planted and extracted pattern collections ($\prcCl_H$ and $\prcCl_F$, respectively) on synthetic sequences with multiple planted patterns.}
\label{fig:synthe_comb}
\includegraphics[trim=0 0 80 20,clip,width=.8\textwidth]{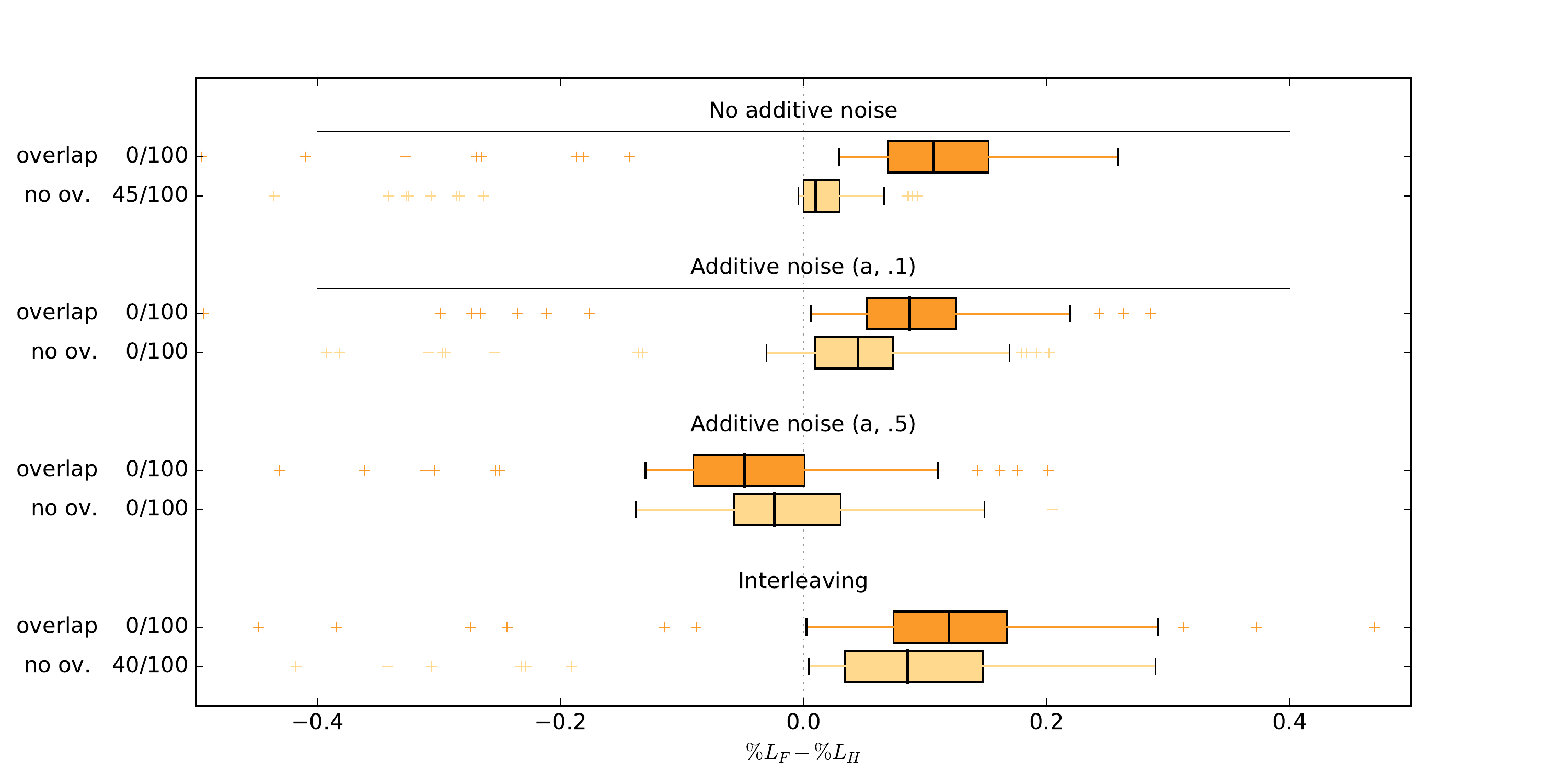}
\end{figure}

\FloatBarrier

\begin{table}
\caption{Statistics for application log trace sequences.}
\label{tab:data-stats-traces}
\centering
% [inline block 0: 16 envs, 62937 chars -> data_tex | \begin{tabular}{@{\hspace{1ex}}l@{\hspace{4ex}}r@{\hspace{2ex}}r@{\hspace{2ex}}r@{\hspace{2.5ex}}r@{\hspace{1.5ex}}r@{\h...]

\end{table}

\FloatBarrier

\begin{figure}
\centering
\caption{Compression ratios for \dstTZapX{}, \dstBugzX{} and \dstSamba{} sequences.}
\label{fig:xps-prcCl-other}
\includegraphics[trim=0 0 40 30,clip,width=.7\textwidth]{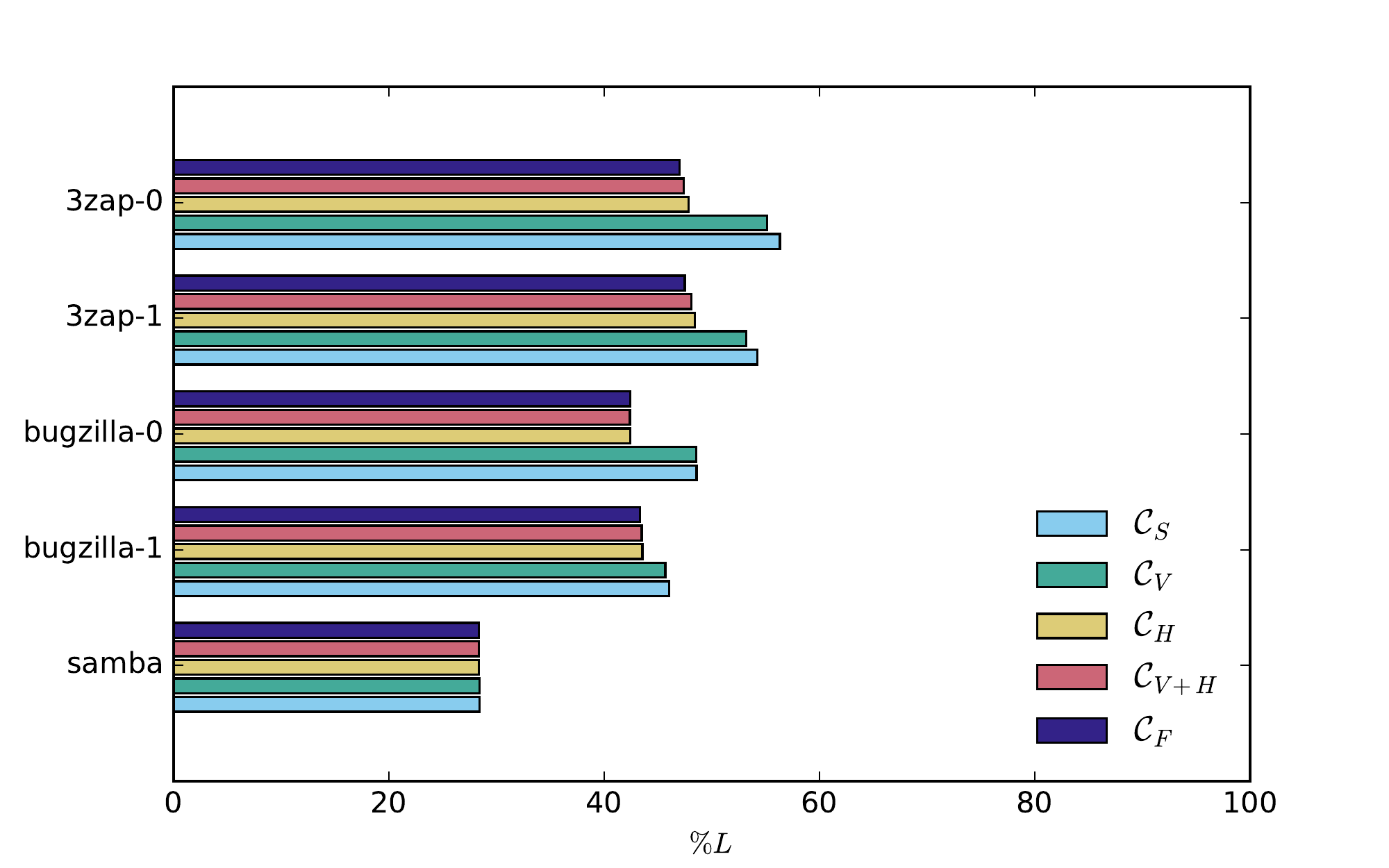}
\end{figure}

\begin{figure}
\centering
\caption{Compression ratios for \dstSacha{} sequences with various time granularities.}
\label{fig:xps-prcCl-sacha}
\includegraphics[trim=20 0 40 0,clip,width=.8\textwidth]{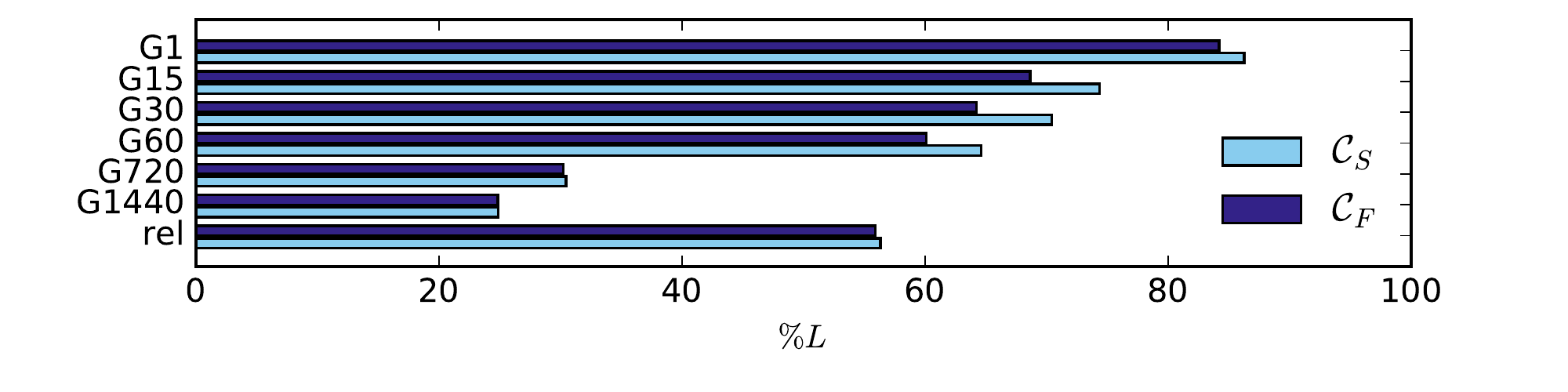}
\end{figure}

\begin{figure}
\centering
\caption{Compression ratios for the sequences from the \dstUbiAR{\iAbs} dataset.}
\label{fig:xps-prcCl-ubi-abs}
\includegraphics[trim=30 0 40 30,clip,width=.8\textwidth]{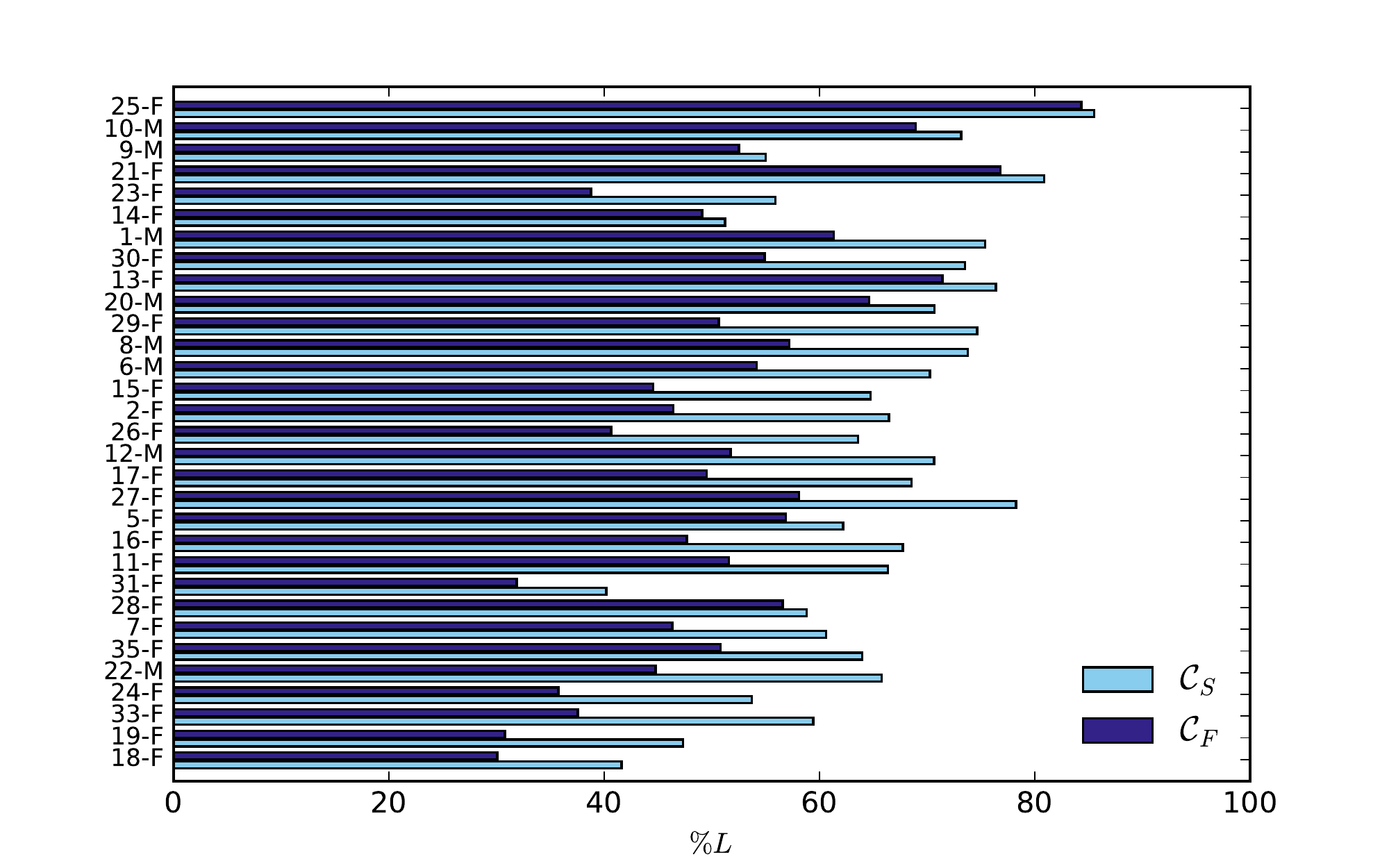}
\end{figure}

\begin{figure}
\centering
\caption{Compression ratios for the sequences from the \dstUbiAR{\iRel} dataset.}
\label{fig:xps-prcCl-ubi-rel}
\includegraphics[trim=30 0 40 30,clip,width=.8\textwidth]{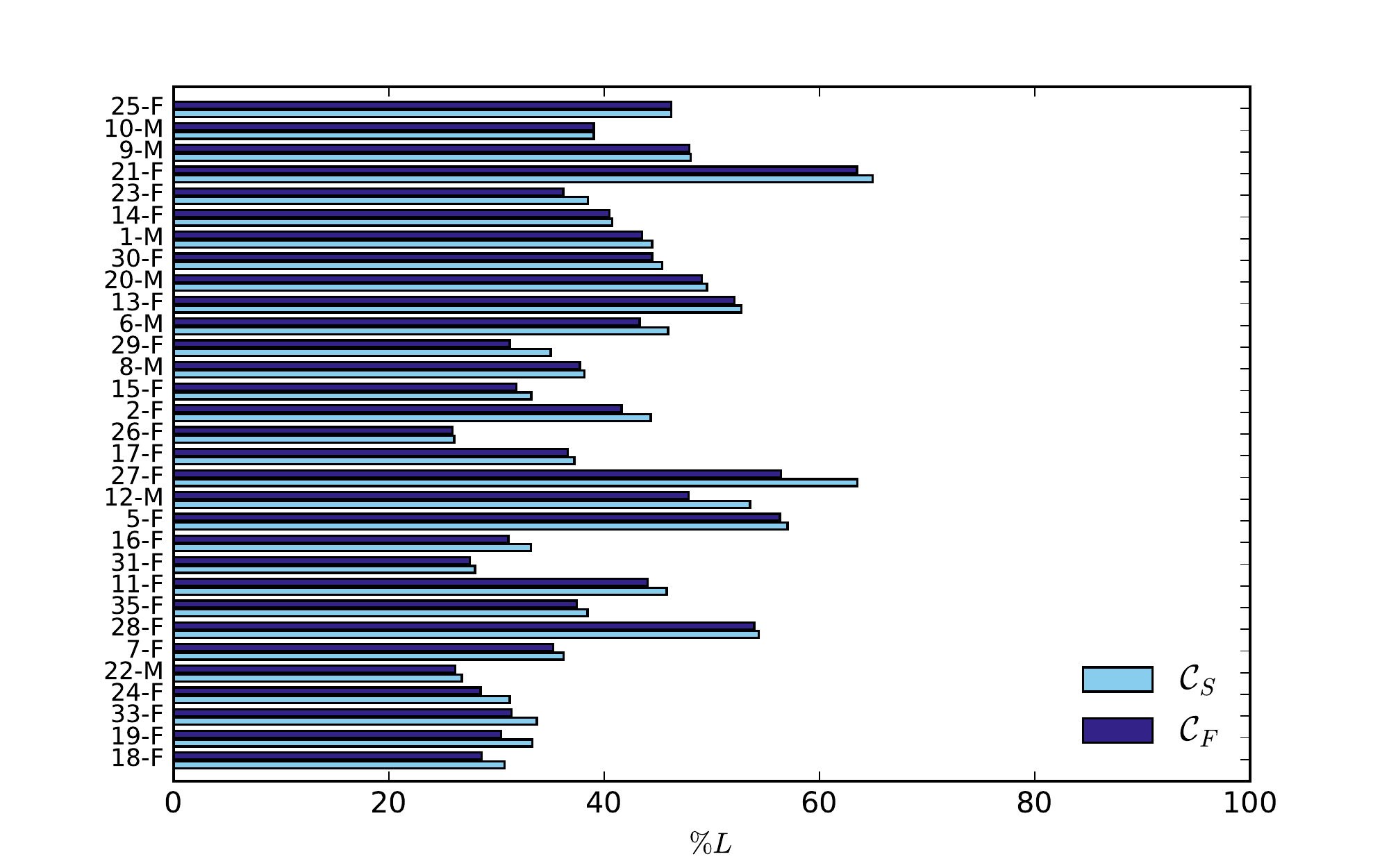}
\end{figure}

\begin{table}
\caption{Example patterns from \dstSacha{} sequences with different time granularities.}
\label{tab:res-ex-sacha}
\centering
{
\begin{tabular}{@{\hspace{1ex}}c@{\hspace{2ex}}l@{\hspace{2ex}}p{.68\textwidth}@{\hspace{3ex}}r@{\hspace{1ex}}r@{\hspace{1ex}}}
\toprule
 & $\Cto$ & \multicolumn{2}{l}{$\Ptree$ \hfill $\sum\abs{\Csc}$} & $\abs{\opoccs}$ \\
\cmidrule{2-5} \\ [-.5em]
\multicolumn{5}{c}{ \dstSachaG{1}} \\ 
\midrule
a) & 2017-09-10 12:09 & \BinfoRPT{7}{\SI{1}{\day}} \Bstart{}\activityStart{Sleep} \BinfoDT{\SI{6}{\hour}\,36} \activityEnd{Sleep} & $28$ & $28$ \\
 & & \hspace*{10ex} \BinfoDT{\SI{0}{\minute}} \activityStart{Childcare} \BinfoDT{\SI{17}{\hour}\,24} \activityEnd{Childcare}\Bend{} & & \\ [.2em]
b) & 2011-12-12 17:07 & \BinfoRPT{3}{\SI{1}{\day}\,\SI{2}{\minute}} \Bstart{}\activityEnd{Work} \BinfoDT{\SI{0}{\minute}} \activityStart{Walk} \BinfoDT{\SI{6}{\minute}} \activityEnd{Walk} & $13$ & $21$ \\
 & & \hspace*{10ex}  \BinfoDT{\SI{0}{\minute}} \activityStart{Subway} \BinfoDT{\SI{17}{\minute}} \activityEnd{Subway} & & \\
 & & \hspace*{10ex} \BinfoDT{\SI{0}{\minute}} \activityStart{Walk} \BinfoDT{\SI{11}{\minute}} \activityEnd{Walk}\Bend{} & & \\ [.2em]
c) & 2012-03-06 07:40 & \BinfoRPT{4}{\SI{1}{\day}} \Bstart{}\activityStart{Subway} \BinfoDT{\SI{0}{\minute}} \activityEnd{Routines} \BinfoDT{\SI{50}{\minute}} \activityStart{Consulting-E}\Bend{} & $7$ & $12$ \\
d) & 2011-11-29 08:51 & \BinfoRPT{3}{\SI{23}{\hour}\,51} \Bstart{}\activityStart{Walk} \BinfoDT{\SI{9}{\minute}} \activityStart{Subway} \BinfoDT{\SI{17}{\minute}} \activityEnd{Subway} & $22$ & $18$ \\
 & & \hspace*{10ex} \BinfoDT{\SI{0}{\minute}} \activityStart{Walk} \BinfoDT{\SI{5}{\minute}}  \activityEnd{Walk} \BinfoDT{\SI{0}{\minute}} \activityStart{Work}\Bend{} & & \\ [.2em]
e) & 2012-05-28 16:10 & \BinfoRPT{3}{\SI{1}{\day}\,\SI{3}{\minute}} \Bstart{}\activityEnd{Consulting-E} \BinfoDT{\SI{0}{\minute}} \activityStart{Bike} \BinfoDT{\SI{15}{\minute}} \activityEnd{Bike} & $14$ & $12$ \\
 & & \hspace*{10ex} \BinfoDT{\SI{0}{\minute}} \activityStart{Consulting}\Bend{} & & \\ [.8em]
\multicolumn{5}{c}{ \dstSachaG{15}} \\ 
\midrule
f) & 2015-01-08 08:45 & \BinfoRPT{14}{\SI{7}{\day}} \Bstart{}\activityStart{Subway} \BinfoDT{\SI{45}{\minute}} \activityEnd{Subway} \BinfoDT{\SI{0}{\minute}} \activityStart{Consulting-E}\Bend{} & $26$ & $42$ \\
g) & 2016-01-18 17:45 & \BinfoRPT{17}{\SI{1}{\day}} \Bstart{}\activityStart{Dinner} \BinfoDT{\SI{30}{\minute}} \activityEnd{Dinner}\Bend{} & $54$ & $34$ \\ 
h) & 2014-12-18 00:15 & \BinfoRPT{76}{\SI{1}{\day}} \Bstart{}\activityStart{Sleep} \BinfoDT{\SI{8}{\hour}\,30} \activityEnd{Sleep}\Bend{} & $517$ & $152$ \\
i) & 2012-03-29 16:45 & \BinfoRPT{7}{\SI{217}{\day}} \Bstart{}\activityEnd{Consulting-E} \BinfoDT{\SI{0}{\minute}} \activityStart{Subway}\Bend{} & $12$ & $14$ \\
 [.8em]
\multicolumn{5}{c}{ \dstSachaG{60}} \\ 
\midrule
j) & 2011-11-27 21:30 & \BinfoRPT{968}{\SI{1}{\day}} \Bstart{}\activityStart{Sleep}\Bend{} & $2157$ & $968$ \\
k) & 2011-11-28 08:30 & \BinfoRPT{4}{\SI{11}{\hour}} \Bstart{}\BinfoRPT{4}{\SI{7}{\day}} \Bstart{}\activityIns{Walk}\Bend{}\Bend{} & $8$ & $16$ \\
l) & 2015-10-24 23:30 & \BinfoRPT{22}{\SI{1}{\day}} \Bstart{}\activityStart{VideoGame-B2} \BinfoDT{\SI{1}{\hour}} \activityEnd{VideoGame-B2}\Bend{} & $137$ & $44$ \\
 [.8em]
\multicolumn{5}{c}{ \dstSachaAR{\iRel}} \\ 
\midrule
m) & $23200$ & \BinfoRPT{3460}{3} \Bstart{}\activity{Childcare}\Bend{} & $5879$ & $3460$ \\
n) & $862$ & \BinfoRPT{237}{12} \Bstart{}\activity{Sleep}\Bend{} & $775$ & $237$ \\ 
o) & $33140$ & \BinfoRPT{3}{155} \Bstart{}\BinfoRPT{4}{1} \Bstart{}\activity{Consulting-E}\Bend{}\Bend{} & $1$ & $12$ \\ 
p) & $7091$ & \BinfoRPT{3}{14207} \Bstart{}\activity{Emacs} \BinfoDT{445} \BinfoRPT{5}{2} \Bstart{}\activity{Coding}\Bend{}\Bend{} & $29$ & $18$ \\
\bottomrule
\end{tabular}}
\end{table}

\begin{table}
\caption{Example patterns from the \dstTZap{0} sequences.}
\label{tab:res-ex-zap}
\centering
{
\begin{tabular}{@{\hspace{1ex}}c@{\hspace{2ex}}l@{\hspace{2ex}}p{.76\textwidth}@{\hspace{3ex}}r@{\hspace{1ex}}r@{\hspace{1ex}}}
\toprule
 & $\Cto$ & \multicolumn{2}{l}{$\Ptree$ \hfill $\sum\abs{\Csc}$} & $\abs{\opoccs}$ \\
\midrule
a) & $36060$ & \BinfoRPT{110}{2} \Bstart{}\activity{1561:X} \BinfoDT{1} \activity{1561:E}\Bend{} & $80$ & $220$ \\
b) & $33415$ & \BinfoRPT{20}{8} \Bstart{}\activity{1561:I} \BinfoDT{1} \activity{1561:i} \BinfoDT{1} \activity{1561:Ix} \BinfoDT{1} \activity{1561:C} \BinfoDT{1} \activity{53:C}\Bend{} & $63$ & $100$ \\
c) & $11680$ & \BinfoRPT{3}{5116} \Bstart{}\BinfoRPT{8}{1} \Bstart{}\activity{2429:U} \BinfoDT{3} \activity{2429:u}\Bend{}\Bend{} & $40$ & $48$ \\
d) & $7908$ & \BinfoRPT{3}{17729} \Bstart{}\BinfoRPT{5}{2} \Bstart{}\activity{2400:E} \BinfoDT{1} \activity{2400:X}\Bend{} & $74$ & $60$ \\
 & & \hspace*{10ex} \BinfoDT{91} \BinfoRPT{5}{2} \Bstart{}\activity{2400:E} \BinfoDT{1} \activity{2400:X}\Bend{}\Bend{} & & \\ [.2em]
e) & $84347$ & \BinfoRPT{3}{10563} \Bstart{}\activity{2399:U} \BinfoDT{1} \BinfoRPT{4}{2} \Bstart{}\activity{2399:C} \BinfoDT{1} \activity{2427:C}\Bend{}\Bend{} & $3$ & $27$ \\
f) & $85889$ & \BinfoRPT{7}{248} \Bstart{}\BinfoRPT{4}{2} \Bstart{}\activity{2400:X}\Bend{} \BinfoDT{7} \activity{2400:C}\Bend{} & $48$ & $35$ \\
g) & $104793$ & \BinfoRPT{3}{17790} \Bstart{}\BinfoRPT{5}{6} \Bstart{}\activity{2445:C}\Bend{} \BinfoDT{3} \BinfoRPT{4}{8} \Bstart{}\activity{2447:C}\Bend{}\Bend{} & $15$ & $27$ \\
h) & $126101$ & \BinfoRPT{5}{253} \Bstart{}\activity{2426:C} \BinfoDT{3} \activity{18:C} \BinfoDT{3} \activity{2445:U} \BinfoDT{1} \activity{2445:u} \BinfoDT{1} \activity{2445:C} & $15$ & $35$ \\
 & & \hspace*{10ex} \BinfoDT{3} \activity{2447:C} \BinfoDT{21} \activity{2447:C}\Bend{} & & \\ [.2em]
i) & $151772$ & \BinfoRPT{4}{221} \Bstart{}\activity{6:C} \BinfoDT{2} \BinfoRPT{4}{2} \Bstart{}\activity{2395:X}\Bend{}\Bend{} & $15$ & $20$ \\
j) & $12071$ & \BinfoRPT{3}{2235} \Bstart{}\BinfoRPT{4}{2} \Bstart{}\activity{2395:X} \BinfoDT{1} \activity{2395:E}\Bend{} \BinfoDT{7} \BinfoRPT{4}{6} \Bstart{}\activity{2395:C}\Bend{}\Bend{} & $76$ & $36$ \\
\bottomrule
\end{tabular}}
\end{table}

\FloatBarrier
\newpage
\section*{List of Symbols}
\label{los}
\vfill
\begin{tabular}{@{\hspace{2ex}}l@{\hspace{4ex}}p{.72\textwidth}@{\hspace{4ex} p.}c@{\hspace{2ex}}}
%%% --- PROBLEM
$\ABC$ & event alphabet & \pageref{sym:ABC} \\
$\alpha$ & an event &\pageref{sym:alpha} \\
$\seq$ & an event sequence & \pageref{sym:seq} \\
$\seq[\alpha]$ & projection of sequence $\seq$ on event $\alpha$ & \pageref{sym:seqalpha} \\
$\len{\seq}$ & \emph{length} of sequence $\seq$, number of timestamp--event pairs in $\seq$ & \pageref{sym:lenS} \\
$\tSstart(\seq)$ & smallest timestamp in $\seq$ & \pageref{sym:tSstart} \\
$\tSend(\seq)$ & largest timestamp in $\seq$ & \pageref{sym:tSend} \\
$\tspan{\seq}$ & \emph{duration} of sequence $\seq$, time spanned by $\seq$ & \pageref{sym:durationS} \\
$\cycle$ & an event cycle & \pageref{sym:cycle} \\
$\Cev$ & \emph{cycle event} & \pageref{sym:Cev} \\
$\Clen$ & \emph{cycle length} & \pageref{sym:Clen} \\
$\Cprd$ & \emph{cycle period} & \pageref{sym:Cprd} \\
$\Cto$ & \emph{cycle starting point} & \pageref{sym:Cto} \\
$\Csc$ & \emph{cycle shift corrections} & \pageref{sym:Csc} \\
$\tspan{C}$ & \emph{duration} of cycle $C$, time spanned by $C$ & \pageref{sym:Cspan} \\
$\sumel{\Csc}$ & sum of the shift corrections in $\Csc$ & \pageref{sym:sumel} \\
$\cov{\cycle}$ & \emph{cover} of cycle $\cycle$, set of timestamp--event pairs reconstructed from $\cycle$ & \pageref{sym:cov} \\
$\ccycle$ & a collection of cycles & \pageref{sym:ccycle} \\
$\residual{\ccycle, \seq}$ & set of \emph{residuals}, timestamp--event pairs of sequence $\seq$ not covered by any cycle in the collection of cycles $\ccycle$ & \pageref{sym:residual} \\
$\cl$ & \emph{cost}, code length & \pageref{sym:costCL} \\
%%% --- PROBLEM_COMPLEX
$\patt$ & a periodic pattern & \pageref{sym:patt} \\
$\Ptree$ & pattern tree & \pageref{sym:ptree} \\
$\Pblock_{\Bid}$ & a bock in a periodic pattern & \pageref{sym:Pblock} \\
$\child{\Pblock_{\Bid}}$ & ordered list of children of block $\Pblock_{\Bid}$ & \pageref{sym:child} \\
$d_{\Bid{}i}$ & \emph{inter-block distance}, time separating occurences of blocks $\Pblock_{\Bid{}(i-1)}$ and $\Pblock_{\Bid{}i}$ & \pageref{sym:interBd} \\
$\Lchild{X}$ & left-most leaf descendant of block/node $X$ & \pageref{sym:Lchild} \\
$\shift{S}{t_s}$ & function that shifts sequence $S$ forward by $t_s$ & \pageref{sym:shift} \\
$\occsStar{\patt}$ & list of timestamp--event pairs reconstructed from the pattern tree of $\patt$ prior to correction, a.k.a.\ perfect occurences & \pageref{sym:occStar} \\
$\occs{\patt}$ & list of timestamp--event pairs reconstructed from the pattern tree of $\patt$ after correction, a.k.a.\ corrected occurences & \pageref{sym:occs} \\
$\cume{o}$ & the cumulated time correction to be applied to timestamp--event pair $o$ & \pageref{sym:cume} \\
$\evtseq$ & the string representing the event sequence of a block/node & \pageref{sym:evtseq} \\
$\tspanStar{\Pblock_\Bid{}}$ & time spanned by the entire cycle of block $\Pblock_\Bid{}$ & \pageref{sym:tspanStar} \\
$\tspanRepStar{\Pblock_\Bid{}}$ & time spanned by a single repetition of block $\Pblock_\Bid{}$ & \pageref{sym:tspanRepStar} \\
$\maxtspanStar{\Pblock_{\Bid}}$ & maximum time span of the entire cycle of block $\Pblock_\Bid{}$ & \pageref{sym:maxtspanStar} \\
$\maxtspanRepStar{\Pblock_{\Bid}}$ & maximum time span of a repetition of block $\Pblock_\Bid{}$ & \pageref{sym:maxtspanRepStar} \\
$\inDP$ & collection of all the periods (except of the top block) and inter-block distances in the pattern tree, as well as $\optspanRep^{*}_{\max}$ of top block, if necessary & \pageref{sym:inDP} \\
%%% --- EXPERIMENTS
$\prcCl{}$ & \emph{compression ratio}, ratio of the sequence code length using the considered collection of patterns vs.\ using an empty collection of patterns & \pageref{sym:prcCl} \\
$\resSet$ & set of residuals & \pageref{sym:resSet} \\
$\ratioClR$ & fraction of the code length spent on residuals & \pageref{sym:ratioClR} \\
\end{tabular}
\vfill

\end{document}